\newcommand{\setCollector}[2]{\{\ #1\ \mid\ #2 \ \}}
\newcommand{\setCompact}[2]{\{ #1 \mid #2  \}}
\newcommand{\setof}[1]{\{#1\}}
\newcommand{\code}[1]{\texttt{#1}}
\newcommand*{\rom}[1]{\uppercase\expandafter{\romannumeral #1\relax}}

\newcommand{\factorize}[2]{#1/_{#2}}

\newcommand{\toolname}[1]{Merlin}

\newcommand{\bnf}{\mid}
\newcommand{\avar}{\texttt{x}}

\newcommand{\infixconcat}[2]{#1.#2}
\newcommand{\shortReplace}{\texttt{r}}
\newcommand{\shortReplaceOf}[3]{\shortReplace(#1, #2, #3)}
\newcommand{\emptystring}{\epsilon}
\newcommand{\constConf}[1]{\text{"\textvisiblespace\code{Conference}"}}
\newcommand{\constCity}[1]{\text{"\textvisiblespace\code{City}"}}
\newcommand{\reducedstrut}{\vrule width 0pt height .9\ht\strutbox depth .9\dp\strutbox\relax}
\newcommand{\myColorbox}[2]{\begingroup
  \setlength{\fboxsep}{0pt}\colorbox{#1}{\reducedstrut
  $\displaystyle#2$\/}\endgroup}

\newcommand{\oepruneColor}[1]{\myColorbox{red!70!white!80}{#1}}
\newcommand{\oepruneReuseColor}[1]{\myColorbox{blue!85!white!40}{#1}}
\newcommand{\metricpruneColor}[1]{\myColorbox{green!60!white!80}{#1}}
\newcommand{\semisuccessColor}[1]{\myColorbox{cyan!60!white!80}{#1}}
\newcommand{\successColor}[1]{\myColorbox{yellow!60}{#1}}

\newcommand{\sketchhole}{\mathord{?}}

\newcommand{\aval}{a}
\newcommand{\avalp}{b}
\newcommand{\bigconst}{c}

\newcommand{\semanticsOf}[1]{\llbracket #1 \rrbracket}
\newcommand{\semanticsOfAppliedTo}[2]{\semanticsOf{#1}(#2)}
\newcommand{\agrammar}{\mathcal{G}}
\newcommand{\inputexamples}{\mathit{I}}
\newcommand{\inputexamplesp}{\mathit{J}}

\newcommand{\outputexamples}{O}

\newcommand{\ioexamples}{(\inputexamples, \outputexamples)}

\newcommand{\languageOf}[1]{L(#1)}
\newcommand{\termLanguageOf}[1]{\widetilde{L}(#1)}
\newcommand{\completeLanguageOf}[1]{\mathcal{L}(#1)}
\newcommand{\completeTermLanguageOf}[1]{\widetilde{\mathcal{L}}(#1)}
\newcommand{\sygus}[1]{SyGuS}

\newcommand{\dataDomain}{\textit{D}}
\newcommand{\groundTruthFuncs}{\mathit{GT}}
\newcommand{\groundTruthFunc}{\mathit{gt}}
\newcommand{\groundTruthFuncOf}[1]{\groundTruthFunc(#1)}
\newcommand{\groundTruthFuncpOf}[1]{\groundTruthFuncp(#1)}
\newcommand{\allFuncs}{\mathcal{F}}
\newcommand{\nonterms}{\mathcal{N}}
\newcommand{\terminals}{\mathcal{T}}
\newcommand{\anonterm}{\mathcal{A}}

\newcommand{\startSymbol}{\mathcal{S}}
\newcommand{\productions}{\mathcal{P}}

\newcommand{\anoutput}{\mathit{o}}
\newcommand{\aninput}{\mathit{i}}
\newcommand{\kleeneOf}[1]{#1^{*}}

\newcommand{\aspecification}{\Phi}

\newcommand{\anInterpretation}{\mathcal{I}}
\newcommand{\anInterpretationOf}[1]{\anInterpretation(#1)}
\newcommand{\anInterpretationOfOf}[2]{\anInterpretationOf{#1}(#2)}

\newcommand{\variables}{\mathit{Vars}}

\newcommand{\nat}{\mathbb{N}}

\newcommand{\aFunc}{f}
\newcommand{\aFuncOf}[1]{\aFunc(#1)}
\newcommand{\aFuncp}{g}
\newcommand{\aFuncpOf}[1]{\aFuncp(#1)}
\newcommand{\aFuncpp}{h}
\newcommand{\aFuncppOf}[1]{\aFuncpp(#1)}
\newcommand{\anOperator}{\texttt{op}}
\newcommand{\anOperatorOf}[1]{\anOperator(#1)}
\newcommand{\functionDistanceMeasure}{\textit{m}}
\newcommand{\functionDistanceMeasureOf}[2]{\functionDistanceMeasure(#1, #2)}
\newcommand{\functionDistanceMeasurep}{\functionDistanceMeasure'}
\newcommand{\functionDistanceMeasurepOf}[2]{\functionDistanceMeasurep(#1, #2)}
\newcommand{\reals}{\mathbb{R}}
\newcommand{\realsgtz}{\mathbb{R}_{\geq0}}
\newcommand{\groundTruthFuncp}{\groundTruthFunc'}

\newcommand{\aprog}{\texttt{p}}
\newcommand{\aprogp}{\texttt{q}}
\newcommand{\aprogpp}{\aprogp'}
\newcommand{\enumerationOrder}[1]{\preceq_{#1}}
\newcommand{\enumerationOrderNotEq}[1]{\prec_{#1}}

\newcommand{\aprogSet}{\mathit{P}}
\newcommand{\aprogSetp}{\mathit{Q}}

\newcommand{\successor}{\mathit{succ}}

\newcommand{\refineFunc}{\texttt{Refine}}
\newcommand{\refineFuncOf}[3]{\refineFunc(#1, #2, #3)}
\newcommand{\findProgFunc}{\texttt{Search}}
\newcommand{\findProgFuncOf}[4]{\findProgFunc(#1, #2, #3, #4)}
\newcommand{\factorizeFunc}{\texttt{Factorize}}
\newcommand{\factorizeFuncOf}[3]{\factorizeFunc(#1, #2, #3)}
\newcommand{\pruneFunc}{\texttt{Prune}}
\newcommand{\pruneFuncOf}[3]{\pruneFunc(#1, #2, #3)}

\newcommand{\successorIn}[1]{\successor_{#1}}
\newcommand{\successorInOf}[2]{\successorIn{#1}(#2)}
\newcommand{\updateEnumOrder}{\texttt{Learn}}
\newcommand{\updateEnumOrderOf}[3]{\updateEnumOrder(#1, #2, #3)}

\newcommand{\size}{\textit{sz}}
\newcommand{\sizeOf}[1]{\size(#1)}
\newcommand{\distanceMeasure}{\widetilde{\functionDistanceMeasure}}
\newcommand{\distanceMeasureOf}[2]{\distanceMeasure(#1, #2)}
\newcommand{\distanceMeasureForExamples}[1]{\functionDistanceMeasure_{#1}
}\newcommand{\distanceMeasureForExamplesOf}[3]{\distanceMeasureForExamples{#1}(#2,#3)}

\newcommand{\asketch}{\code{s}}
\newcommand{\asketchOf}[1]{\asketch(#1)}

\newcommand{\length}{\mathit{len}}
\newcommand{\lengthOf}[1]{\length(#1)}

\newcommand{\concatOp}{\texttt{concat}}
\newcommand{\concatOpOf}[2]{\concatOp(#1, #2)}
\newcommand{\substrOp}{\texttt{substr}}
\newcommand{\substrOpOf}[3]{\substrOp(#1, #2, #3)}
\newcommand{\replaceOp}{\texttt{replace}}
\newcommand{\replaceOpOf}[3]{\replaceOp(#1, #2, #3)}

\newcommand{\iteop}{\texttt{ite}}
\newcommand{\xorop}{\texttt{xor}}
\newcommand{\andop}{\texttt{and}}
\newcommand{\orop}{\texttt{or}}
\newcommand{\mulop}{\texttt{mul}}
\newcommand{\addop}{\texttt{add}}
\newcommand{\negop}{\texttt{neg}}
\newcommand{\notop}{\texttt{not}}

\newcommand{\asizeThreshold}{\mathit{s}}

\newcommand{\cardinalityOf}[1]{|#1|}

\newcommand{\levenshteinDistance}{\textit{lvst}}
\newcommand{\levenshteinDistanceOf}[2]{\levenshteinDistance(#1, #2)}

\newcommand{\leadingZeros}{\textit{nlz}}
\newcommand{\leadingZerosOf}[1]{\leadingZeros(#1)}
\newcommand{\hammingDistance}{\textit{HDist}}
\newcommand{\hammingDistanceOf}[2]{\hammingDistance(#1, #2)}
\newcommand{\absOf}[1]{\lvert #1 \rvert}

\newcommand{\abitvec}{\mathit{a}}
\newcommand{\abitvecp}{\mathit{b}}
\newcommand{\astring}{\mathit{str}}
\newcommand{\astringp}{\astring'}
\newcommand{\substr}{\textit{substr}}
\newcommand{\substrOf}[2]{\substr(#1,#2)}

\newcommand{\mycheckmark}{\ding{51}}
\newcommand{\myxmark}{\ding{55}}

\documentclass[acmsmall,screen,nonacm,authorversion]{acmart} 
\usepackage{amsmath}
\usepackage{galois}
\usepackage{stmaryrd}
\usepackage{xcolor}
\usepackage{algorithm}
\usepackage{algpseudocode}
\usepackage{cleveref}
\usepackage{xcolor}
\usepackage{pifont}
\usepackage{enumitem}
\usepackage{subcaption}
\usepackage{wrapfig}
\usepackage{tikz}
\usetikzlibrary{arrows.meta, positioning, calc}

\AtBeginDocument{}

\setcopyright{acmlicensed}
\copyrightyear{2018}
\acmYear{2018}
\acmDOI{XXXXXXX.XXXXXXX}
\acmConference[Conference acronym 'XX]{Make sure to enter the correct
  conference title from your rights confirmation email}{June 03--05,
  2018}{Woodstock, NY}
\acmISBN{978-1-4503-XXXX-X/2018/06}

\newcommand{\radius}{\mathit{r}}

\newcommand{\ballof}[2]{\mathcal{B}_{#1}(#2)}
\newcommand{\ballfullof}[3]{\ballof{#1, #2}{#3}}
\newcommand{\cballof}[2]{\mathcal{B}_{#1}[#2]}

\newcommand{\approxmetric}{\mathit{m}^{\sharp}}
\newcommand{\approxmetricof}[2]{\approxmetric(#1, #2)}
\newcommand{\approxmetricpar}[1]{{\mathit{m}_{#1}}^{\sharp}}
\newcommand{\approxmetricparof}[3]{\approxmetricpar{#1}(#2, #3)}

\newcommand{\inducedequivof}[1]{\equiv_{#1}}

\newcommand{\ametric}{\functionDistanceMeasure}
\newcommand{\ametricof}[2]{\functionDistanceMeasureOf{#1}{#2}}

\newcommand{\ametricpar}[1]{\ametric_{#1}}
\newcommand{\ametricparof}[3]{\ametricpar{#1}(#2, #3)}

\newcommand{\ametrictilde}{\tilde \ametric}
\newcommand{\ametrictildeof}[2]{\ametrictilde(#1, #2)}

\newcommand{\concatmetric}{\ametric_{\concatOp}}
\newcommand{\concatmetricof}[2]{\concatmetric(#1, #2)}
\newcommand{\substrmetric}{\ametric_{\substrOp}}

\newcommand{\andmetric}{\ametric_{\andop}}
\newcommand{\andmetricof}[2]{\andmetric(#1, #2)}

\newcommand{\mulmetric}{\ametric_{\mulop}}
\newcommand{\mulmetricof}[2]{\mulmetric(#1, #2)}

\newcommand{\hdmetric}{\ametric_{\mathit{hd}}}
\newcommand{\hdmetricof}[2]{\hdmetric(#1, #2)}

\newcommand{\classof}[1]{[#1]}
\newcommand{\fullclassof}[2]{\classof{#2}_{#1}}

\newcommand{\aset}{\mathit{S}}
\newcommand{\asetx}{\mathit{X}}
\newcommand{\asety}{\mathit{Y}}
\newcommand{\agt}{\mathit{G}}
\newcommand{\anelema}{\mathit{a}}
\newcommand{\anelemb}{\mathit{b}}
\newcommand{\anelemc}{\mathit{c}}

\newcommand{\anelemy}{\mathit{y}}
\newcommand{\anelemg}{\mathit{g}}
\newcommand{\anelemgt}{\mathit{gt}}

\newcommand{\dcof}[1]{#1\mathop{\!\downarrow}}
\newcommand{\subprogrel}{\sqsubseteq}

\newcommand{\buof}[1]{\mathsf{BU}(#1)}
\newcommand{\enumof}[2]{\mathsf{Enum}_{#1}(#2)}

\newcommand{\rset}{\mathit{R}}
\newcommand{\rx}{\code{x}}
\newcommand{\ry}{\code{y}}
\newcommand{\ropx}{\anOperatorOf{\code{x}}}
\newcommand{\ropy}{\anOperatorOf{\code{y}}}
\newcommand{\aclass}{C}
\newcommand{\aclassp}{D}

\newcommand{\factorquasimetric}{\mathop{\mathit{q}}}
\newcommand{\factorquasimetricof}[2]{\factorquasimetric(#1, #2)}
\newcommand{\inducedquasimetric}{\mathop{\mathit{q}_\mathit{m}}}
\newcommand{\inducedquasimetricof}[2]{\inducedquasimetric(#1, #2)}
\newcommand{\inducedorimetric}{\mathit{m}_{\equiv, \factorquasimetric}}
\newcommand{\inducedorimetricof}[2]{\inducedorimetric(#1, #2)}
\crefname{Line}{line}{lines}

\begin{document}

\title{Oriented Metrics for Bottom-Up Enumerative Synthesis}

\author{Roland Meyer}
\email{roland.meyer@tu-braunschweig.de}
\orcid{0000-0001-8495-671X}
\author{Jakob Tepe}
\email{j.tepe@tu-braunschweig.de}
\orcid{0009-0002-8177-4675}
\affiliation{\institution{Technische Universität Braunschweig}
  \city{Braunschweig}
  \country{Germany}
}

\renewcommand{\shortauthors}{Meyer and Tepe}

\begin{abstract}
In syntax-guided synthesis, one of the challenges is to reduce the enormous size of the search space. 
We observe that most search spaces are not just flat sets of programs, but can be endowed with a structure that we call an oriented metric. 
Oriented metrics measure the distance between programs, like ordinary metrics do, but are designed for settings in which operations have an orientation. 
Our focus is on the string and the bitvector domains, where operations like concatenation and bitwise conjunction transform an input into an output in a way that is not symmetric.
We develop several new oriented metrics for these domains. 

Oriented metrics are designed for search space reduction, and we present four techniques: 
(i)~pruning the search space to a ball around the ground truth,
(ii)~factorizing the search space by an equivalence that is induced by the oriented metric, 
(iii)~abstracting the oriented metric (and hence the equivalence) and refining it, 
and 
(iv)~improving the enumeration order by learning from abstract information.
We acknowledge that these techniques are inspired by developments in the literature.
By understanding their roots in oriented metrics, we can substantially increase their applicability and efficiency.
We have integrated these techniques into a new synthesis algorithm and implemented the algorithm in a new solver. 
Notably, our solver is generic in the oriented metric over which it computes. 
We conducted experiments in the string and the bitvector domains, and consistently improve the performance over the state-of-the-art by more than an order of magnitude.
\end{abstract}
% \begin{CCSXML}
%<ccs2012>
%   <concept>
%       <concept_id>10011007.10011074.10011092.10011782</concept_id>
%       <concept_desc>Software and its engineering~Automatic programming</concept_desc>
%       <concept_significance>500</concept_significance>
%       </concept>
% </ccs2012>
%\end{CCSXML}
%
%\ccsdesc[500]{Software and its engineering~Automatic programming}
%
%\keywords{Syntax-Guided Synthesis, Bottom-Up Enumeration, Metrics}

\maketitle

\section{Introduction}\label{sec:intro}
The goal of inductive program synthesis is to synthesize a program from input-output examples that are given as the specification. 
A prime example for inductive synthesis is the FlashFill~\cite{flashfill,flashmeta,flashfillpp} feature in Microsoft Excel enabling millions of end-users to automate data manipulation tasks.
Other applications include superoptimization ~\cite{lens},
program deobfuscation~\cite{syntia,qsynth},
and
synthesizing database queries~\cite{nosdaq,sql_from_examples}.
In this context, the Syntax-Guided Synthesis (\sygus{}) paradigm~\cite{sygus} received considerable attention.
Given a grammar $\agrammar$ and a specification $\aspecification$,
the goal is to find a program $\aprog \in \languageOf{\agrammar}$ that satisfies $\aspecification$. 
Numerous tools~\cite{cvc4,sygus,eusolver,euphony,probe,duet,simba,dryadsynth,synthphonia,nay,nope} have been developed to tackle \sygus{} problems.

Most successful \sygus{} solvers implement a bottom-up enumeration, which constructs larger programs from smaller ones until a solution has been found.
While conceptually similar, the algorithms differ drastically when it comes to two parameters:  
the search space of programs that are considered as possible solutions, and the enumeration order in which the search space is explored.
Most works aim to improve the enumeration order so as to find a solution quickly.  
Learning~\cite{probe,euphony} tries to understand which subprograms are likely to play a role in the solution, and therefore should be enumerated early on.
Deduction~\cite{eusolver,synthphonia,simba,duet,dryadsynth} tries to guide the bottom-up enumeration by information about the target values that has been computed top-down from program sketches.
Strategies to reduce the enormous search space have received less attention. 
The standard technique is to factorize the search space along observational equivalence~\cite{sygus,escher,transit}: 
if two programs have the same outputs on the given inputs, it suffices to keep one of them. 
The comparison has been weakened to an abstraction of the output values, and a refinement loop has been introduced to recover from imprecision and remain complete~\cite{blaze}. 

A drastically new approach to reduce the search space starts from the following consideration~\cite{symetric}. 
Observational equivalence is a discrete judgment: programs may or may not be equivalent. 
Metrics generalize this to a continuous notion of distance~\cite{metricsbook}: programs may be closer to each other or further apart.
Having a metric on the search space suggests a reduction strategy that we refer to as pruning: only consider programs that are close enough to the ground truth. 
These programs form a ball around the ground truth whose radius is the threshold on the distance.
Pruning deliberately gives up completeness and trades it for performance.
The balance between completeness and performance, however, is under the control of the user: it is the radius of the ball. 
Pruning can be combined with the aforementioned factorization techniques.

While metric search spaces and pruning are attractive conceptually, their applicability has been limited so far.
The problem is that metrics require symmetry, $\ametricof{\aninput}{\anoutput}=\ametricof{\anoutput}{\aninput}$. 
The operations on most data domains, however, are oriented.
Consider strings and concatenation.
If the string~$\aninput$ is a substring of~$\anoutput$, then $\aninput$ may help us produce $\anoutput$ by concatenation.
If $\aninput$ is a superstring, then there is no chance to produce $\anoutput$ by concatenation.
The situation is similar with bitvectors and the operations of bitwise conjunction, disjunction, and multiplication. 
In short, the symmetric metrics cannot measure in a meaningful way the impact of operations that are oriented. 

Quasimetrics~\cite{metricsbook} have been proposed as a generalization of metrics that does not require symmetry. 
With quasimetrics, we can assign a meaningful distance to strings that are manipulated by concatenation: 
if $\aninput$ is a substring of $\anoutput$, we take as distance the difference in length $|\anoutput|-|\aninput|$; if $\aninput$ is a superstring, we take infinity. 
Similar quasimetrics can be defined for the operations in the bitvector domain. 
Unfortunately, quasimetrics do not work with example-based specifications.
When two programs have the same outputs on the given inputs, 
their distance should be zero under the quasimetric (lifted to the space of input-output functions).
The definition of quasimetrics, however, requires equality for objects with distance zero.
This will not hold: the fact that programs agree on a number of inputs does not mean they agree on all inputs.

\paragraph{Contribution}
We define a new notion of \emph{oriented metrics (orimetrics)} that targets data domains whose operations are not symmetric.
Orimetrics only require reflexivity in the form of $\ametricof{\anelema}{\anelema}=0$. 
We do not require that $\ametricof{\anelema}{\anelemb}=0$ implies $\anelema=\anelemb$ as in quasimetrics.
Orimetrics only require symmetry at zero, $\ametricof{\anelema}{\anelemb}=0$ implies $\ametricof{\anelemb}{\anelema}=0$.
We do not require $\ametricof{\anelema}{\anelemb}=\ametricof{\anelemb}{\anelema}$ as in metrics and pseudometrics.
We still have the triangle inequality, $\ametricof{\anelema}{\anelemb}\leq \ametricof{\anelema}{\anelemc}+\ametricof{\anelemc}{\anelemb}$. 

We present new orimetrics for the \emph{string}, the \emph{bitvector}, and the \emph{function domain}.
We give a principled way to derive these orimetrics that should carry over to other domains as well. 
The idea is to let the orimetric measure the complexity of the inverse semantics of the operator~\cite{flashmeta,duet}. 

Orimetrics are designed for pruning and factorization in bottom-up enumerative synthesis. 
For \emph{pruning}, we still limit the search to a ball $\ballof{\radius}{\anelemgt}=\setCompact{\aprog}{\ametricof{\aprog}{\anelemgt}< \radius}$, as pioneered in~\cite{symetric}.
The difference, however, is that the direction in which we measure the distance matters. 
For \emph{factorization}, an important insight is that \emph{every orimetric induces an equivalence relation} where $\anelema\inducedequivof{\ametric}\anelemb$ if $\ametricof{\anelema}{\anelemb}=0$. 
This generalizes all factorization strategies discussed above.
Depending on the orimetric, a distance of zero may mean the programs produce the same outputs on the inputs from all examples (observational equivalence), produce the same abstract outputs on the inputs from all examples (the equivalence used by Blaze~\cite{blaze}), or produce similar outputs on some inputs. 

Factorization makes it attractive to work with an orimetric that is rough in that it equates many programs.  
At the same time, the orimetric should faithfully represent the distance to the ground truth. 
To reconcile these desiderata, we work with \emph{approximate orimetrics} $\approxmetric$ and introduce a refinement scheme.
Approximate here means that $\approxmetricof{\aprog}{\groundTruthFunc}=0$ may hold although $\aprog\neq\groundTruthFunc$. 
The reflexivity requirement for orimetrics plays a surprisingly important role for the refinement.
We can refine $\approxmetric$ to any~$\approxmetricpar{\aprog}$ with $\approxmetricparof{\aprog}{\aprog}{\groundTruthFunc}\neq 0$, and reflexivity will guarantee that the program will not be considered a candidate solution again.  
In short, we obtain a stronger factorization at the expense of potentially finding spurious programs and having to refine.

What turned out surprisingly challenging is to harmonize the factorization with the bottom-up enumeration.
To see this, consider the set $\setof{\rx, \ry, \ropy, \ropx}$ on which we have the equivalence $\rx\equiv\ry$ and $\ropx\equiv \ropy$. 
Assume the bottom-up enumeration constructs the programs in the order given by the set, and the factorization maintains the first element in each equivalence class as a representative.
Then we obtain $\setof{\rx, \ropy}$, which is not bottom-up enumerable. 
We give natural and easy to satisfy conditions under which the factorized set remains bottom-up enumerable. 

We implemented our approach in a tool called \toolname{}\footnote{\url{https://github.com/J4K0B/Merlin}} and evaluated its performance against state-of-the-art SyGuS tools and domain-specific solvers. 
In the bitvector domain, \toolname{} is 27 times faster than DryadSynth~\cite{dryadsynth}, the current best SyGuS solver on the bitvector domain.
On Blaze's~\cite{blaze} string benchmarks, it is 75 times faster than Blaze.
Overall, \toolname{} is 42 times faster than a baseline implementation without our proposed techniques.
We give a careful evaluation of the impact of pruning and refinement.

In total, we make three contributions:
\begin{enumerate}
    \item We define orimetrics for bottom-up enumerative synthesis.
    Orimetrics allow us to prune, factorize, and refine the search space (\Cref{sec:overview,sec:preliminaries,sec:framework}).
    \item We define orimetrics for the string, the bitvector, and the function domain. 
    Furthermore, we present a principled way of coming up with orimetrics (\Cref{Section:Instantiation}).
    \item We implemented our approach in a tool called \toolname{} and compared it to the state-of-the-art in the string and the bitvector domain. We win by over an order of magnitude (\Cref{sec:evaluation}).
\end{enumerate}

 \section{Overview}\label{sec:overview}
A SyGuS problem takes as input a specification in the form of a function $\groundTruthFunc$ and a grammar for programs $\agrammar$. 
The task is to find a program $\aprog\in\completeLanguageOf{\agrammar}$ that implements the function, $\semanticsOf{\aprog}=\groundTruthFunc$. 
We call~$\groundTruthFunc$ the ground truth and $\aprog$ a solution to the synthesis task. 
For simplicity, we assume $\groundTruthFunc$ is given as a finite set of input-output examples, but remark that our techniques carry over to more elaborate settings.  
There are various strategies of how to solve a SyGuS problem. 
The most successful solvers implement a form of bottom-up enumeration~\cite{blaze,sygus,eusolver,duet,simba,dryadsynth,synthphonia}, where they try to find a solution by composing subprograms that have already been constructed. 
While every new solver contributes a new technique that makes it faster than the state-of-the-art, there are two parameters that play a role in all solvers. 
The \emph{search space~$\aprogSet$} contains the programs that are considered relevant to solve the synthesis task (they may form solutions or occur as subprograms in solutions). 
The \emph{enumeration order~$\enumerationOrder{}$} defines the order  in which the search space should be explored. 

Our first insight is that all search spaces of practical interest are not unstructured sets, but can be endowed with an
\emph{oriented metric}
(orimetric) $\functionDistanceMeasure$ that gives information about the distance between programs. 
The purpose is to deal with data domains whose operations have an orientation.
Consider a grammar that supports $\concatOp$, the concatenation of strings.
If $\aninput$ is a substring of $\anoutput$, it is easy to find a string $\aninput'$ so that $\concatOpOf{\aninput}{\aninput'} = \anoutput$.
This means $\functionDistanceMeasureOf{\aninput}{\anoutput}$ should be small.
With $\anoutput$ being a superstring of $\aninput$, however, it is impossible to find a string $\anoutput'$ so that $\concatOpOf{\anoutput}{\anoutput'} = \aninput$.
The distance $\functionDistanceMeasureOf{\anoutput}{\aninput}$ should be infinity.
These considerations prompted us to drop the symmetry requirement in metrics. 
The resulting object is an orimetric.

We present four techniques that capitalize on the information given by the orimetric to improve the efficiency of bottom-up enumerative synthesis. 
A remarkable aspect is that our techniques are generic: they only refer to the orimetric and the enumeration order, but do not make assumptions on how these are defined.
This makes it possible to use the four techniques as enhancements in virtually any bottom-up enumerative solver. 

Our first technique is called {\bf pruning}. 
Pruning limits the search to a ball around the ground truth, meaning it tries to build a solution solely from the programs contained in this ball.
Technically, the ball is the set of programs whose distance to the ground truth is smaller than a threshold $\radius$: 
\begin{align*}
\ballfullof{(\aprogSet, \functionDistanceMeasure)}{\radius}{\groundTruthFunc}\quad =\quad \setCollector{\aprog\in\aprogSet}{\functionDistanceMeasureOf{\semanticsOf{\aprog}}{\groundTruthFunc}< \radius}\ .
\end{align*}
We refer to $\radius$ as the radius of the ball, and just write $\ballof{\radius}{\groundTruthFunc}$ when the orimetric (search) space is understood. 
The orimetric is required to have the mathematical properties described in the introduction (which will be made formal in \Cref{sec:framework}). 
What is surprisingly important in the context of SyGuS is reflexivity of the orimetric:
\begin{align}
\semanticsOf{\aprog} = \groundTruthFunc \quad\Rightarrow\quad 
\functionDistanceMeasureOf{\semanticsOf{\aprog}}{\groundTruthFunc}= 0\ .\tag{reflexivity}\label{Implication:Sound}
\end{align}
When read in contraposition, reflexivity says that only programs at distance zero to the ground truth can solve the synthesis task. 
This, however, does not mean we can just work with a ball of radius almost zero.  
The purpose of the ball is to constrain the subprograms that can be used to build candidate solutions. 
Reflexivity then applies to the candidate programs, but not to the subprograms.

The second insight is that the use of an orimetric not only allows us to limit the search to a ball, it also suggests identifying and removing duplicate elements from this ball. 
We call this {\bf factorization}.   
Indeed, from the perspective of the orimetric, two programs are equivalent whenever they have a distance of zero.
The orimetric thus induces the equivalence
\begin{align*}
\aprog_1\inducedequivof{\functionDistanceMeasure} \aprog_2\quad\text{if}
\quad \functionDistanceMeasureOf{\semanticsOf{\aprog_1}}{\semanticsOf{\aprog_2}}=0\ .
\end{align*}
We remove duplicates by factorizing the ball along this equivalence, that is, searching $\factorize{\ballof{\radius}{\groundTruthFunc}}{\inducedequivof{\functionDistanceMeasure}}$.
In our implementation, we remove duplicates by considering representatives of equivalence classes.  
Concretely, we use as representatives the minimal elements wrt.~the enumeration order $\enumerationOrder{}$. 
The reader will observe the similarity between factorization and observational equivalence~\cite{transit,escher}.
There, programs are considered equivalent when they return the same outputs on the inputs from all examples, in which case one program will be discarded.
This significantly reduces the search space while leaving the search complete.
We emphasize that orimetrics are only able to capture factorization because they do not require objects to be equal when they have a distance of zero.

As discussed in the introduction, it would be attractive to have a coarse equivalence that equates many programs.  
Our third technique is to work with \emph{approximate orimetrics} $\approxmetric$.  
We say that an orimetric is \emph{precise}, if the converse of reflexivity holds:
$\approxmetricof{\semanticsOf{\aprog}}{\groundTruthFunc}=0$ implies $\semanticsOf{\aprog}=\groundTruthFunc$.
Otherwise, the orimetric is called \emph{approximate}. 
In approximate orimetrics, we may have $\approxmetricof{\semanticsOf{\aprog}}{\groundTruthFunc}=0$ although $\semanticsOf{\aprog}\neq\groundTruthFunc$. 
We thus have to explicitly check whether a program solves the synthesis task. 
If this is not the case --- we call the program spurious --- we have a {\bf refinement} function that constructs from $\approxmetric$, $\aprog$, and $\groundTruthFunc$ a new orimetric $\approxmetricpar{\aprog}$.
The idea is inspired by and can be combined with the abstraction-refinement approach from \cite{blaze}. 
That work uses predicate abstraction on the output to equate programs.

Our fourth technique modifies the enumeration order on-the-fly by {\bf learning} from spurious programs.
When we find a spurious program $\aprog$, we know that $\aprog$ was promising from the perspective of the previous orimetric.  
While it may not be a solution, $\aprog$ is likely to contain valuable subprograms.
We therefore update the enumeration order to list these subprograms early on.
To be precise, we only list them if they still reside in the ball that is formed with a precise orimetric $\functionDistanceMeasure$. 
It is useful to have a precise orimetric at this point to eliminate as many programs as possible. 
The approach is inspired by \cite{probe}, where operators are preferred that occur frequently in solutions to single examples.

Another approach to modify the enumeration order are deductive methods \cite{duet,simba,dryadsynth,synthphonia, eusolver}. 
We discuss them next, to make clear that they are orthogonal to the idea of using orimetrics and that the approaches can be combined. 
One deductive method is case-splitting~\cite{eusolver}. 
When we enumerated a program for each of the given examples, we can build a decision tree to synthesize an if-then-else program that solves all examples.
Another deductive method
\cite{duet,simba,dryadsynth,synthphonia}
uses the inverse semantics of an operator.
Consider the program sketch $\concatOpOf{\sketchhole}{\sketchhole}$ and the output "\code{POPL}".
Assume we already enumerated a program $\aprog_1$ that outputs "\code{PO}". 
Given "\code{POPL}" and "\code{PO}", the inverse semantics is $\concatOp^{-1}("\code{POPL}", "\code{PO}")=\setof{"\code{PL}"}$. 
We check whether we already enumerated another program $\aprog_2$ that outputs "\code{PL}".
If so, we can complete the sketch to $\concatOpOf{\aprog_1}{\aprog_2}$ and directly solve the synthesis problem.
So instead of having to enumerate all programs up to the size of $\concatOpOf{\aprog_1}{\aprog_2}$, we only need to enumerate programs until we find $\aprog_1$ and $\aprog_2$.

We can leverage our understanding of an operator's inverse semantics to construct orimetrics. 
The idea is that the distance between an input and an output value should be inverse proportional to what may be considered the complexity of the inverse semantics. 
Continuing on the above example, if $\aninput$ is a superstring of "\code{POPL}", the inverse semantics is $\concatOp^{-1}("\code{POPL}", \aninput)=\emptyset$ and hence the distance $\functionDistanceMeasureOf{\aninput}{"\code{POPL}"}$ should be infinity. 
As a less extreme case, a prefix "\code{POP}" would leave us with $\concatOp^{-1}("\code{POPL}", "\code{POP}")=\setof{"\code{L}"}$, which is likely easier to generate than $"\code{PL}"$, 
and therefore $\functionDistanceMeasureOf{"\code{POP}"}{"\code{POPL}"}<\functionDistanceMeasureOf{"\code{PO}"}{"\code{POPL}"}$ should hold. 

\Cref{fig:cegar} presents a generic bottom-up enumerative synthesis algorithm that incorporates our four techniques.
Note that the algorithm is parametric in the enumeration order and in the initial orimetric.
This means a tool will not have to hard code the enumeration order and the orimetric, but can take them as input, together with the grammar
The algorithm implements the counterexample-guided abstraction refinement loop~\cite[Chapter 13]{cegar} explained above.
The input is the SyGuS task of interest.
In the first step, we prune the search space to a ball $\aprogSet$ around the ground truth.
Apart from the initial search space $\completeLanguageOf{\agrammar}$, $\pruneFunc$ takes as input the orimetric $\approxmetric$ and $\groundTruthFunc$. 
In the second step, we factorize this ball and obtain  $\aprogSetp$.
In the third step, we search $\aprogSetp$ in the order~$\enumerationOrder{}$ for a program satisfying the specification. 
If we find a program $\aprog$ that $\approxmetric$ believes satisfies the specification, $\approxmetricof{\semanticsOf{\aprog}}{\groundTruthFunc} = 0$, we hand it over to the next step. 
Since $\approxmetric$ is approximate, 
we have to check whether $\aprog$ solves the SyGuS instance.
If so, the loop stops and returns $\aprog$.
If $\aprog$ turns out to be spurious, we pass it to $\refineFunc$. 
Using also $\approxmetric$ and $\groundTruthFunc$, the refinement determines an updated approximate orimetric $\approxmetricpar{\aprog}$ with $\approxmetricparof{\aprog}{\semanticsOf{\aprog}}{\groundTruthFunc} \neq 0$. 
In the final step, we update the enumeration order using function $\updateEnumOrder$.
It takes as input the current enumeration order $\enumerationOrder{}$,
all programs enumerated by $\findProgFunc$, and the ground truth $\groundTruthFunc$.
By analyzing the enumerated programs, it learns a better enumeration order for the next iteration. 
Then the loop repeats.

\paragraph{Remarks}
The above description is conceptual in that it separates the functions more than we do in our implementation.
The workhorse of our implementation is the search for a candidate solution. 
Pruning and factorization run interleaved with it.
Concretely, $\findProgFunc$ constructs the programs one by one: given a list of programs that have already been constructed, it is able to determine the program $\aprog$ that should be constructed next according to the enumeration order $\enumerationOrder{}$. 
If $\aprog$ does not belong to the ball of interest, $\approxmetricof{\semanticsOf{\aprog}}{\groundTruthFunc}\geq \radius$, it is discarded. 
The same holds if we already have another representative $\aprogp$ in the list, meaning 
$\approxmetricof{\semanticsOf{\aprog}}{\semanticsOf{\aprogp}}=0$. 
We already know $\aprogp\enumerationOrder{}\aprog$, and thus $\aprogp$ should be the representative.  
If the program passes these tests, we append it to the list. 

Why do we keep the enumeration order?
An alternative would be to just have an orimetric and imitate the enumeration order by going through the programs in the order of their distance to the ground truth. 
First, we believe the enumeration order is such an integral part of the solving process that it deserves being a parameter on its own. 
Second, our metrics are very coarse: they are made to define the ball but do not distinguish much between the programs inside the ball. 
This will become clear in the next section where we illustrate our approach on an example. 
Third, the orimetrics are not as flexible as the enumeration order. 
An orimetric has to satisfy a few mathematical properties, whereas the enumeration order just has to be a total order on the search space.
This flexibility makes it easier to adapt the enumeration order based on learned information.

One may also ask whether the concept of orimetrics is needed after all, or whether we could have based our algorithmic improvements on more elementary mathematical notions.
One could try to prune the search space with a quasimetric
and factorize the search space with an equivalence. 
Quasimetrics have the problem that a distance of zero should imply equality of the elements, which does not hold in example-based settings. 
The elements are, however, observationally equivalent. 
The discussion suggests we could endow the search space $\aprogSet$ with an equivalence $\equiv\ \subseteq\aprogSet\times \aprogSet$ that should be used for factorization and with a quasimetric on the now factorized space $\factorquasimetric \subseteq \factorize{\aprogSet}{\equiv}{} \times{} \factorize{\aprogSet}{\equiv}$ that could be used for pruning.
To our surprise, this alternative definition $(\aprogSet, \equiv, \factorquasimetric)$ is equivalent to our notion of orimetric search spaces $(\aprogSet, \functionDistanceMeasure)$:  a combination of an equivalence and a quasimetric is enough to induce an orimetric, and vice versa. 
At the same time, the alternative definition has disadvantages that orimetrics overcome:
(i) working with equivalence classes is cumbersome, we believe the symmetry at zero requirement for orimetrics is simpler,
(ii) one has to understand refinement for two objects, the equivalence and the quasimetric, and maintain both objects during computation, 
(iii) there is no guidance on how to obtain the equivalence, while it is a derived concept for orimetrics.
All this indicates that the notion of orimetric search spaces is somewhat fundamental to bottom-up enumerative synthesis.

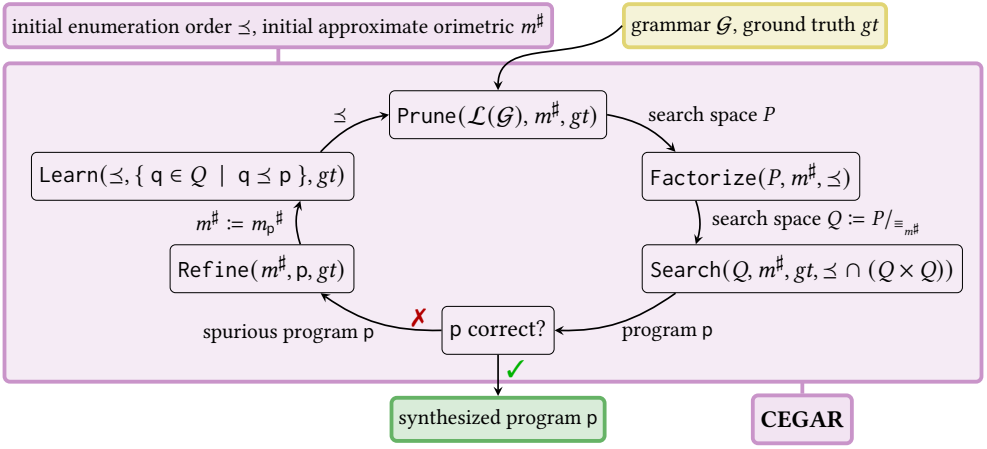
\begin{figure}
    \centering
    \scalebox{0.9}{
    \begin{tikzpicture}
        \tikzset{
rc/.style={rounded corners=0.8mm,line width=0.7pt},
place/.style={draw, rounded corners=0.8mm, minimum width=0.7cm, minimum height=0.7cm, scale=1},
          label/.style={draw, rounded corners=0.8mm, ultra thick, minimum width=0.7cm, minimum height=0.7cm, scale=1},
          inExNode/.style={draw, rounded corners=0.8mm, ultra thick, minimum width=0.7cm, minimum height=0.7cm, scale=0.9},
        }
\coordinate (offset) at (0.5cm, 0cm);
        \coordinate (minioffset) at (0.3cm, 0cm);
\draw [ultra thick, draw=violet!70!white, fill=violet!85!white!15, opacity=0.6, rounded corners=0.8mm]
        (12.15,-0.25) -- (-2.3, -0.25) -- (-2.3,-4.9) -- (12.15,-4.9) -- cycle;

\node[inExNode, draw=violet!70!white!60, fill=violet!70!white!15] (0) at (1.75, 0.3) {initial enumeration order $\enumerationOrder{}$, initial approximate orimetric $\approxmetric$};
        \node[inExNode, draw=olive!50!yellow!60, fill=olive!50!yellow!15] (8) at (8.8, 0.3) {grammar $\agrammar$, ground truth $\groundTruthFunc$};
        \node[inExNode, draw=green!50!black!60, fill=green!50!black!15] (f) at (5,-5.45) {synthesized program $\aprog$};

\node[place] (prune) at (5,-1) {$\pruneFuncOf{\completeLanguageOf{\agrammar}}{\approxmetric}{\groundTruthFunc}$};

        \node[place] (findprog) at (9.5,-3.25) {$\findProgFuncOf{\aprogSetp}{\approxmetric}{\groundTruthFunc}{\enumerationOrder{} \cap \ (\aprogSetp \times \aprogSetp)}$};

        \node[place, above=1.35 of findprog.west, anchor=west] (factorize) {$\factorizeFuncOf{\aprogSet}{\approxmetric}{\enumerationOrder{}}$};

        \node[place, left=9 of findprog.east] (refine) {$\refineFuncOf{\approxmetric}{\aprog}{\groundTruthFunc}$};

        \node[place, above=1.35 of refine.east, anchor=east] (learn) {$\updateEnumOrderOf{\enumerationOrder{}}{\setCollector{\aprogp \in \aprogSetp}{\aprogp \enumerationOrder{} \aprog}}{\groundTruthFunc}$};

        \node[place] (correct) at (5,-4.15) {$\aprog$ correct?};
      
\node[label, draw=violet!70!white!60, fill=violet!70!white!15] (active) at (9.5, -5.45) {\textbf{CEGAR}};
        \draw [ultra thick, draw=violet!70!white, opacity=0.6]
        (active) -- (9.5, -4.925);
        \draw [ultra thick, draw=violet!70!white, opacity=0.6]
        (0) -- (1.75,-0.225);

        \path[-stealth,rc] (8.west) edge[bend left = 20, in=242.7]  
        (prune.north);

        \path[-stealth,rc] (correct) edge  node[scale=1]
        [right, yshift=3pt] {\color{green!70!black}\mycheckmark} (f);

        \path[-stealth,rc] (prune.east) edge[out=315, bend left=18]  node[scale=0.9][right, yshift=5pt, xshift=-2pt] {search space $\aprogSet$} ($(factorize.north west) + (offset)$);

        \path[-stealth,rc] ($(factorize.south west) + (offset) + (minioffset)$) edge[bend left=18]  node[scale=0.9][right, xshift=2pt] {search space $\aprogSetp \coloneq \factorize{\aprogSet}{\inducedequivof{\approxmetric}}$} ($(findprog.north west) + (offset) + (minioffset)$);

        \path[-stealth,rc] ($(findprog.south west) + (offset)$) edge[out=225, bend left=18]  node[scale=0.9][right, yshift=-5pt, xshift=-2pt] {program $\aprog$} (correct.east);

        \path[-stealth,rc] (correct.west) edge[out=135, bend left=18]  
        node[scale=0.9][left, yshift=-5pt, xshift=2pt] {spurious program $\aprog$} 
        node[scale=1.05][right, yshift=2pt, xshift=10pt] {\color{red!70!black}\myxmark} 
        ($(refine.south east) - (offset)$);

        \path[-stealth,rc] ($(refine.north east) - (offset) - (minioffset)$) edge[bend left=18]  node[scale=0.9][left, xshift=-2pt] {$\approxmetric \coloneq \approxmetricpar{\aprog}$} ($(learn.south east) - (offset) - (minioffset)$);

        \path[-stealth,rc] ($(learn.north east) - (offset)$) edge[out=45, bend left=18]  node[scale=0.9][left, yshift=5pt, xshift=2pt] {$\enumerationOrder{}$} ($(prune.west)$);
    \end{tikzpicture}
    }
    \caption{CEGAR loop for synthesis.}
    \label{fig:cegar}
\end{figure}

\subsection{Example}\label{sec:overview:example}

\begin{figure}
\begin{minipage}[b]{0.45\textwidth}
    \small
\centering
\begin{align*}
    \startSymbol \quad &::= \quad 
    \mathit{Init} \bnf 
    \replaceOpOf{\startSymbol}{\startSymbol}{\startSymbol} \bnf
    \concatOpOf{\startSymbol}{\startSymbol}
    \\
    \mathit{Init} \quad &::= \quad
    \avar \bnf \emptystring \bnf \constConf{} \bnf \constCity{}
\end{align*}
\caption{Example grammar $\agrammar$.}\label{fig:ex_overview_grammar}
\end{minipage}
\hfill
\begin{minipage}[b]{0.45\textwidth}
\centering
    \small
\begin{tabular}{|c|c|c|} 
    \hline
    $n$ & Input & Output \\
    \hline
    \hline
    1 & "\code{POPL}\textvisiblespace\code{Conference}" & "\code{POPL}" \\
    \hline
    2 & "\code{Rennes}\textvisiblespace\code{City}" & "\code{Rennes}" \\
    \hline
    3 & "\code{PLDI}\textvisiblespace\code{Conference}" & "\code{PLDI}" \\
    \hline
    4 & "\code{Seoul}\textvisiblespace\code{City}" & "\code{Seoul}" \\
    \hline
\end{tabular}
\caption{Input-output examples $\ioexamples$.}\label{fig:ex_overview_examples}
\end{minipage}
\end{figure}

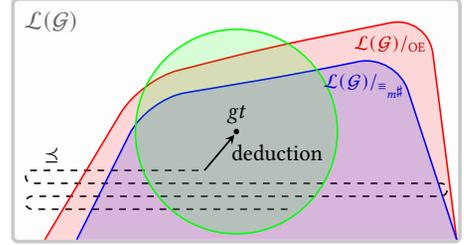
\begin{wrapfigure}{R}{0.45\textwidth}
    \centering
    \scalebox{0.85}{
    \begin{tikzpicture}
        \tikzset{
myline/.style={rounded corners=6mm,line width=0.7pt, fill opacity=0.15},
          enumorder/.style={rounded corners=1mm,line width=0.7pt, dashed},
          deduction/.style={line width=0.8pt},
        }

\draw [ultra thick, draw=gray!70!white, fill=gray!85!white!2, opacity=0.6, rounded corners=0.8mm]
        (6,-0.25) -- (-1, -0.25) -- (-1,-4) -- (6,-4) -- cycle;

        \draw[enumorder](2.5,-2.3) (3.3, -3.5) -- (-0.8, -3.5) -- (-0.8, -3.3) -- (5.8, -3.3) -- (5.8, -3.1) -- (-0.8, -3.1) -- (-0.8, -2.9) -- (2, -2.9) node (b) [right, yshift=6pt, xshift=-75pt]{$\enumerationOrder{}$};

        \draw[myline, draw=red!100!white, fill=red] (5.95,-3.97) -- (5.5,-0.5) -- (3,-1) -- (1,-1.5) -- (-0.5,-3.97);

        \draw[myline, draw=blue!100!white, fill=blue] (5.95,-3.97) -- (5,-1.1) -- (3.5,-1.4) -- (1.1,-1.8) -- (0,-3.97);

        \draw[myline, draw=green!100!white, fill=green](2.5,-2.3) circle (45pt);

        \draw[fill](2.5,-2.3) circle(1pt) node (gt) [above]{$\groundTruthFunc$};

        \draw[-stealth, deduction] (2, -2.9) -- (2.47, -2.35) node (a) [right, yshift=-8pt, xshift=-5pt]{deduction};

        \node[] at (4.9, -0.95) {\color{red!80!black}{\footnotesize $\factorize{\completeLanguageOf{\agrammar}}{\text{OE}}$}};

        \node[] at (4.5, -1.6) {\color{blue!80!black}{\footnotesize $\factorize{\completeLanguageOf{\agrammar}}{\inducedequivof{\approxmetric}}$}};

        \node[] at (-0.4, -0.6) {\color{gray!80!black}$\completeLanguageOf{\agrammar}$};
    \end{tikzpicture}
    }
    \caption{Search space and enumeration order.
    }
    \label{fig:search_space}
\end{wrapfigure}
 
We illustrate our algorithm by solving an example problem in the string domain.
We will repeatedly refer to \Cref{fig:search_space} to make the link to the conceptual development introduced above. 
\Cref{fig:ex_overview_grammar} depicts a context-free grammar $\agrammar$ and \Cref{fig:ex_overview_examples} shows $\groundTruthFunc$ as four input-output examples~$\ioexamples$. 
The nonterminal $\mathit{Init}$ can be rewritten to the input variable $\avar$ or to string constants, where $\emptystring$ is the empty string. 
The operator $\replaceOp$ takes as arguments three strings: the first is the string in which the replacement should happen, namely the first occurrence of the second argument, should it exist, will be replaced by the third argument. 
The operator $\concatOp$ concatenates the strings that are given as arguments. 

In \Cref{fig:ex:execution}, we show the set $\completeLanguageOf{\agrammar}$ of all programs as a collection of sets $\aprogSet_i$ that contain all programs of size $i$.
The set $\aprogSet_1$ contains all programs that can be derived from $\mathit{Init}$. 
Programs of size greater than one are constructed by combining programs of smaller size. 
We use an enumeration order $\enumerationOrder{}$ that orders programs by their size.
Programs that have the same size are ordered from left to right in \Cref{fig:ex:execution}. 
The set $\aprogSet_7$ contains the solution to the synthesis task, highlighted in \successColor{\text{yellow}}. 
The first row in \Cref{fig:ex:sizes} gives the cardinality of each set $\aprogSet_i$. 
Note how the number of programs grows exponentially with the size, and so eliminating programs early on is essential.
In \Cref{fig:search_space}, the set of programs is represented by the gray box and the enumeration order by the dashed line. 

State-of-the-art bottom-up enumerative SyGuS solvers~\cite{blaze,sygus,eusolver,duet,simba,dryadsynth,synthphonia} factorize the search space, often using observational equivalence~\cite{transit,escher}.
In \Cref{fig:ex:execution}, programs that are pruned because they are observationally equivalent to a program that was enumerated earlier are highlighted in \oepruneColor{\text{red}}.
The cardinality of the factorized sets is given in the second row in \Cref{fig:ex:sizes}.
In \Cref{fig:search_space}, this search space is represented by the red area.

Using orimetrics, we can further reduce the search space.
The first step is to define an orimetric on programs. 
The orimetric is chosen based on the data and data manipulations that should be supported.
For strings with replacement, it is beneficial to reward programs that produce superstrings of the outputs given in the examples and punish programs that do not.
We explain in a moment how this plays together with the fact that concatenation prefers substrings.  
We begin by defining an auxiliary quasimetric $\distanceMeasure$ on strings:
\begin{equation*}
    \distanceMeasureOf{\aninput}{\anoutput}\ =\  
    \begin{cases}
        \lengthOf{\aninput} - \lengthOf{\anoutput}  &\ \text{if $\aninput$ is a superstring of $\anoutput$}
        \\
        100 + \absOf{\lengthOf{\aninput} - \lengthOf{\anoutput}}  & \text{otherwise}
    \ .
    \end{cases}
\end{equation*}
For example, $\distanceMeasureOf{"\code{PO}"}{"\code{POPL}"} = 102$ but $\distanceMeasureOf{"\code{POPL}"}{"\code{PO}"} = 2$. 
Note that~$\distanceMeasure$ is not symmetric in general, but it is symmetric at distance zero.
We now lift the auxiliary quasimetric~$\distanceMeasure$ on strings to an orimetric $\functionDistanceMeasure$ on functions over strings.
The definition is as expected, we evaluate the given functions on the inputs from all examples and sum up the distances of the output values.
To make this formal, let $\inputexamplesp\subseteq \inputexamples$ and define 
\begin{equation*}
    \distanceMeasureForExamplesOf{\inputexamplesp}{\aFunc}{\aFuncp}\
    =\ 
    \sum_{\aninput \in \inputexamplesp} \distanceMeasureOf{\aFuncOf{\aninput}}{\aFuncpOf{\aninput}}
    \ .
\end{equation*}
Our orimetric is then $\functionDistanceMeasure=\distanceMeasureForExamples{\inputexamples}$.
We will use other instantiations of $\inputexamplesp$ in a moment. 
Note, that the lifting does not result in a quasimetric.
If two different programs $\aprog$ and $\aprogp$ produce the same outputs on the given inputs, their distance is $0$.
A quasimetric would require $\aprog$ and $\aprogp$ to be equal.
This need not be the case.
In fact, allowing $\aprog$ and $\aprogp$ to have a distance of zero while being different is what enables factorization in the first place. 

With the orimetric at hand, we restrict the search to programs in the ball $\ballof{\radius}{\groundTruthFunc}$, with the radius $\radius$ set to $100$.
This means we only keep programs that return a superstring of the output for every input example. 
We admit that this is discrete, and does not make use of the fact that an orimetric yields continuous values in $\realsgtz$. 
In our experiments, we will see more elaborate instantiations. 

Note that focusing on superstrings does not mean we are not allowed to use the $\concatOp$ operator.  
We can still use $\concatOp$, but on the programs inside the ball. 
The superstring ball is even closed under concatenation: concatenating superstrings yields a superstring. 
One may ask how this relates to the above argument that $\concatOp$ prefers substrings. 
This argument was made for a comparison with the final value. 
Since we use superstrings here, it means the final value cannot be produced by concatenation, and so $\concatOp$ will not be the topmost operator in the solution to the synthesis task. 
It can, however, still be a part of it.

Coming back to \Cref{fig:ex:execution}, the programs that are additionally pruned by this orimetric are highlighted in \metricpruneColor{$\text{green}$}.
\Cref{fig:search_space} depicts the resulting search space as the green ball around $\groundTruthFunc$. 
The cardinality of the program sets is shown in \Cref{fig:ex:sizes}.
Using only pruning is not as effective as factorizing along observational equivalence, but combined we are only left with $81$ programs to explore in $\aprogSet_7$.
In \Cref{fig:search_space}, this search space is the intersection of the green ball and the red area.

\begin{figure}
    \small
 \begin{align*}
    \aprogSet_1 &= 
    \setof{
        \avar, \emptystring, 
        \constConf{},
        \constCity{}
    }
    \\
    \aprogSet_2 &= 
    \emptyset 
    \\
    \aprogSet_3 &= 
    \setof{
        \infixconcat{\avar}{\avar},
        \oepruneColor{\infixconcat{\avar}{\emptystring}},
        \infixconcat{\avar}{\constConf{}},
        \infixconcat{\avar}{\constCity{}},
        \oepruneColor{\infixconcat{\emptystring}{\avar}},
        \oepruneColor{\infixconcat{\emptystring}{\emptystring}},
        \oepruneColor{\infixconcat{\emptystring}{\constConf{}}},
        \oepruneColor{\infixconcat{\emptystring}{\constCity{}}},
        \\
        & \qquad \quad \ \,
        \infixconcat{\constConf{}}{\avar},
        \oepruneColor{\infixconcat{\constConf{}}{\emptystring}},
        \metricpruneColor{\infixconcat{\constConf{}}{\constConf{}}},
        \\
        & \qquad \quad \ \,
        \metricpruneColor{\infixconcat{\constConf{}}{\constCity{}}},
        \infixconcat{\constCity{}}{\avar},
        \oepruneColor{\infixconcat{\constCity{}}{\emptystring}},
        \metricpruneColor{\infixconcat{\constCity{}}{\constConf{}}},
        \\
        & \qquad \quad \ \,
        \metricpruneColor{\infixconcat{\constCity{}}{\constCity{}}}
    }
    \\
    \aprogSet_4 &=
    \setof{
        \ldots,
        \oepruneColor{\shortReplaceOf{\avar}{\avar}{\constCity{}}},
        \oepruneColor{\shortReplaceOf{\avar}{\emptystring}{\emptystring}},
        \oepruneReuseColor{\shortReplaceOf{\avar}{\constCity{}}{\constConf{}}},
        \ldots,
        \semisuccessColor{\shortReplaceOf{\avar}{\constConf{}}{\emptystring}},
        \ldots
    }
    \\
    \aprogSet_5 &= \setof{\ldots}
    \qquad
    \aprogSet_6 = \setof{\ldots}
    \qquad
    \aprogSet_7 = \setof{
        \ldots,
        \successColor{\shortReplaceOf{\shortReplaceOf{\avar}{\constConf{}}{\emptystring}}{\constCity{}}{\emptystring}}
        ,
        \ldots
        }
\end{align*}
    \caption{Example execution of a bottom-up enumeration algorithm
    with highlights for 
    \oepruneColor{\text{OE factorization}},
    \oepruneReuseColor{\text{OE factorization with learning}},
    \metricpruneColor{\text{orimetric pruning}},
    \semisuccessColor{\text{partial correctness}},
    and 
    \successColor{\text{correctness}}.
    The $\concatOp$ operator is abbreviated by an infix "$\infixconcat{}{}$"
    and 
    the $\replaceOp$ operator is abbreviated by $\shortReplace$.
    }\label{fig:ex:execution}
\end{figure}

\begin{figure}
    \small
 \begin{tabular}{|c|c|c|c|c|c|c|c|} 
    \hline
    Method & $\aprogSet_1$ & $\aprogSet_2$ & $\aprogSet_3$ &$\aprogSet_4$ &$\aprogSet_5$ &$\aprogSet_6$ &$\aprogSet_7$ \\
    \hline
    \hline
    No Pruning or Factorization & 4 & - & 16 & 64 & 128 & 1280 & 4352 \\
    \hline
    OE Factorization & 4 & - & 9 & 6 & 27 & 56 & 119 \\
    \hline
    Orimetric Pruning (OP) & 4 & - & 7 & 18 & 56 & 323 & 929 \\
    \hline
    OE Factorization + OP & 4 & - & 5 & 6 & 19 & 50 & 81 \\
    \hline
    $\inducedequivof{\approxmetric}$ Factorization + OP + Learning & 4 / 5 & - & 5 / 12 & 3 / 10 & - & - & - \\
    \hline
\end{tabular}   
\caption{Number of programs at each size per solving method.}\label{fig:ex:sizes}
\end{figure}

\begin{figure}
    \small
 \begin{align*}
    \aprogSet_1 &= 
    \setof{
        \shortReplaceOf{\avar}{\constConf{}}{\emptystring},
        \avar, \emptystring, 
        \constConf{},
        \constCity{}
    }
    \\
    \aprogSet_2 &= 
    \emptyset 
    \\
    \aprogSet_3 &= 
    \setof{
        \infixconcat{\avar}{\avar},
        \oepruneColor{\infixconcat{\avar}{\emptystring}},
        \infixconcat{\avar}{\constConf{}},
        \infixconcat{\avar}{\constCity{}},
        \infixconcat{\avar}{\shortReplaceOf{\avar}{\constConf{}}{\emptystring}},
        \ldots,
        \oepruneColor{\infixconcat{\emptystring}{\constCity{}}},
        \ldots
    }
    \\
    \aprogSet_4 &=
    \setof{
        \ldots,
        \oepruneColor{\shortReplaceOf{\avar}{\avar}{\constCity{}}},
        \oepruneColor{\shortReplaceOf{\avar}{\emptystring}{\emptystring}},
        \ldots,
        \successColor{\shortReplaceOf{\shortReplaceOf{\avar}{\constConf{}}{\emptystring}}{\constCity{}}{\emptystring}}
        \ldots
    }
\end{align*}
    \caption{Example execution of the second iteration of our algorithm
    with highlights for 
    \oepruneColor{\text{OE factorization}}
    and 
    \successColor{\text{correctness}}.
    The $\concatOp$ operator is abbreviated by an infix "$\infixconcat{}{}$"
    and 
    the $\replaceOp$ operator is abbreviated by $\shortReplace$.
    }\label{fig:ex:execution_second_it}
\end{figure}

With approximate orimetrics $\approxmetric$, we can reduce the search space even further.  
The approximate orimetrics should be rough so that the induced equivalence $\inducedequivof{\approxmetric}$ eliminates many programs. 
We will later recover precision by adding a refinement loop. 
In our example, we use $\approxmetric = \distanceMeasureForExamples{\inputexamplesp}$ with $\inputexamplesp=\setof{\aninput_1}$.
This approximate orimetric only takes into account the first example when comparing functions, $\approxmetricof{\aFunc}{\aFuncp} = \distanceMeasureOf{\aFuncOf{\aninput_1}}{\aFuncpOf{\aninput_1}}$.  
The induced equivalence equates programs at distance zero---actually, this is why we wanted the orimetric to be symmetric at distance zero. 
Here, we equate programs that have the same output on the first input. 
To give an example, $\avar\inducedequivof{\approxmetric}\oepruneReuseColor{\shortReplaceOf{\avar}{\constCity{}}{\constConf{}}}$. 
The equivalence holds as $\constCity{}$ does not occur in "\code{POPL}\textvisiblespace\code{Conference}", and so no replacement happens. 
When we factorize the search space along $\inducedequivof{\approxmetric}$, the program in \oepruneReuseColor{\text{blue}} will be eliminated since we already have $\avar$. 
In \Cref{fig:search_space}, the factorized search space is the blue area. 
When applying $\pruneFunc$ as well as $\factorizeFunc$, we are left with the green ball intersected with the blue area.
We now use $\findProgFunc$ to find a solution candidate for the synthesis problem in this search space.

Function $\findProgFunc$ enumerates the programs along the order $\enumerationOrder{}$. 
Observe that the output of the program $\aprog$ in \semisuccessColor{\text{cyan}}, when executed on the first example,
is "\code{POPL}". 
Therefore, under $\approxmetric$, the distance to the ground truth is zero, and $\findProgFunc$ returns $\aprog$ as a candidate solution.
Notably, to find the candidate solution, we only needed to enumerate programs of size up to $4$.

Unfortunately, program $\aprog$  does not work on some of the other examples: the solution is spurious. 
For instance, on the second example the program yields 
"\code{Rennes}\textvisiblespace\code{City}" which is not equal to the specified output "\code{Rennes}". 
We use $\refineFunc$ to generate a new orimetric for the next iteration. 
To also take into account the second example, we set $\approxmetric$ to $\distanceMeasureForExamples{\setof{\aninput_1, \aninput_2}}$.
While the approximate orimetric and its refinement are simple in this example, our approach can also be instantiated with predicate abstraction as introduced in~\cite{blaze}.
We discuss the details below.

We now use $\updateEnumOrder$ to update the enumeration order for the next iteration.
To do so, we use the precise metric $\distanceMeasureForExamples{\inputexamples}$ to analyze the candidate solution $\aprog$.
We see that $\aprog$ produces a superstring of the outputs for all examples.
This means, 
$\distanceMeasureForExamplesOf{\inputexamples}{\semanticsOf{\aprog}}{\groundTruthFunc}<100$ and therefore the program lies within the ball around $\groundTruthFunc$.
We consider $\aprog$ valuable and promote it in the enumeration order.
Recall that the enumeration order in our example is defined by the size.
Function $\updateEnumOrder$ redefines the size of $\aprog$ to be $1$. 
Then $\aprog$ and programs that use $\aprog$ as a subprogram will be considered early on. 

\Cref{fig:ex:execution_second_it} gives the program sets constructed in the second iteration of the refinement loop. 
Note how $\aprog$ has size one. 
With this change in size, we already find the \successColor{\text{solution}} in $\aprogSet_4$. 
In the last row of \Cref{fig:ex:sizes}, we see the cardinalities of the sets again. 
On the left of the slash are the cardinalities for the first iteration of the refinement loop, and on the right the cardinalities for the second iteration. 
In total, we only consider $39$ programs to solve the problem. 

\paragraph{Selecting an Orimetric}
Coming up with a good orimetric is key to the success of our method.
In our example, had we defined an orimetric that rewards functions producing substrings of the output and punish those that produce superstrings,
we would have failed to generate the solution using bottom-up enumeration.

To construct good orimetrics, we derive them from the semantics of the operators in the grammar.
We already discussed how concatenation needs substrings to produce the desired output while replacement favors superstrings. 
In practice, we 
concurrently run a portfolio of solvers employing a different orimetric each. 
We developed three orimetrics for the string domain and four orimetrics for the bitvector domain.
In \Cref{sec:metrics_strings,sec:metrics_bv}, we introduce a principled way of designing orimetrics. 

\paragraph{Deduction}
An aspect the above example did not highlight is the use of deduction.
Conceptually, deduction accelerates the search by modifying the enumeration order, as depicted in \Cref{fig:search_space}.
Our tool \toolname{} implements the deduction technique from DryadSynth~\cite{dryadsynth}.
To explain it, consider the sketch $\shortReplaceOf{\sketchhole}{\sketchhole}{\sketchhole}$.  After we found the programs
$\constCity{}$, $\emptystring$, and $\shortReplaceOf{\avar}{\constConf{}}{\emptystring}$, we would enumerate 
$\shortReplaceOf{\shortReplaceOf{\avar}{\constConf{}}{\emptystring}}{\constCity{}}{\emptystring}$ next,
which solves the synthesis problem. 
Deduction thus understands  a program sketch.
This relies on the inverse semantics of a sketch.
If one finds arguments such that the sketch filled with these arguments is a solution to the synthesis problem, the solution will be enumerated next.
This means, for any supported sketch, we only need to find the correct arguments. 
Most orimetrics we define in this paper optimize the search for viable arguments of a sketch by approximating its inverse semantics. \subsection{State-of-the-Art}\label{sec:state_of_the_art}

\begin{figure}
    \small
 \begin{tabular}{|c|c|c|}
 \hline
 Solver & Search Space Pruning / Factorization & Enumeration Order (Deduction) \\ 
 \hline
 \hline
 ESolver~\cite{sygus} & Observational Equivalence (OE) & constant (\myxmark) \\
 \hline
 EUSolver~\cite{eusolver} & OE & constant (\mycheckmark) \\
 \hline
 Blaze~\cite{blaze} & OE + Abstraction Refinement + Automata & constant (\myxmark) \\
 \hline
 EUPhony~\cite{euphony} & weak OE & constant/offline learning (\mycheckmark) \\
 \hline
 Probe~\cite{probe} & OE & learning (\myxmark) \\
 \hline
 Duet~\cite{duet} & OE & constant (\mycheckmark) \\
 \hline
 Simba~\cite{simba} & OE & constant (\mycheckmark) \\
 \hline
 DryadSynth~\cite{dryadsynth} & OE & constant (\mycheckmark) \\  
 \hline
 Synthphonia~\cite{synthphonia} & OE & constant (\mycheckmark) \\
 \hline
 \textcolor{blue}{\toolname{}} & OE + Abstraction Refinement + \textcolor{blue}{Orimetrics} & learning (\mycheckmark) \\
 \hline
\end{tabular}   
\caption{Comparison of enumerative SyGuS solvers.}
\label{fig:comparison}
\end{figure}

We discuss to what extent the state-of-the-art SyGuS solvers~\cite{blaze,sygus,eusolver,euphony,duet,simba,dryadsynth,synthphonia} can be seen as instances of the generic solver in~\Cref{fig:cegar}. 
The discussion shows that rather different techniques can be understood as (i) assuming an \emph{orimetric} on the search space and (ii) improving an \emph{enumeration order} through learning. 
These two ingredients are often left implicit in the related work. 
By making them explicit and giving them rigorous definitions, we provide a framework in which SyGuS technology can be developed.

The discussion of the solvers is summarized in \Cref{fig:comparison}.
The second column describes the pruning technique that is applied to reduce the search space. 
The third column categorizes the enumeration order as constant or learning, and indicates whether deductive elements are used.

ESolver~\cite{sygus} is the basis of enumerative SyGuS solvers.  
It enumerates programs by size until it finds a program that works for all examples. 
ESolver applies observational equivalence to factorize the search space. 
The enumeration order is constant and does not use deduction.
EUSolver~\cite{eusolver} advances ESolver by adding deduction.
If for every example a suitable program has been found, EUSolver tries to construct an if-then-else program that solves all examples. 
Behind the construction is a decision tree classification of the examples.

Blaze~\cite{blaze} was the first solver based on abstraction refinement~\cite[Chapter 13]{cegar}. 
The focus is on matrix transformation and string problems, and 
Blaze expects to have access to three pieces of information about the domain: the cost for each production in the grammar, a set of predicates that can be used to abstract data values~\cite[Chapter 15]{cegar}, and an abstract semantics for each operator. 
Blaze then solves the synthesis problem in the abstract, and refines the abstract domain by adding predicates if the solution turns out to be spurious. 
Blaze can be seen as an instance of the generic algorithm in~\Cref{fig:cegar}.  
For an orimetric based on predicate abstraction, let $\aFunc^{\sharp}$ be the abstraction of~$\aFunc$ and let $\alpha(\aninput)$ be the abstraction of the input value $\aninput$. 
Both can be computed with the information assumed by Blaze.
One then defines 
\begin{equation*}
    \distanceMeasureForExamplesOf{\mathit{Blaze}}{\aFunc}{\aFuncp}\
    =\ 
    \sum_{\aninput \in \inputexamples} \distanceMeasureOf{\aFunc^{\sharp}(\alpha(\aninput))}{\aFuncp^{\sharp}(\alpha(\aninput))}
    \ .
\end{equation*}
Another novelty in Blaze is that the search space is represented and explored using automata-theoretic techniques. 
This important algorithmic aspect is not reflected in our generic solver, which is more on the semantic level.  
The enumeration order is constant, namely defined by the costs. 

EUPhony~\cite{euphony} is the first solver that uses probabilistic information. 
It translates the given grammar into a probabilistic variant, and then traverses the language using A$^*$.
These probabilities have been learned from solutions to a large collection of synthesis tasks. 
The learning is offline, it happens before the translation and the probabilities are not adapted during the search. 
The enumeration order, which is determined by A$^*$ based on the probabilities, is therefore constant in our terminology. 
EUPhony applies a weak version of observational equivalence that works on sentential forms to factorize the search space.
It adopts the deduction techniques from EUSolver to synthesize programs for benchmarks that require case-splitting.

Probe~\cite{probe} uses just-in-time learning to train the grammar.
It assigns costs to each production, and the cost of a program is then the sum of the costs of the productions needed to derive it.
Probe enumerates programs in the order of increasing costs, and stops when a cost limit is reached or a solution is found.
When the cost limit is reached, the costs are updated as follows.
Operators which occur frequently in programs solving at least one example are assigned a lower cost than the other operators. 
The technique immediately fits, and actually inspired, our refinement of the enumeration order. 
Probe applies observational equivalence factorization and does not do deduction.

Duet~\cite{duet}, Simba~\cite{simba}, DryadSynth~\cite{dryadsynth}, and Synthphonia~\cite{synthphonia} are similar when it comes to the following characteristics: they all use observational equivalence to factorize the search space, have a constant enumeration order, enumerate based on the program size (with sharing in~\cite{dryadsynth}), and use a variant of the deduction technique from EUSolver to deal with synthesis tasks that need case-splitting.
What distinguishes them are their own new deduction techniques. 

Duet~\cite{duet} solves SyGuS problems in the theories of bitvectors, Booleans, strings,  and integers. 
Its deduction is based on an inverse semantics for the operators in the grammar. 
When one argument for an operator has been fixed, the inverse semantics provides subproblems for the remaining arguments. 
These are solved by inserting an already enumerated term or by further decomposition.

Simba~\cite{simba} concentrates on bitvectors and advances Duet's deduction process. 
The deduction computes necessary preconditions for the arguments of a program sketch.
If a program satisfying the preconditions of an argument has been found, it is inserted at the corresponding place,  the preconditions for the remaining arguments are refined, and the process repeats. 

DryadSynth~\cite{dryadsynth} solves SyGuS problems in the theory of bitvectors.
DryadSynth keeps a set of sketches that are hardcoded into the algorithm.
This results in a drastic increase in performance compared to Simba.
DryadSynth enumerates programs in order of their size (with sharing).
For each enumerated program it checks whether there are other previously enumerated programs with which a sketch can be completed.
To perform this search efficiently, DryadSynth maintains viable programs for each sketch in a separate data structure.

Synthphonia solves synthesis problems in the string domain and works on more expressive grammars than specified in the SyGuS format.
Synthphonia introduces a framework to perform deduction and enumeration concurrently instead of in an interleaved fashion, and has specialized data structures for the communication between the threads.
Moreover, the case-splitting deduction technique from EUSolver is also implemented in a concurrent fashion.

The general solver we propose in \Cref{sec:framework} can be instantiated to the above solvers. While the above techniques mostly manipulate the enumeration order, our tool \toolname{} focuses on the search space (but can also update the enumeration order).
\toolname{} solves SyGuS problems in the bitvector and in the string domain.
Furthermore, \toolname{} uses abstraction refinement to factorize the search space more effectively.
\toolname{} also incorporates the deduction techniques from DryadSynth and extends them to more sketches as well as to sketches in the string domain.
The basis for these deduction techniques is to have an inverse semantics.
Orimetrics can capture the complexity of the inverse semantics and prune programs accordingly.  \section{Preliminaries}\label{sec:preliminaries}
\begin{definition}
A \emph{SyGuS problem} $(\agrammar, \aspecification)$ consists of a context-free grammar $\agrammar$ and a specification~$\aspecification$.
The task is to find a program $\aprog\in \languageOf{\agrammar}$
that satisfies the specification, $\aprog\models\aspecification$.
\end{definition}
 
We briefly recall the ingredients. 
A \emph{context-free grammar} $\agrammar = (\nonterms, \terminals, \productions, \startSymbol)$ consists of finite sets of nonterminals~$\nonterms$, terminals~$\terminals$, and
productions $\productions \subseteq \nonterms \times \kleeneOf{(\terminals \cup \nonterms)}$, together with a start nonterminal $\startSymbol \in \nonterms$. 
The term language of a nonterminal, denoted by $\termLanguageOf{\anonterm}$, consists of all words over $\terminals$ and $\nonterms$ that can be derived from $\anonterm$ using the productions. 
The language of $\anonterm$ is limited to the terminal words, 
$\languageOf{\anonterm} = \termLanguageOf{\anonterm} \cap \kleeneOf{\terminals}$. 
The language of the grammar is $\languageOf{\agrammar} = \languageOf{\startSymbol}$.
The complete language of the grammar considers all nonterminals, $\completeLanguageOf{\agrammar}= \bigcup_{\anonterm \in \nonterms} \languageOf{\anonterm}$.
Similarly, the complete term language of the grammar is  
$\completeTermLanguageOf{\agrammar}= \bigcup_{\anonterm \in \nonterms} \termLanguageOf{\anonterm}$.

In SyGuS, the terminals are variables or operators. 
Variables have arity zero, operators may have an arity greater than zero.
We expect the grammar to respect the arity, meaning the productions have to provide the expected number of arguments.
With this requirement, we can simply call terminal words programs, and write them as $\aprog\in \languageOf{\startSymbol}$.
A word over terminals and nonterminals is a program sketch $\asketch \in \completeTermLanguageOf{\agrammar}$. 
The holes of a sketch are the unresolved nonterminals.
The arity of a sketch is the number of its holes, say $n$. 
The program that results from replacing the holes by programs $\aprog_1$ to $\aprog_n$ is $\asketchOf{\aprog_1, \ldots, \aprog_{n}}$. 
We use $\subprogrel\ \subseteq \completeLanguageOf{\agrammar}\times\completeLanguageOf{\agrammar}$ for the subprogram relation where $\aprog\subprogrel\aprogp$, if there is a sketch $\asketch$ so that $\aprogp=\asketchOf{\aprog}$. 
We use $\dcof{\aprogp}=\setCompact{\aprog\in\completeLanguageOf{\agrammar}}{\aprog\subprogrel\aprogp}$ for the set of all subprograms of $\aprogp$, including $\aprogp$ itself. 
This is the downward closure wrt. $\subprogrel$. 

A domain $(\dataDomain, \anInterpretation)$ consists of a set of data values~$\dataDomain$ and an interpretation $\anInterpretation$. 
The interpretation assigns a function 
$\anInterpretationOf{\anOperator} = \dataDomain^{k} \rightarrow \dataDomain$  to each operator $\anOperator$ of arity $k$.  
The domain gives rise to a semantics for programs $\aprog\in \languageOf{\agrammar}$. 
The semantics is the function $\semanticsOf{\aprog}$ of type $\allFuncs=(\variables \rightarrow \dataDomain) \rightarrow \dataDomain$. 
It takes as input a variable assignment $\aninput:\variables \rightarrow \dataDomain$ and returns a data value.
The definition is as expected:  $\semanticsOfAppliedTo{\code{x}}{\aninput}=\aninput(\code{x})$ and $\semanticsOfAppliedTo{\anOperator(\aprog_1, \ldots, \aprog_k)}{\aninput}=\anInterpretationOfOf{\anOperator}{\semanticsOfAppliedTo{\aprog_1}{\aninput}, \ldots, \semanticsOfAppliedTo{\aprog_k}{\aninput}}$. 
It is common in SyGuS to take the domain and the semantics of programs as defined by the SMT-LIB~\cite{smtlib} standard. 
We follow this convention. 

We consider example-based specifications where $\aspecification \subseteq (\variables \rightarrow \dataDomain) \times \dataDomain$ consists of a finite set of input-output examples.  
A program satisfies the specification,  $\aprog\models \aspecification$, if  
$\semanticsOfAppliedTo{\aprog}{\aninput}=\anoutput$ for all $(\aninput, \anoutput)\in\aspecification$.
The \emph{ground truth} $\groundTruthFuncs\subseteq \allFuncs$ consists of all functions that satisfy the specification in this sense. 
\paragraph{CEGIS}
SyGuS problems that are not example based can still be solved with example-based techniques.
The idea, known as counterexample-guided inductive synthesis~\cite{cegis}, is to let an SMT solver generate new examples should a candidate program not yet satisfy the specification.
With this argument, we focus on example-based specifications.
\paragraph{Equivalences}
An equivalence $\equiv\ \subseteq \aset\times \aset$ on a set $\aset$ is a relation that is reflexive, symmetric, and transitive.
We write $\fullclassof{\equiv}{\anelema}=\setCompact{\anelemb}{\anelemb\equiv\anelema}$ for the equivalence class of $\anelema\in\aset$. 
We just write $\classof{\anelema}$ if the equivalence relation is understood. 
We lift the notation to sets $\agt\subseteq\aset$ and define $\classof{\agt}=\bigcup_{\anelemg\in\agt}\classof{\anelemg}$.
We call this the closure of $\agt$ under $\equiv$. 
The equivalence is precise wrt.\ $\agt$, if the closure does not add any elements, $\classof{\agt}\subseteq\agt$.
Note that the reverse inclusion always holds. 
If the equivalence is not precise wrt.\ $\agt$, it is called approximate. 
The equivalence is unambiguous wrt.\ $\agt$, if $\classof{\agt}=\classof{\anelemg}$ for all $\anelemg\in\agt$.  
This means all elements in $\agt$ are equivalent.
Let $\aFunc:\aset\rightarrow \aset$ be a transformer on $\aset$. 
The equivalence is a congruence wrt.\ $\aFunc$, if $\anelema\equiv \anelemb$ implies $\aFuncOf{\anelema}\equiv \aFuncOf{\anelemb}$. 
The definition generalizes to functions in several arguments in the expected way. 
Factorizing $\aset$ along~$\equiv$ yields the set of equivalence classes $\factorize{\aset}{\equiv}\ = \setCompact{\classof{\anelema}}{\anelema\in\aset}$. 
A representative system for $\factorize{\aset}{\equiv}$ is a set~$
R\subseteq \aset$ that contains precisely one element $\anelemc\in C$ for every class $C\in\factorize{\aset}{\equiv}$.   \section{Contribution \rom{1} -- Oriented Metric Search Spaces}\label{sec:framework}
We first define the main object of this paper, oriented metrics, and then turn to the search space and the enumeration order.
Finally, we describe the components of our CEGAR loop in detail. 
\subsection{Oriented Metrics}
It will be convenient to give the definition for arbitrary sets $\aset$.\begin{definition}
A function $\ametric: \aset \times \aset \rightarrow \realsgtz$ is an \emph{oriented metric (orimetric)} if, for all $\anelema, \anelemb, \anelemc\in\aset$, 
    \begin{align*}
        &\ametricof{\anelema}{\anelema}\ =\ 0 
        \tag{reflexivity}
\\
        \ametricof{\anelemb}{\anelema}\ =\ 0\ \ \Rightarrow\ \ 
        &\ametricof{\anelema}{\anelemb}\ =\ 0
        \tag{symmetry at zero}
\\
        \ametricof{\anelema}{\anelemc}\ \ \leq\ \ &\ametricof{\anelema}{\anelemb}\ +\ \ametricof{\anelemb}{\anelemc}\ .
        \tag{$\triangle$-inequality}
    \end{align*}
The \emph{equivalence induced by $\ametric$} is the relation $\inducedequivof{\ametric}\ \subseteq \aset\times\aset$ where $\anelema\inducedequivof{\ametric}\anelemb$ if $\ametricof{\anelema}{\anelemb}=0$, for all~$\anelema, \anelemb\in\aset$. 
The orimetric is \emph{precise}, \emph{approximate}, resp. \emph{unambiguous} wrt. $\agt\subseteq\aset$, if the equivalence $\inducedequivof{\ametric}$ has these properties.
The orimetric is a \emph{congruence} wrt. $\aFunc:\aset\rightarrow\aset$, if this holds for $\inducedequivof{\ametric}$.
We generalize the congruence requirement to functions in several arguments where needed.
The \emph{quasimetric induced by $\ametric$} is $\inducedquasimetric: \factorize{\aset}{\inducedequivof{\ametric}}\times \, \factorize{\aset}{\inducedequivof{\ametric}}\rightarrow  \realsgtz$ with $\inducedquasimetricof{\classof{\anelema}}{\classof{\anelemb}}=\ametricof{\anelema}{\anelemb}$. 
\end{definition}
We also suggested an alternative to orimetrics.
\begin{definition}
A \emph{factorization and pruning structure} $(\aset, \equiv, \factorquasimetric)$ consists of an equivalence relation $\equiv\ \subseteq\aset\times\aset$ and a quasimetric $\factorquasimetric: \factorize{\aset}{\equiv}\times \, \factorize{\aset}{\equiv}\rightarrow  \realsgtz$ on the factorized set. Recall that a quasimetric requires  the triangle inequality and $\factorquasimetric(\classof{\anelema}, \classof{\anelemb})=0$ if and only if $\classof{\anelema}=\classof{\anelemb}$~\cite{metricsbook}. 
The \emph{orimetric induced by the factorization and pruning structure} is $\inducedorimetric:\aset\times\aset\rightarrow\realsgtz$ with $\inducedorimetricof{\anelema}{\anelemb}=\factorquasimetricof{\classof{\anelema}}{\classof{\anelemb}}$. 
\end{definition}
The main finding is that these concepts are equivalent.
\begin{theorem}
If $(\aset, \ametric)$ is an orimetric space, then $(\aset, \inducedequivof{\ametric}, \inducedquasimetric)$ is a factorization and pruning structure.
If $(\aset, \equiv, \factorquasimetric)$ is a factorization and pruning structure, then $(\aset, \inducedorimetric)$ is an orimetric space. 
\end{theorem}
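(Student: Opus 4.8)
The plan is to verify each implication by checking the relevant axioms; the only step that is more than routine unfolding is the well-definedness of the induced quasimetric.

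For the first implication, I would begin by confirming that $\inducedequivof{\ametric}$ is an equivalence relation: reflexivity and symmetry of the relation are immediate from reflexivity and symmetry at zero of $\ametric$, and transitivity follows from the $\triangle$-inequality, since $\ametricof{\anelema}{\anelemb}=0$ and $\ametricof{\anelemb}{\anelemc}=0$ give $\ametricof{\anelema}{\anelemc}\leq 0$, which forces $\ametricof{\anelema}{\anelemc}=0$ as $\ametric$ is $\realsgtz$-valued. The crux is then to show that $\inducedquasimetric$ is well-defined, i.e.\ independent of the chosen class representatives. Given $\anelema\inducedequivof{\ametric}\anelema'$ and $\anelemb\inducedequivof{\ametric}\anelemb'$, symmetry at zero also yields $\ametricof{\anelema'}{\anelema}=0$ and $\ametricof{\anelemb}{\anelemb'}=0$, so two applications of the $\triangle$-inequality give $\ametricof{\anelema}{\anelemb}\leq \ametricof{\anelema}{\anelema'}+\ametricof{\anelema'}{\anelemb'}+\ametricof{\anelemb'}{\anelemb}=\ametricof{\anelema'}{\anelemb'}$, and by the symmetric argument the reverse inequality holds as well, so the two distances coincide. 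With well-definedness settled, the $\triangle$-inequality for $\inducedquasimetric$ is inherited verbatim from $\ametric$, and $\inducedquasimetricof{\classof{\anelema}}{\classof{\anelemb}}=0$ holds exactly when $\ametricof{\anelema}{\anelemb}=0$, i.e.\ when $\anelema\inducedequivof{\ametric}\anelemb$, i.e.\ when $\classof{\anelema}=\classof{\anelemb}$, which is the remaining quasimetric condition.

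For the second implication, I would unfold the definition $\inducedorimetricof{\anelema}{\anelemb}=\factorquasimetricof{\classof{\anelema}}{\classof{\anelemb}}$ and read the three orimetric axioms off the quasimetric axioms of $\factorquasimetric$. Reflexivity of $\inducedorimetric$ is the zero-law of $\factorquasimetric$ applied to $\classof{\anelema}=\classof{\anelema}$; for symmetry at zero, $\inducedorimetricof{\anelema}{\anelemb}=0$ forces $\classof{\anelema}=\classof{\anelemb}$ by the biconditional in that zero-law, hence $\inducedorimetricof{\anelemb}{\anelema}=\factorquasimetricof{\classof{\anelemb}}{\classof{\anelema}}=0$; and the $\triangle$-inequality for $\inducedorimetric$ is literally that of $\factorquasimetric$ evaluated on the classes of $\anelema$, $\anelemb$, $\anelemc$. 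Nothing further is required, since $\inducedorimetric$ is total on $\aset\times\aset$ and $\factorquasimetric$ is total on the quotient. In sum, the proof is bookkeeping; the single genuine verification is that $\inducedequivof{\ametric}$-equivalent representatives yield the same $\ametric$-distance, which the two-sided triangle estimate above provides.
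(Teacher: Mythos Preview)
Your proposal is correct and follows essentially the same approach as the paper, which splits the theorem into Lemmas~\ref{Lemma:MainProperty} and~\ref{Lemma:Structure2Ori} and carries out exactly the checks you describe. In particular, the paper also identifies the well-definedness of $\inducedquasimetric$ as the one substantive point and proves it by the same two-sided triangle-inequality estimate you give.
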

We split the proof into two lemmas.
\begin{lemma}\label[lemma]{Lemma:MainProperty}
Let $\ametric$ be an orimetric on $\aset$.
(i)\ \ $\inducedequivof{\ametric}$ is an equivalence.
(ii)\ \ The orimetric is invariant under the induced equivalence,   
 $\anelema_1\inducedequivof{\ametric}\anelema_2$ and $\anelemb_1\inducedequivof{\ametric}\anelemb_2$ imply $\ametricof{\anelema_1}{\anelemb_1}=\ametricof{\anelema_2}{\anelemb_2}$, which means the induced quasimetric is well-defined.
 (iii) $\inducedquasimetric$ is a quasimetric.  
\end{lemma}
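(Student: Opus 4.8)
The plan is to verify the three claims in order, each being a short reduction to the orimetric axioms (reflexivity, symmetry at zero, and the triangle inequality).

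\emph{Part (i).} I would show $\inducedequivof{\ametric}$ is reflexive, symmetric, and transitive. Reflexivity of the relation is immediate from reflexivity of the orimetric: $\ametricof{\anelema}{\anelema}=0$. Symmetry of the relation is exactly the symmetry-at-zero axiom: $\ametricof{\anelema}{\anelemb}=0$ gives $\ametricof{\anelemb}{\anelema}=0$. For transitivity, suppose $\ametricof{\anelema}{\anelemb}=0$ and $\ametricof{\anelemb}{\anelemc}=0$; the triangle inequality gives $\ametricof{\anelema}{\anelemc}\leq\ametricof{\anelema}{\anelemb}+\ametricof{\anelemb}{\anelemc}=0$, and since $\ametric$ is nonnegative, $\ametricof{\anelema}{\anelemc}=0$.

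\emph{Part (ii).} Assume $\anelema_1\inducedequivof{\ametric}\anelema_2$ and $\anelemb_1\inducedequivof{\ametric}\anelemb_2$, i.e.\ $\ametricof{\anelema_1}{\anelema_2}=0$ and $\ametricof{\anelemb_1}{\anelemb_2}=0$. Applying the triangle inequality twice, $\ametricof{\anelema_1}{\anelemb_1}\leq\ametricof{\anelema_1}{\anelema_2}+\ametricof{\anelema_2}{\anelemb_1}=\ametricof{\anelema_2}{\anelemb_1}\leq\ametricof{\anelema_2}{\anelemb_2}+\ametricof{\anelemb_2}{\anelemb_1}$. Now I need $\ametricof{\anelemb_2}{\anelemb_1}=0$: this follows from symmetry at zero applied to $\ametricof{\anelemb_1}{\anelemb_2}=0$ (via part (i), $\inducedequivof{\ametric}$ is symmetric). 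Hence $\ametricof{\anelema_1}{\anelemb_1}\leq\ametricof{\anelema_2}{\anelemb_2}$, and by the symmetric argument (swapping the roles of the indexed pairs, again using that the relation is symmetric) $\ametricof{\anelema_2}{\anelemb_2}\leq\ametricof{\anelema_1}{\anelemb_1}$, giving equality. This makes $\inducedquasimetricof{\classof{\anelema}}{\classof{\anelemb}}=\ametricof{\anelema}{\anelemb}$ independent of the chosen representatives, so $\inducedquasimetric$ is well-defined.

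\emph{Part (iii).} I would check the two quasimetric axioms. The triangle inequality for $\inducedquasimetric$ is inherited pointwise from that of $\ametric$ by picking representatives. For the zero axiom, $\inducedquasimetricof{\classof{\anelema}}{\classof{\anelemb}}=0$ means $\ametricof{\anelema}{\anelemb}=0$, i.e.\ $\anelema\inducedequivof{\ametric}\anelemb$, which is exactly $\classof{\anelema}=\classof{\anelemb}$; conversely if $\classof{\anelema}=\classof{\anelemb}$ then (using reflexivity and well-definedness from part (ii)) $\inducedquasimetricof{\classof{\anelema}}{\classof{\anelemb}}=\ametricof{\anelema}{\anelema}=0$. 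Nonnegativity is inherited from $\ametric$.

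The only step requiring any care is part (ii), where one must not forget that symmetry at zero (propagated to the relation in part (i)) is precisely what lets both chained triangle-inequality estimates go through in both directions; everything else is a direct unfolding of the axioms.
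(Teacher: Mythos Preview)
Your proposal is correct and follows essentially the same approach as the paper: part~(i) maps each equivalence axiom to the corresponding orimetric axiom, part~(ii) chains two triangle inequalities using symmetry at zero and then swaps indices for the reverse inequality, and part~(iii) inherits the triangle inequality from representatives and unfolds the zero axiom via the induced equivalence. The only cosmetic difference is that you make the role of nonnegativity and the two directions of the zero axiom in~(iii) slightly more explicit than the paper does.
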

\begin{proof}
(i) Reflexivity of the induced equivalence is by reflexivity for the orimetric. For symmetry, it suffices to have symmetry at zero. 
Transitivity is by the triangle inequality.

(ii) Recall that  $\anelema_1\inducedequivof{\ametric}\anelema_2$ means $\ametricof{\anelema_1}{\anelema_2}=0=\ametricof{\anelema_2}{\anelema_1}$. 
We can therefore calculate as follows:
\begin{align*}
&\ \ \ametricof{\anelema_1}{\anelemb_1}\\
\text{($\triangle$-inequality)}\quad \leq&\ \ \ametricof{\anelema_1}{\anelema_2}\ +\ \ametricof{\anelema_2}{\anelemb_1}\\
\text{($\ametricof{\anelema_1}{\anelema_2}=0$)}\quad =&\ \ \ametricof{\anelema_2}{\anelemb_1}\\
\text{($\triangle$-inequality)}\quad \leq&\ \ \ametricof{\anelema_2}{\anelemb_2}\ +\ \ametricof{\anelemb_2}{\anelemb_1}\\
\text{($\ametricof{\anelemb_2}{\anelemb_1}=0$)}\quad =&\ \ \ametricof{\anelema_2}{\anelemb_2}\ .
\end{align*}
Repeating the argument with the indices swapped yields $\ametricof{\anelema_2}{\anelemb_2}\leq \ametricof{\anelema_1}{\anelemb_1}$.
Together, the desired equality follows.

(iii) The triangle inequality immediately follows from the orimetric.
For equality at zero, we have
\begin{equation*} 
\inducedquasimetric(\classof{\anelema}, \classof{\anelemb})=0\quad\Leftrightarrow\quad \ametricof{\anelema}{\anelemb}=0\quad\Leftrightarrow\quad\anelema\inducedequivof{\ametric}\anelemb\quad\Leftrightarrow\quad\classof{\anelema}=\classof{\anelemb}\ .\qedhere
\end{equation*}
\end{proof}
\begin{lemma}\label[lemma]{Lemma:Structure2Ori}
Let $(\aset, \equiv, \factorquasimetric)$ be a factorization and pruning structure. 
Then $\inducedorimetric$ is an orimetric.
\end{lemma}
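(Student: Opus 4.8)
The plan is to verify the three orimetric axioms for $\inducedorimetric$ one by one, reading each off directly from the two defining properties of the quasimetric $\factorquasimetric$: its triangle inequality, and the characterization $\factorquasimetricof{\classof{\anelema}}{\classof{\anelemb}}=0 \iff \classof{\anelema}=\classof{\anelemb}$. No well-definedness issue arises, since $\inducedorimetric$ is defined through the canonical map $\anelema \mapsto \classof{\anelema}$, which is total on $\aset$.

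For reflexivity, I would observe that $\inducedorimetricof{\anelema}{\anelema}=\factorquasimetricof{\classof{\anelema}}{\classof{\anelema}}$, and since trivially $\classof{\anelema}=\classof{\anelema}$, the zero-characterization of $\factorquasimetric$ gives $\factorquasimetricof{\classof{\anelema}}{\classof{\anelema}}=0$. For symmetry at zero, suppose $\inducedorimetricof{\anelemb}{\anelema}=0$, i.e.\ $\factorquasimetricof{\classof{\anelemb}}{\classof{\anelema}}=0$. By the zero-characterization this forces $\classof{\anelemb}=\classof{\anelema}$, hence also $\classof{\anelema}=\classof{\anelemb}$, and applying the zero-characterization in the reverse direction yields $\factorquasimetricof{\classof{\anelema}}{\classof{\anelemb}}=0$, that is, $\inducedorimetricof{\anelema}{\anelemb}=0$. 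Here it is worth noting explicitly that this argument uses the full ``iff'' of the quasimetric's zero-characterization, and that it establishes only symmetry \emph{at zero}: away from zero $\factorquasimetric$ need not be symmetric, so neither does $\inducedorimetric$. For the triangle inequality, I would unfold the definition on the three classes $\classof{\anelema},\classof{\anelemb},\classof{\anelemc}$ and invoke the triangle inequality of $\factorquasimetric$: $\inducedorimetricof{\anelema}{\anelemc}=\factorquasimetricof{\classof{\anelema}}{\classof{\anelemc}}\leq\factorquasimetricof{\classof{\anelema}}{\classof{\anelemb}}+\factorquasimetricof{\classof{\anelemb}}{\classof{\anelemc}}=\inducedorimetricof{\anelema}{\anelemb}+\inducedorimetricof{\anelemb}{\anelemc}$.

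I do not expect any real obstacle: the statement is essentially a bookkeeping translation between the factorized and the unfactorized picture. The only point demanding a little care is the one flagged above — symmetry at zero must be derived from symmetry of \emph{equality of classes} via both directions of the quasimetric zero-characterization, and one must resist the temptation to claim full symmetry of $\inducedorimetric$, which does not follow and is in fact the whole reason orimetrics are weaker than metrics.
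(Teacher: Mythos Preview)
Your proposal is correct and follows essentially the same approach as the paper: both derive reflexivity and the triangle inequality directly from the corresponding properties of $\factorquasimetric$, and both obtain symmetry at zero by passing through the equality-at-zero characterization $\factorquasimetricof{\classof{\anelema}}{\classof{\anelemb}}=0\iff\classof{\anelema}=\classof{\anelemb}$ in each direction. Your write-up is in fact more explicit than the paper's, which simply remarks that reflexivity and the triangle inequality ``carry over'' and then displays the chain of implications for symmetry at zero.
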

\begin{proof}
Reflexivity and the triangle inequality carry over from the quasimetric.
Symmetry at zero is a consequence of the equality at zero property of quasimetrics:
\begin{equation*}
\inducedorimetricof{\anelema}{\anelemb} = 0 \ \ \Rightarrow\ \
    \factorquasimetricof{\classof{\anelema}}{\classof{\anelemb}} = 0 \ \ \Rightarrow\  \
    \classof{\anelema} = \classof{\anelemb}  \ \ \Rightarrow\ \
    \factorquasimetricof{\classof{\anelemb}}{\classof{\anelema}} = 0 \ \ \Rightarrow\  \
    \inducedorimetricof{\anelemb}{\anelema} = 0 \ .\qedhere
\end{equation*}
\end{proof} 
With this equivalence in place, we concentrate on orimetrics and illustrate the above definitions. 
We visualize an orimetric by a labeled and directed graph.
The nodes are the classes in the induced equivalence. 
They are labeled by the elements they contain. 
The graphs are complete, meaning we have a directed edge between every pair of nodes. 
The edge is labeled by the distance between the nodes in the corresponding classes.
Here we use 
\Cref{Lemma:MainProperty}~(ii). 
The orientation of the edges matters, 
which is what makes orimetrics more expressive than metrics. 
The triangle inequality says that one cannot reduce the distance from one class to another by traveling a detour. 

Consider the set $\rset=\setof{\rx, \ry, \ropx, \ropy}$.
The following graphs define orimetrics $\ametric_1$ and $\ametric_2$ on~$\rset$ that will serve as running example: 
\begin{center}
    \scalebox{0.8}{
    \begin{tikzpicture}
        \tikzset{
mycircle/.style={minimum size=10pt, fill opacity=0.15, text opacity=1},
          red/.style={color=red, text=black, fill=red},
          blue/.style={color=blue, text=black, fill=blue},
          rc/.style={rounded corners=0.8mm,line width=0.7pt},
        }

        \node (b) at (-7, 1.5) {$\ametric_1$};

        \node[circle, draw, label={left:$\aclass=\setof{\rx, \ry}$}, align=center, blue, mycircle] (left) at (-5, 0) {};
        \node[circle, draw, label={right:$\aclassp=\setof{\ropx, \ropy}$}, align=center, blue, mycircle] (right) at (-2.5, 0) {
            
        };
        \path[-stealth,rc,blue] (left) edge[bend left=20]  node[scale=1]
        [above] {$1$} (right);
        \path[-stealth,rc,red] (right) edge[bend left=20]  node[scale=1]
        [below] {$2$} (left);

        \node (b) at (3, 1.5) {$\ametric_2$};

        \node[circle, draw, align=center,  label={left:$\aclass=\setof{\rx, \ry}$},blue, mycircle] (leftb) at (5, 0) {};
        \node[circle, draw, align=center,  label={right:$\aclassp_1=\setof{\ropx}$}, blue, mycircle] (righttop) at (7.5, 1.25) {};
        \node[circle, draw, align=center, label={right:$\aclassp_2=\setof{\ropy}$},blue, mycircle] (rightbot) at (7.5, -1.25) {};

        \path[-stealth,rc,red] (leftb) edge[bend left=15]  node[scale=1]
        [above] {$1$} (righttop);
        \path[-stealth,rc,blue] (righttop) edge[bend left=15]  node[scale=1]
        [below] {$1$} (leftb);

        \path[-stealth,rc,blue] (leftb) edge[bend left=15]  node[scale=1]
        [above] {$2$} (rightbot);
        \path[-stealth,rc,red] (rightbot) edge[bend left=15]  node[scale=1]
        [below] {$1$} (leftb);

        \path[-stealth,rc,blue] (righttop) edge[bend left=15]  node[scale=1]
        [right] {$3$} (rightbot);
        \path[-stealth,rc,red] (rightbot) edge[bend left=15]  node[scale=1]
        [left] {$2$} (righttop);
    \end{tikzpicture}
    }
\end{center}
The leftmost node in the graph defining $\ametric_1$ represents the equivalence class $\aclass=\setof{\rx, \ry}$, which means $\ametric_1(\rx, \ry)=0$. 
The directed edge from $\aclass$ to the class $\aclassp=\setof{\ropx, \ropy}$ is labeled by $1$.
This means $\ametric_1(\anelema, \anelemb)=1$ for all $\anelema\in\aclass$ and $\anelemb\in\aclassp$.
That this distance is the same for all elements in the two classes is \Cref{Lemma:MainProperty}~(ii). 
We do not have symmetry, the edge from $\aclassp$ back to $\aclass$ has distance $2$ instead of $1$. 
The orimetric $\ametric_2$ splits up the equivalence class $\aclassp$.  
Consider the blue edges. 
We have the distance $\ametric_2(\ropx, \ropy)=3$. 
The triangle inequality says that we cannot take a shortcut by traveling from $\ropx$ to $\ropy$ via $\aclass$. 
If we assigned the blue edge from $\aclass$ to~$\aclassp_2$ a distance of $1$, the triangle inequality would fail and the graph would not represent an orimetric.

Orimetric $\ametric_1$ is a congruence wrt. $\anOperator$, but $\ametric_2$ is not. 
We have $\rx\inducedequivof{\ametric_2} \ry$ but $\ropx\not\inducedequivof{\ametric_2}\ropy$. 
Consider the set $\setof{\rx}$.
Neither $\ametric_1$ nor $\ametric_2$ is precise wrt. $\setof{\rx}$, because closing the set under the induced equivalence would add the element $\setof{\ry}$. 
Both orimetrics $\ametric_1$ and $\ametric_2$ are precise wrt. $\setof{\ropx, \ropy}$, but only $\ametric_1$ is unambiguous wrt.\ this set. 
Under $\ametric_2$, the set falls apart into two equivalence classes.

We discuss the motivation behind the properties we require of orimetrics.
Reflexivity is essential for the refinement.
It says that a distance different from zero is enough to rule out a program as a candidate solution. 
Metrics and also the weaker pseudo-metrics are typically symmetric~\cite{metricsbook}.  
By avoiding symmetry, we can assign a meaningful distance to relations that are oriented, 
as explained in the overview. 
We need symmetry at zero for symmetry of the induced equivalence. 
The triangle inequality is important for transitivity of the induced equivalence and for \Cref{Lemma:MainProperty}~(ii). 
The three properties seem to be what is needed in the context of synthesis.
We have not found this definition in the literature, and chose to name the object oriented metrics.

In the overview, we considered the ground truth to be a single function.
This assumption is actually justified, but needs some discussion. 
For now, all we know is that $\groundTruthFuncs$ is a set. 
The point is that an example-based specification defines the output only for some of the inputs, but leaves freedom for the remaining inputs.  
Consider now a candidate solution $\aFunc$ and $\groundTruthFunc_1, \groundTruthFunc_2\in\groundTruthFuncs$.  
It could be the case that $\ametricof{\aFunc}{\groundTruthFunc_1}=0$ but $\ametricof{\aFunc}{\groundTruthFunc_2}\neq 0$. 
This means if we picked the wrong ground truth function, we would not be able to show that we solved the synthesis task. 
The problem disappears if we assume that the orimetric is \emph{unambiguous} wrt. the ground truth. 
Not only will we have $\ametricof{\aFunc}{\groundTruthFunc_1}=0$ if and only if $\ametricof{\aFunc}{\groundTruthFunc_2}=0$. 
Lemma~\ref{Lemma:MainProperty}~(ii) even guarantees that the distance is the same, 
$\ametricof{\aFunc}{\groundTruthFunc_1}=\ametricof{\aFunc}{\groundTruthFunc_2}$, whether it is zero or not!

\paragraph{Remark}
We expect all orimetrics we work with to be unambiguous wrt.\ the ground truth $\groundTruthFuncs$.
With the above discussion, we can then concentrate on an arbitrary element $\groundTruthFunc\in\groundTruthFuncs$. \subsection{Search Spaces and Enumeration Order}
The two main parameters in bottom-up enumerative SyGuS solvers are the search space and the enumeration order. 
We now give them formal definitions. 
\begin{definition}
A \emph{search space} is a set of programs $\aprogSet \subseteq \completeLanguageOf{\agrammar}$.
We call $\completeLanguageOf{\agrammar}$ the full search space.
The semantics of a program is a function from $\allFuncs=(\variables \rightarrow \dataDomain) \rightarrow \dataDomain$. 
If we have an orimetric~$\ametric$ on this set $\allFuncs$, we speak of an \emph{oriented metric search space}. 
We also write  $(\aprogSet, \ametric)$ to make both components explicit. 
Since we are interested in bottom-up enumerative solving, the search space should be \emph{bottom-up enumerable}:  for every program $\aprog\in\aprogSet$ we should also find all subprograms in the set, $\dcof{\aprog} \subseteq\aprogSet$. 
We use $\buof{\aprogSet}$ to denote the largest bottom-up enumerable set contained in~$\aprogSet$. 
\end{definition}
\begin{lemma}\label[lemma]{lem:bu_closed}
The largest bottom-up enumerable set contained in $\aprogSet$ is well-defined:
the bottom-up enumerable sets are closed under arbitrary unions, and so
$\buof{\aprogSet}\subseteq\aprogSet$ exists and is unique.
\end{lemma}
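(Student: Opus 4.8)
The plan is to construct $\buof{\aprogSet}$ explicitly as the union of all bottom-up enumerable subsets of $\aprogSet$, and to show that the family of bottom-up enumerable sets is closed under arbitrary unions; this makes the union itself bottom-up enumerable and hence the largest such subset of $\aprogSet$.

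First I would unfold the definition: a set $\aprogSetp$ is bottom-up enumerable precisely when it is downward closed under the subprogram relation $\subprogrel$, that is, $\aprog\in\aprogSetp$ implies $\dcof{\aprog}\subseteq\aprogSetp$. The core step is then the closure-under-unions claim. Given any family $\setof{\aprogSet_i}_{i\in I}$ of bottom-up enumerable sets and a program $\aprog\in\bigcup_{i\in I}\aprogSet_i$, there is some $j\in I$ with $\aprog\in\aprogSet_j$; since $\aprogSet_j$ is bottom-up enumerable, $\dcof{\aprog}\subseteq\aprogSet_j\subseteq\bigcup_{i\in I}\aprogSet_i$. As $\aprog$ was arbitrary, the union is bottom-up enumerable. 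This also covers the degenerate case $I=\emptyset$, since the empty union equals $\emptyset$, which is vacuously bottom-up enumerable.

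Next I would set $\buof{\aprogSet} = \bigcup\setCompact{\aprogSetp}{\aprogSetp\subseteq\aprogSet\text{ and }\aprogSetp\text{ bottom-up enumerable}}$. By the closure claim this set is bottom-up enumerable; it is contained in $\aprogSet$ because each set in the union is; and it is the largest bottom-up enumerable subset of $\aprogSet$ by construction, as it contains every competitor. Uniqueness is then immediate: any two largest elements of the inclusion order on the bottom-up enumerable subsets of $\aprogSet$ must each contain the other, hence be equal.

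I do not expect a genuine obstacle here — this is the standard fact that the downward-closed subsets of a poset form a complete lattice under inclusion. The only point worth stating carefully is that the collection over which we take the union is legitimate and nonempty (it contains $\emptyset$), although, as noted above, the closure argument already handles the empty-collection case, so nonemptiness is not strictly needed.
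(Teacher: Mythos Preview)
Your proposal is correct and follows essentially the same approach as the paper: both arguments establish closure under arbitrary unions by picking a program in the union, locating it in one component $\aprogSet_j$, and using bottom-up enumerability of $\aprogSet_j$ to conclude that its subprograms lie in $\aprogSet_j$ and hence in the union. The paper's proof phrases this in terms of immediate children $\aprog=\anOperatorOf{\aprog_1,\ldots,\aprog_n}$ rather than the full downward closure $\dcof{\aprog}$, and it leaves the explicit construction of $\buof{\aprogSet}$ as the union of all bottom-up enumerable subsets (and the uniqueness argument) implicit; your write-up spells these out, which is arguably cleaner.
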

\noindent To give an example, consider the set $\rset=\setof{\rx, \ry, \ropx, \ropy}$ from above.  
This set is bottom-up enumerable,  $\buof{\rset}=\rset$. 
If we remove the element $\ry$, we have $\buof{\setof{\rx, \ropx, \ropy}}=\setof{\rx, \ropx}$. 
We fail to construct the program $\ropy$ by bottom-up enumeration.

When is it sound to restrict the search to a subset of programs? 
The following definition gives a sufficient condition. 
\begin{definition}
We call a search space $\aprogSetp$ \emph{complete} wrt. another search space $\aprogSet$ and orimetric $\ametric$, 
if for every program $\aprog\in\aprogSet$ there is a program $\aprogp\in\aprogSetp$ with $\semanticsOf{\aprog}\inducedequivof{\ametric}\semanticsOf{\aprogp}$. 
\end{definition}
\noindent 
In \Cref{lem:cegar:termination}, we give termination guarantees for working with complete search spaces.
Pruning methods based on observational equivalence satisfy completeness.
Also, the abstraction method in~\cite{blaze} is complete. 
We will deliberately use \emph{incomplete} search spaces.
\begin{definition}
An \emph{enumeration order} is a well-founded total order $\enumerationOrder{}\ \subseteq \aprogSet \times \aprogSet$ on the search space. 
It is a \emph{bottom-up} enumeration, if it is compatible with the subprogram relation,  for all $\aprog, \aprogp\in \aprogSet$, $\aprog\subprogrel\aprogp$ implies $\aprog\enumerationOrder{} \aprogp$. 
For the factorization, it will be important that the enumeration order is a \emph{precongruence}.
For every operation $\anOperator$, say of arity $1$, we expect that $\aprog\enumerationOrder{} \aprogp$ implies $\anOperator(\aprog)\enumerationOrder{} \anOperator(\aprogp)$. 
The definition generalizes to higher arities.
\end{definition}
\noindent We define three enumeration orders on the set $\rset$: 
\begin{align*}
\rx\enumerationOrder{0}\ropy\enumerationOrder{0}\ropx\enumerationOrder{0}\ry\quad\qquad
\rx\enumerationOrder{1}\ry\enumerationOrder{1}\ropy\enumerationOrder{1}\ropx\quad\qquad
\rx\enumerationOrder{2}\ropx\enumerationOrder{2}\ry\enumerationOrder{2}\ropy\ .
\end{align*}
The enumeration order $\enumerationOrder{0}$ is not bottom-up, because $\ropy\enumerationOrder{0}\ry$. 
It is also not a precongruence, because $\rx\enumerationOrder{0}\ry$ but $\ropy\enumerationOrder{0}\ropx$.
The enumeration order $\enumerationOrder{1}$ is bottom-up, but fails to be a precongruence for the same reason.
The enumeration order $\enumerationOrder{2}$ is bottom-up and a precongruence.

Given an enumeration order, we can construct (a subset of) the search space $\aprogSet$ algorithmically. 
We go through the programs as prescribed by the order, and only keep a program if the subprograms have already been listed. 
Let this procedure return the set $\enumof{\enumerationOrder{}}{\aprogSet}$. 
On the running example, this yields $\enumof{\enumerationOrder{0}}{\rset}=\setof{\rx, \ropx, \ry}$, $\enumof{\enumerationOrder{1}}{\rset}=\rset=\enumof{\enumerationOrder{2}}{\rset}$.
\begin{lemma}\label[lemma]{Lemma:BottomUpEnumeration}
If $\enumerationOrder{}$ is bottom-up, then $\enumof{\enumerationOrder{}}{\aprogSet}=\buof{\aprogSet}$. 
\end{lemma}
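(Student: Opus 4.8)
The plan is to prove the two inclusions $\enumof{\enumerationOrder{}}{\aprogSet}\subseteq\buof{\aprogSet}$ and $\buof{\aprogSet}\subseteq\enumof{\enumerationOrder{}}{\aprogSet}$ separately; write $E=\enumof{\enumerationOrder{}}{\aprogSet}$ and $B=\buof{\aprogSet}$. I recall that the procedure computing $E$ traverses the programs of $\aprogSet$ in the order $\enumerationOrder{}$, maintains a growing list, and appends the program $\aprog$ currently under consideration precisely when all (immediate) subprograms of $\aprog$ already occur on the list; in particular $E\subseteq\aprogSet$ by construction.

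For $E\subseteq B$, the idea is to show that $E$ is itself bottom-up enumerable, so that $E\subseteq B$ follows from \Cref{lem:bu_closed}, which says $B$ is the \emph{largest} bottom-up enumerable subset of $\aprogSet$. Bottom-up enumerability of $E$, that is $\dcof{\aprog}\subseteq E$ for every $\aprog\in E$, I would prove by induction along the subprogram relation $\subprogrel$ (well-founded, since programs are finite terms): when $\aprog$ is appended, every immediate subprogram $\aprogp$ of $\aprog$ is already on the list, hence $\aprogp\in E$; by the induction hypothesis $\dcof{\aprogp}\subseteq E$; and since $\dcof{\aprog}$ is $\setof{\aprog}$ together with the downward closures of the immediate subprograms, $\dcof{\aprog}\subseteq E$.

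For $B\subseteq E$, I would use well-founded induction along $\enumerationOrder{}$ (well-founded by the definition of an enumeration order). Fix $\aprog\in B$ and assume every $\aprogp\in B$ with $\aprogp\enumerationOrderNotEq{}\aprog$ already lies in $E$. Since $B$ is bottom-up enumerable and $\aprog\in B$, we have $\dcof{\aprog}\subseteq B$, so every immediate subprogram $\aprogp$ of $\aprog$ lies in $B$. Here the \emph{bottom-up} hypothesis on $\enumerationOrder{}$ enters: $\aprogp\subprogrel\aprog$ gives $\aprogp\enumerationOrder{}\aprog$, and since $\aprogp\neq\aprog$ and $\enumerationOrder{}$ is total, $\aprogp\enumerationOrderNotEq{}\aprog$. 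Thus each immediate subprogram of $\aprog$ lies in $B$ and is strictly earlier than $\aprog$, so by the induction hypothesis it is in $E$, i.e.\ already on the list when $\aprog$ is processed. Hence the procedure appends $\aprog$, so $\aprog\in E$. This closes the induction and gives $B\subseteq E$, hence $E=B$.

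I do not expect a genuine obstacle here; the proof is short. The two points needing care are (a) keeping the two inductions apart — one along $\subprogrel$ to see that $E$ is downward closed, one along $\enumerationOrder{}$ to see that all of $B$ is reached — and (b) the implication $\aprogp\subprogrel\aprog \wedge \aprogp\neq\aprog \Rightarrow \aprogp\enumerationOrderNotEq{}\aprog$, which is exactly where bottom-up compatibility of $\enumerationOrder{}$ is used and without which the statement would fail (as the order $\enumerationOrder{0}$ on $\rset$ already shows, where $\enumof{\enumerationOrder{0}}{\rset}=\setof{\rx,\ropx,\ry}$ but $\buof{\rset}=\rset$).
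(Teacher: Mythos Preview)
Your proposal is correct and follows the same two-inclusion strategy as the paper: first observe that $E$ is bottom-up enumerable and invoke the maximality of $B$ from \Cref{lem:bu_closed}, then use the bottom-up property of $\enumerationOrder{}$ for the reverse inclusion. Your well-founded induction along $\enumerationOrder{}$ for $B\subseteq E$ is in fact a cleaner formulation than the paper's contradiction argument, which as written glosses over why the missing subprogram must have a larger index rather than simply having been visited earlier and rejected.
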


 \subsection{Components of the CEGAR Loop}\label{sec:cegar}
\paragraph{$\pruneFunc$}
The function computes a 
ball around the ground truth element we have chosen. 
We now define this ball.
We again give the definition in the abstract  for flexibility, meaning we consider an arbitrary set $\aset$ with orimetric $\ametric$.
Let $\anelemgt\in\aset$ be the element that should serve as the center of the ball.
Let $\radius\in\realsgtz$ be the radius.
The \emph{ball} and the \emph{closed ball} around~$\anelemgt$ of radius~$\radius$ are defined by 
\begin{align*}
\ballof{\radius}{\anelemgt}\ \ =\ \ \setCollector{\anelema\in\aset}{\ametricof{\anelema}{\anelemgt}< \radius}\qquad\text{resp.}\qquad
\cballof{\radius}{\anelemgt}\ \ =\ \ \setCollector{\anelema\in\aset}{\ametricof{\anelema}{\anelemgt}\leq \radius}
\ .
\end{align*}
\noindent The ball is of course monotonic in the radius: the larger the more. 
To give an example, consider the set $\rset$ with orimetric $\ametric_1$ from above. 
Then $\ballof{0}{\rx}=\emptyset$, $\cballof{0}{\rx}=\setof{\rx, \ry}=\ballof{1}{\rx}=\cballof{1}{\rx}$, and $\cballof{2}{\rx}=\rset$.  Moreover, $\ballof{1}{\ropx}=\setof{\ropx, \ropy}$ and $\cballof{1}{\ropx}=\rset$. 

Function $\pruneFunc$ constructs the elements in the ball bottom-up, using \Cref{Lemma:BottomUpEnumeration}.
This means the enumeration order should be bottom-up, and the result of the enumeration will not be the ball itself but $\buof{\ballof{\radius}{\anelemgt}}$, the largest bottom-up enumerable set that lives inside the ball.
\paragraph{$\factorizeFunc$}
We factorize the search space $\aprogSet=\buof{\ballof{\radius}{\anelemgt}}$ that we have just determined along the equivalence induced by the orimetric, $\factorize{\aprogSet}{\inducedequivof{\ametric}}$. 
This means programs are put into an equivalence class if their distance is zero. 
How do we represent the equivalence classes in a way that can be manipulated algorithmically? 
The idea is to use a representative system that only keeps the $\enumerationOrder{}$-least program from each equivalence class. 
Formally, the $\factorizeFunc$ function is defined as follows: \begin{align*}
        \factorizeFuncOf{\aprogSet}{\functionDistanceMeasure}{\enumerationOrder{}}\
        = \ 
        \setCollector{\aprog \in \aprogSet}
        {\forall \aprogp \in \aprogSet: \aprogp \enumerationOrderNotEq{} \aprog \Rightarrow \semanticsOf{\aprog}\not\inducedequivof{\ametric}\semanticsOf{\aprogp}}\ .
\end{align*}

We return to the set $\rset=\setof{\rx, \ry, \ropx, \ropy}$.
Recall that we defined the orimetrics $\ametric_1$ and~$\ametric_2$, but only the former is a congruence.
We also have the enumeration orders $\enumerationOrder{1}$ and $\enumerationOrder{2}$, but only the latter is a precongruence.
Now $\factorizeFuncOf{\rset}{\ametric_1}{\enumerationOrder{1}}=\setof{\rx, \ropy}$, $\factorizeFuncOf{\rset}{\ametric_2}{\enumerationOrder{2}}=\setof{\rx, \ropx, \ropy}$, and $\factorizeFuncOf{\rset}{\ametric_1}{\enumerationOrder{2}}=\setof{\rx, \ropx}$. 
All sets are complete wrt. $\rset$ and the given orimetric. 
However, only the last set is bottom-up enumerable. 
We see,
the correct use of $\factorizeFunc$ is intricate.
If we use $\factorizeFunc$ with an orimetric that is not a congruence or an enumeration order that is not a precongruence, we can lose bottom-up enumerability.
The following lemma states sufficient conditions 
for $\factorizeFunc$ to return a search space that is bottom-up enumerable.
\begin{lemma}\label[lemma]{Lemma:Enumeration}
(i) Let $\enumerationOrder{}$ be an enumeration order on $\aprogSet$. 
Then $\factorizeFuncOf{\aprogSet}{\functionDistanceMeasure}{\enumerationOrder{}}$ is complete wrt.~$\aprogSet$ under $\functionDistanceMeasure$. 
(ii) If $\aprogSet$ is bottom-up enumerable, $\enumerationOrder{}$ is additionally a precongruence, and $\ametric$ is a congruence, then  $\factorizeFuncOf{\aprogSet}{\functionDistanceMeasure}{\enumerationOrder{}}$ is bottom-up enumerable.
\end{lemma}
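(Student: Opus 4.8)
The plan for part~(i) is to pick, for each $\aprog\in\aprogSet$, the canonical representative of its $\inducedequivof{\functionDistanceMeasure}$-class. Concretely, I would set $C=\setCompact{\aprogp\in\aprogSet}{\semanticsOf{\aprogp}\inducedequivof{\functionDistanceMeasure}\semanticsOf{\aprog}}$; this is nonempty since $\aprog\in C$, and because $\enumerationOrder{}$ is a well-founded total order, $C$ has a $\enumerationOrder{}$-minimum $\aprogp^{*}$. Then $\aprogp^{*}\in\factorizeFuncOf{\aprogSet}{\functionDistanceMeasure}{\enumerationOrder{}}$: if some $\aprogp\in\aprogSet$ with $\aprogp\enumerationOrderNotEq{}\aprogp^{*}$ had $\semanticsOf{\aprogp}\inducedequivof{\functionDistanceMeasure}\semanticsOf{\aprogp^{*}}$, then transitivity of $\inducedequivof{\functionDistanceMeasure}$ (\Cref{Lemma:MainProperty}~(i)) would place $\aprogp$ in $C$ strictly below its minimum. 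Since also $\semanticsOf{\aprogp^{*}}\inducedequivof{\functionDistanceMeasure}\semanticsOf{\aprog}$, the factorized set is complete wrt.\ $\aprogSet$ under $\functionDistanceMeasure$. This part is routine.

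For part~(ii), I would show directly that $\aprogSetp\defineeq\factorizeFuncOf{\aprogSet}{\functionDistanceMeasure}{\enumerationOrder{}}$ is downward closed under $\subprogrel$. Proceeding by induction on program size and using $\dcof{\aprog}=\setof{\aprog}\cup\bigcup_{j}\dcof{\aprog_{j}}$ for $\aprog=\anOperator(\aprog_{1},\dots,\aprog_{k})$, it suffices to prove the one-step claim: every direct subprogram $\aprog_{j}$ of an $\aprog\in\aprogSetp$ again lies in $\aprogSetp$. Bottom-up enumerability of $\aprogSet$ gives $\aprog_{j}\in\aprogSet$. Arguing by contradiction, assume $\aprog_{j}\notin\aprogSetp$; by definition of $\factorizeFunc$ there is then $\aprogpp\in\aprogSet$ with $\aprogpp\enumerationOrderNotEq{}\aprog_{j}$ and $\semanticsOf{\aprogpp}\inducedequivof{\functionDistanceMeasure}\semanticsOf{\aprog_{j}}$, and I put $\aprog'\defineeq\anOperator(\aprog_{1},\dots,\aprog_{j-1},\aprogpp,\aprog_{j+1},\dots,\aprog_{k})$. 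Compositionality of $\semanticsOf{\cdot}$ together with the congruence of $\functionDistanceMeasure$ immediately yields $\semanticsOf{\aprog'}\inducedequivof{\functionDistanceMeasure}\semanticsOf{\aprog}$. Once we also know $\aprog'\in\aprogSet$, the precongruence of $\enumerationOrder{}$ applied to the single operator layer $\anOperator$ (with the other arguments held fixed), plus the fact that plugging distinct subterms into one fixed context yields distinct terms, gives $\aprog'\enumerationOrderNotEq{}\aprog$; then $\aprog'$ witnesses a violation of $\aprog\in\aprogSetp$. So everything hinges on the membership $\aprog'\in\aprogSet$.

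The step I expect to carry the weight is exactly this membership. Here I would use that the search spaces we factorize are produced by $\pruneFunc$ and hence of the form $\aprogSet=\buof{\ballof{\radius}{\anelemgt}}$, the bottom-up closure of a ball. Since $\semanticsOf{\aprog'}\inducedequivof{\functionDistanceMeasure}\semanticsOf{\aprog}$, \Cref{Lemma:MainProperty}~(ii) gives $\functionDistanceMeasureOf{\semanticsOf{\aprog'}}{\anelemgt}=\functionDistanceMeasureOf{\semanticsOf{\aprog}}{\anelemgt}$, so $\aprog'$ lies in the same ball $\ballof{\radius}{\anelemgt}$ as $\aprog$; and its subprograms are already inside, since $\dcof{\aprogpp}\subseteq\aprogSet$ (as $\aprogpp\in\aprogSet$ and $\aprogSet$ is bottom-up enumerable) and $\dcof{\aprog_{i}}\subseteq\dcof{\aprog}\subseteq\aprogSet$ for $i\neq j$. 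Hence $\dcof{\aprog'}\subseteq\ballof{\radius}{\anelemgt}$, so $\aprog'\in\buof{\ballof{\radius}{\anelemgt}}=\aprogSet$, closing the contradiction. As a final sanity check I would confirm that the two failure modes flagged just before the lemma --- a non-congruence orimetric and a non-precongruence enumeration order --- are precisely the two hypotheses this argument relies on: congruence enters when deriving $\semanticsOf{\aprog'}\inducedequivof{\functionDistanceMeasure}\semanticsOf{\aprog}$ (and thence the ball-membership of $\aprog'$), and precongruence enters when deriving $\aprog'\enumerationOrderNotEq{}\aprog$.
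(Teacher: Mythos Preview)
Your argument for part~(i) is correct and essentially the paper's: it shows $\factorizeFunc$ returns a representative system by picking the $\enumerationOrder{}$-least element of each class, using well-foundedness. For part~(ii), the skeleton --- assume a child $\aprog_j$ of $\aprog\in\aprogSetp$ is not in $\aprogSetp$, replace it by a strictly smaller equivalent $\aprogpp$, then use congruence of $\functionDistanceMeasure$ and precongruence of $\enumerationOrder{}$ to contradict $\aprog\in\aprogSetp$ --- is exactly the paper's proof.

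Where you diverge is the membership step $\aprog'\in\aprogSet$. You are right that the definition of $\factorizeFunc$ quantifies over witnesses in $\aprogSet$, so this is needed; the paper's proof does not spell it out. But your fix imports the extra hypothesis $\aprogSet=\buof{\ballof{\radius}{\anelemgt}}$, which is \emph{not} part of the lemma, so what you end up proving is only a specialization. The cleaner route, and the one the paper implicitly relies on, is to extract the membership directly from the precongruence hypothesis: the enumeration order is a relation $\enumerationOrder{}\ \subseteq \aprogSet\times\aprogSet$, and precongruence says that $\aprogpp\enumerationOrderNotEq{}\aprog_j$ implies $\aprog'\enumerationOrderNotEq{}\aprog$ as a fact about $\enumerationOrder{}$ itself. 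For the pair $(\aprog',\aprog)$ to lie in $\enumerationOrder{}$, both components must already lie in $\aprogSet$; hence $\aprog'\in\aprogSet$ for free. With this reading your detour through the ball structure (and the invocation of \Cref{Lemma:MainProperty}~(ii)) is unnecessary, and the proof goes through for arbitrary bottom-up enumerable $\aprogSet$, as stated.
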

\noindent The examples we have just given show that \Cref{Lemma:Enumeration}~(ii) does not hold without the precongruence and the congruence requirements. 
The proof of the result can be found in the appendix. 
For \Cref{Lemma:Enumeration}~(i), we use the well-foundedness of the enumeration order, which guarantees the existence of a least element in each equivalence class. 

\paragraph{$\findProgFunc$}

The $\findProgFunc$ function iterates through the search space along the enumeration order until it finds the first program that, from the perspective of the orimetric, solves the synthesis task. 
It starts with the $\enumerationOrder{}$-minimal program $\aprog\in\aprogSet$. 
Then it sets $\aprog$ to the successor of $\aprog$ until 
$\semanticsOf{\aprog} \inducedequivof{\ametric} \groundTruthFunc$ holds, 
upon which $\aprog$ is returned.
To compute the successor of $\aprog$, we use the procedure $\enumof{\enumerationOrder{}}{\aprogSet}$.
Assuming that $\enumerationOrder{}$ is bottom-up, 
Lemma~\ref{Lemma:BottomUpEnumeration} shows that $\findProgFunc$ explores $\buof{\aprogSet}$.
Hence, to make sure we inspect the entire search space, $\aprogSet$ must be bottom-up enumerable, $\buof{\aprogSet}=\aprogSet$.
This is where we will use Lemma~\ref{Lemma:Enumeration}(ii).

We illustrate how our theory of enumeration orders can be combined with deduction, more precisely, the deduction technique in DryadSynth~\cite{dryadsynth}.
Consider a program sketch $\asketch$ of arity $n+1$. 
If we find the last parameter of $\asketch$ that is needed to solve the synthesis task, then the instantiation of the sketch should be the next program to enumerate.
To make this formal, assume we currently explore program $\aprog$ and we have already enumerated $\aprog_1, \ldots, \aprog_{n}\enumerationOrder{} \aprog$.
If $\aprogp=\asketchOf{\aprog_1, \ldots, \aprog, \ldots, \aprog_n}$ 
solves the synthesis task, $\semanticsOf{\aprogp}=\groundTruthFunc$, then it should be the immediate successor of $\aprog$, meaning $\aprogp=\successorInOf{\enumerationOrder{}}{\aprog}$. 
If we relax the requirement so that the filled sketch should be the successor of the last program of the same size as $\aprog$, then we get the instantiations from Duet~\cite{duet} and Simba~\cite{simba}.
Checking whether there are $\aprog_1,\ldots, \aprog_n \enumerationOrder{} \aprog$ so that $\semanticsOf{\aprogp}=\groundTruthFunc$ 
can be done efficiently for some operators.
For the $\xorop$ operator for example, only a lookup in a hashmap storing all programs with their values as keys is needed~\cite{dryadsynth}.
For the $\iteop$ operator, a decision tree can be learned, as done in EUSolver \cite{eusolver}.

This modification of the enumeration order may ruin the precongruence property,
and one may be concerned that it jeopardizes the guarantees given by \Cref{Lemma:Enumeration}. 
The point is that we only see the subset of programs up to and including the first deduction step.
On these programs, precongruence and the guarantees hold.
Afterwards, $\findProgFunc$ terminates and so the guarantees are not needed for the remaining programs.

\paragraph{$\refineFunc$}
We leave the exact instantiation of the function to the user and present our instantiation in \Cref{Section:Instantiation}.
Here, we only give two properties that the refinement scheme must satisfy. 
Let $\functionDistanceMeasurep = \refineFuncOf{\functionDistanceMeasure}{\aprog}{\groundTruthFunc}$ be the refined orimetric.
We want
\begin{align}
    \inducedequivof{\functionDistanceMeasurep}\ \subseteq\ \inducedequivof{\functionDistanceMeasure}
    \qquad\qquad
    \mathit{and}
    \qquad\qquad
    \semanticsOf{\aprog} \not\inducedequivof{\functionDistanceMeasurep}\groundTruthFunc\ . \tag{refinement}\label{Equation:Refinement}   \end{align}
The first property states that the new orimetric is more precise in that the induced equivalence relates fewer programs. 
The second says that the new orimetric differentiates between $\semanticsOf{\aprog}$ and $\groundTruthFunc$. 
\begin{lemma}[Progress]\label[lemma]{lem:progress}
If function $\refineFunc$ guarantees the properties in \eqref{Equation:Refinement}, then the algorithm will never explore the same program in two different loop iterations. 
\end{lemma}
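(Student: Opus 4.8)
The plan is to argue directly that, once $\findProgFunc$ has returned a spurious candidate $\aprog$ in some loop iteration, no later iteration can return $\aprog$ again; this is exactly the claim, once we agree that \emph{exploring a program in an iteration} means that $\findProgFunc$ returns it and hands it to the correctness check. First I would fix notation for a run of the CEGAR loop: let $\functionDistanceMeasure_0,\functionDistanceMeasure_1,\functionDistanceMeasure_2,\dots$ be the sequence of approximate orimetrics used in successive iterations and let $\aprog_j$ be the candidate returned by $\findProgFunc$ in iteration $j$, so that $\functionDistanceMeasure_{j+1}=\refineFuncOf{\functionDistanceMeasure_j}{\aprog_j}{\groundTruthFunc}$ whenever the loop proceeds past iteration $j$. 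The one structural observation I would make up front is that the only loop component that modifies the orimetric is $\refineFunc$: $\pruneFunc$ and $\factorizeFunc$ only restrict the current search space, and $\updateEnumOrder$ only rewrites the enumeration order. Hence $\functionDistanceMeasure_{j+1}$ arises from $\functionDistanceMeasure_j$ by a single application of $\refineFunc$.

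Next I would isolate the two facts that do the work. (a) By the definition of $\findProgFunc$, a program is returned in iteration $j$ only when its semantics is $\inducedequivof{\functionDistanceMeasure_j}$-equivalent to $\groundTruthFunc$; in particular $\semanticsOf{\aprog_j}\inducedequivof{\functionDistanceMeasure_j}\groundTruthFunc$. (b) If the loop continues past iteration $j$, then $\aprog_j$ was spurious and $\refineFunc$ was invoked, so properties~\eqref{Equation:Refinement} give both $\semanticsOf{\aprog_j}\not\inducedequivof{\functionDistanceMeasure_{j+1}}\groundTruthFunc$ and $\inducedequivof{\functionDistanceMeasure_{j+1}}\subseteq\inducedequivof{\functionDistanceMeasure_j}$.

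The remaining step is a short monotonicity induction: using the first half of \eqref{Equation:Refinement} at each iteration and transitivity of $\subseteq$, one gets $\inducedequivof{\functionDistanceMeasure_m}\subseteq\inducedequivof{\functionDistanceMeasure_{j+1}}$ for every $m\ge j+1$. Since non-membership in a relation is inherited by all of its subrelations, fact (b) then yields $\semanticsOf{\aprog_j}\not\inducedequivof{\functionDistanceMeasure_m}\groundTruthFunc$ for every $m\ge j+1$. Now suppose, towards a contradiction, that the same program is explored in two iterations $j<k$, i.e.\ $\aprog_j=\aprog_k$. Since iteration $k$ occurs at all, the loop proceeded past $j$, so the previous line (with $m=k$) gives $\semanticsOf{\aprog_k}=\semanticsOf{\aprog_j}\not\inducedequivof{\functionDistanceMeasure_k}\groundTruthFunc$; but fact (a) applied to iteration $k$ forces $\semanticsOf{\aprog_k}\inducedequivof{\functionDistanceMeasure_k}\groundTruthFunc$, a contradiction. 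The same computation actually shows that no program with the same semantics as $\aprog_j$ is ever returned again.

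I expect the mathematics to be routine here --- a single induction plus the monotonicity of $\subseteq$ --- so the main thing to get right is the modeling: being precise that \emph{explore} refers to the candidate handed to the correctness check (programs merely scanned while searching may of course recur), and confirming that no loop component can undo a refinement (in fact $\pruneFunc$ and $\factorizeFunc$ can only shrink later search spaces, which only helps). It is worth remarking that reflexivity of orimetrics is precisely what keeps the hypothesis of the lemma consistent --- one may demand $\semanticsOf{\aprog}\not\inducedequivof{\functionDistanceMeasurep}\groundTruthFunc$ only when $\aprog$ is genuinely spurious, since $\groundTruthFunc\inducedequivof{\functionDistanceMeasurep}\groundTruthFunc$ always holds --- but reflexivity is not otherwise used in the argument, which rests only on \eqref{Equation:Refinement} and the termination condition of $\findProgFunc$.
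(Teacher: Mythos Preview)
Your proof is correct and follows essentially the same approach as the paper's own proof: both argue by contradiction, use that $\findProgFunc$ only returns programs equivalent to $\groundTruthFunc$ under the current orimetric, invoke the second refinement property to rule out the candidate immediately after, and then chain the first refinement property (the inclusion $\inducedequivof{\functionDistanceMeasure_{j+1}}\subseteq\inducedequivof{\functionDistanceMeasure_j}$) across iterations to propagate this to all later orimetrics. Your version is in fact a bit more explicit about the indexing and the monotonicity induction than the paper's, and your remark clarifying what ``explore'' means is a useful addition, but the argument is the same.
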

\paragraph{$\updateEnumOrder$}
The enumeration order is updated by $\updateEnumOrder$ based on the programs seen in $\findProgFunc$.
Programs and operators which promise to be more valuable for a solution to $\groundTruthFunc$ are preferred in the updated enumeration order.
The exact instantiation is left to the user, and we present our instantiation in \Cref{Section:Instantiation}.

Putting everything together, we obtain the above CEGAR loop. 
The following correctness guarantee immediately follows from the termination condition of the loop. 
\begin{lemma} 
Consider the \sygus{} problem $(\agrammar, (\inputexamples, \outputexamples))$. 
If the CEGAR loop (\Cref{fig:cegar}) terminates, it returns a program $\aprog \in \completeLanguageOf{\agrammar}$ that satisfies the specification, $\semanticsOfAppliedTo{\aprog}{\inputexamples}=\outputexamples$. 
\end{lemma}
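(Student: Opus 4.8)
The plan is to read the claim directly off the control flow of the CEGAR loop in \Cref{fig:cegar}. The loop has exactly one exit: the edge from the ``$\aprog$ correct?'' node to the output node ``synthesized program $\aprog$'' (the edge carrying the green checkmark); every other edge feeds back into the loop, through $\refineFunc$ and $\updateEnumOrder$. Hence, if the loop terminates, it terminates along this edge, and this edge is taken precisely when the program $\aprog$ currently under consideration passes the correctness check. So the whole argument reduces to unfolding what ``$\aprog$ correct?'' and the functions feeding into it actually mean.

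First I would trace where $\aprog$ comes from. The program handed to the correctness check is the one returned by $\findProgFuncOf{\aprogSetp}{\approxmetric}{\groundTruthFunc}{\enumerationOrder{}\cap(\aprogSetp\times\aprogSetp)}$; by the definition of $\findProgFunc$ in \Cref{sec:cegar}, it is a program $\aprog\in\aprogSetp$ with $\semanticsOf{\aprog}\inducedequivof{\approxmetric}\groundTruthFunc$. Since $\approxmetric$ is only approximate, this does not yet imply $\aprog\models\aspecification$, which is exactly why the explicit check sits between $\findProgFunc$ and the output. The check succeeds iff $\aprog\models\aspecification$; unfolding the satisfaction relation for example-based specifications from \Cref{sec:preliminaries}, this is $\semanticsOfAppliedTo{\aprog}{\aninput}=\anoutput$ for all $(\aninput,\anoutput)\in\aspecification$, which in the vectorised notation of the statement is precisely $\semanticsOfAppliedTo{\aprog}{\inputexamples}=\outputexamples$. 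For membership in the complete language, I would note that $\aprog\in\aprogSetp$ and that $\aprogSetp=\factorize{\aprogSet}{\inducedequivof{\approxmetric}}$ with $\aprogSet=\buof{\ballof{\radius}{\groundTruthFunc}}$ produced by $\pruneFuncOf{\completeLanguageOf{\agrammar}}{\approxmetric}{\groundTruthFunc}$; each of $\pruneFunc$, $\buof{\cdot}$ and $\factorizeFunc$ only discards programs (the last selecting $\enumerationOrder{}$-least representatives), so $\aprog\in\aprogSetp\subseteq\aprogSet\subseteq\completeLanguageOf{\agrammar}$.

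There is essentially no difficult step here: the statement is an immediate consequence of the loop's termination condition, and the only care needed is to unfold the definitions of $\findProgFunc$, of the correctness check, and of $\models$ so that the approximate orimetric $\approxmetric$ used internally does not leak into the stated guarantee. The genuinely substantive question---whether the loop \emph{does} reach this exit---is the subject of \Cref{lem:progress} together with the termination results, and is deliberately not part of this lemma.
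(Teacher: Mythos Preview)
Your proposal is correct and matches the paper's own treatment: the paper simply states that the guarantee ``immediately follows from the termination condition of the loop'' without further elaboration, and you have merely unfolded that one-line justification into its constituent parts (single exit edge, meaning of the correctness check, and $\aprogSetp\subseteq\completeLanguageOf{\agrammar}$). There is nothing to add.
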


We also have a termination guarantee. It puts together Lemmas~\ref{Lemma:BottomUpEnumeration},~\ref{Lemma:Enumeration}, and~\ref{Lemma:MainProperty}(ii). 
\begin{lemma}\label[lemma]{lem:cegar:termination}
    Assume $(\agrammar, (\inputexamples, \outputexamples))$ is solvable, 
    $\enumerationOrder{}$ is bottom-up and a precongruence,
    $\functionDistanceMeasure$ is a congruence and precise as well as unambiguous wrt. $\groundTruthFuncs$, and
    $\pruneFunc$ returns a bottom-up enumerable search space that is complete wrt.\ $\completeLanguageOf{\agrammar}$ and $\functionDistanceMeasure$.
    Then CEGAR will terminate in the first iteration.  \end{lemma}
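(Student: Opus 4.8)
The plan is to show that the candidate produced in the first pass of $\findProgFunc$ already satisfies the specification, so the ``correct?'' test in \Cref{fig:cegar} succeeds and the loop returns after one iteration (with the initial orimetric $\functionDistanceMeasure$). First I would use solvability to fix a program $\aprog^{\star}\in\languageOf{\agrammar}\subseteq\completeLanguageOf{\agrammar}$ with $\semanticsOf{\aprog^{\star}}\models\aspecification$, i.e.\ $\semanticsOf{\aprog^{\star}}\in\groundTruthFuncs$; in particular $\groundTruthFuncs\neq\emptyset$, so the representative $\groundTruthFunc\in\groundTruthFuncs$ on which the loop operates exists. Since $\functionDistanceMeasure$ is unambiguous wrt.\ $\groundTruthFuncs$, all elements of $\groundTruthFuncs$ lie in a single $\inducedequivof{\functionDistanceMeasure}$-class, and therefore $\semanticsOf{\aprog^{\star}}\inducedequivof{\functionDistanceMeasure}\groundTruthFunc$.

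Next I would trace $\aprog^{\star}$ through pruning and factorization. Let $\aprogSet$ be the output of $\pruneFunc$: by hypothesis it is bottom-up enumerable and complete wrt.\ $\completeLanguageOf{\agrammar}$ and $\functionDistanceMeasure$, so there is $\aprogp\in\aprogSet$ with $\semanticsOf{\aprog^{\star}}\inducedequivof{\functionDistanceMeasure}\semanticsOf{\aprogp}$. Let $\aprogSetp=\factorizeFuncOf{\aprogSet}{\functionDistanceMeasure}{\enumerationOrder{}}$. \Cref{Lemma:Enumeration}(i) gives that $\aprogSetp$ is complete wrt.\ $\aprogSet$ under $\functionDistanceMeasure$, and \Cref{Lemma:Enumeration}(ii) --- applicable since $\aprogSet$ is bottom-up enumerable, $\enumerationOrder{}$ is a precongruence, and $\functionDistanceMeasure$ is a congruence --- gives that $\aprogSetp$ is bottom-up enumerable. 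Chaining the two completeness statements and the class membership from the previous paragraph via transitivity of $\inducedequivof{\functionDistanceMeasure}$ (\Cref{Lemma:MainProperty}(i)), I obtain a program $\aprogpp\in\aprogSetp$ with $\semanticsOf{\aprogpp}\inducedequivof{\functionDistanceMeasure}\groundTruthFunc$.

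Then I would analyze $\findProgFunc$ on $\aprogSetp$ with the order $\enumerationOrder{}$ restricted to $\aprogSetp$. Since $\enumerationOrder{}$ is bottom-up and $\aprogSetp$ is bottom-up enumerable, \Cref{Lemma:BottomUpEnumeration} gives $\enumof{\enumerationOrder{}}{\aprogSetp}=\buof{\aprogSetp}=\aprogSetp$, so $\findProgFunc$ scans all of $\aprogSetp$ and returns the $\enumerationOrder{}$-least $\aprog\in\aprogSetp$ with $\semanticsOf{\aprog}\inducedequivof{\functionDistanceMeasure}\groundTruthFunc$. Such an $\aprog$ is well-defined: the set of qualifying programs is nonempty (it contains $\aprogpp$), $\enumerationOrder{}$ is a well-founded total order so it has a least element, and the successor scan reaches it in finitely many steps. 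Finally, from $\semanticsOf{\aprog}\inducedequivof{\functionDistanceMeasure}\groundTruthFunc$, $\groundTruthFunc\in\groundTruthFuncs$, and precision of $\functionDistanceMeasure$ wrt.\ $\groundTruthFuncs$ (the closure of $\groundTruthFuncs$ under $\inducedequivof{\functionDistanceMeasure}$ adds no elements) I conclude $\semanticsOf{\aprog}\in\groundTruthFuncs$, i.e.\ $\aprog\models\aspecification$. Hence the correctness check succeeds and CEGAR halts in the first iteration.

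The delicate points I expect are the following. The main one is that factorization must not destroy bottom-up enumerability while still preserving completeness; this is exactly the content of \Cref{Lemma:Enumeration}, and one has to verify that the lemma's hypotheses (precongruence of $\enumerationOrder{}$, congruence of $\functionDistanceMeasure$, bottom-up enumerability of the input) are precisely the ones supplied here. A secondary subtlety is termination of the linear scan inside $\findProgFunc$: well-foundedness of $\enumerationOrder{}$ only guarantees a least qualifying program, not that the scan reaches it, so one additionally relies on initial segments of $\enumerationOrder{}$ being finite, as they are for the size-graded orders used in practice. Lastly, one should keep the two roles of the hypotheses on $\functionDistanceMeasure$ apart: unambiguity is what licenses replacing the set $\groundTruthFuncs$ by a single representative $\groundTruthFunc$ everywhere, while precision is what upgrades ``distance zero to $\groundTruthFunc$'' to ``satisfies $\aspecification$'' in the final step.
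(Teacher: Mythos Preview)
Your argument is correct and follows the same route as the paper: completeness of $\pruneFunc$ and \Cref{Lemma:Enumeration} place a program at distance zero to $\groundTruthFunc$ inside the factorized, bottom-up enumerable search space, \Cref{Lemma:BottomUpEnumeration} ensures $\findProgFunc$ reaches it, and precision certifies it as a genuine solution. Your exposition is considerably more explicit than the paper's terse proof, and your remark that well-foundedness alone does not guarantee finite initial segments is a genuine subtlety that the paper leaves implicit.
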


  \section{Contribution \rom{2} -- Constructing Oriented Metric Search Spaces}\label{Section:Instantiation}
We define new orimetrics and explain how these orimetrics can be refined. 
Our focus will be on strings and bitvectors. 
We do not have new instantiations for the enumeration order, but work with a standard size-based definition. 
However, we enhance this order with new deduction strategies for strings and bitvectors. 
As it turns out, the definition of orimetrics and the design of deduction strategies are somewhat related, so we discuss deduction together with the orimetrics.

\subsection{Lifting Oriented Metrics to Function Spaces}\label{sec:lifting}
For SyGuS, we are interested in orimetrics on the function space $\allFuncs=(\variables \rightarrow \dataDomain) \rightarrow \dataDomain$.
However, as we have seen in the overview, it is convenient to construct such orimetrics on functions by lifting orimetrics on the data domain.
This lifting is actually the reason why we defined orimetrics for arbitrary sets.  
We now make the lifting explicit. 

Let $\ametrictilde$ be an orimetric on the data domain~$\dataDomain$. 
Our goal is to lift $\ametrictilde$ to the function space $\asetx\rightarrow \dataDomain$.   
The set $\asetx$ should be understood as the set of (inputs from) all examples, although the definition does not rely on this understanding.
We define the lifting parametric in a set $\asety\subseteq \asetx$. 
Intuitively, this is the subset of examples we currently consider, and the refinement loop will then make $\asety$ larger and larger.
We define $\ametricpar{\asety}:(\asetx\rightarrow \dataDomain)\times(\asetx\rightarrow\dataDomain)\rightarrow \realsgtz$ by
\begin{align*}
\ametricparof{\asety}{\aFunc}{\aFuncp}\ \ = \ \ \sum_{\anelemy\in\asety}\ametrictildeof{\aFuncOf{\anelemy}}{\aFuncpOf{\anelemy}}\ .
\end{align*}
\begin{lemma}\label[lemma]{lem:lifting}
If $\ametrictilde$ is an orimetric on~$\dataDomain$, then 
$\ametricpar{\asety}$ is an orimetric on $\asetx\rightarrow \dataDomain$, for all $\asety\subseteq\asetx$. 
\end{lemma}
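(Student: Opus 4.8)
The plan is to verify directly that $\ametricpar{\asety}$ satisfies the three defining axioms of an orimetric — reflexivity, symmetry at zero, and the triangle inequality — by reducing each one to the corresponding property of $\ametrictilde$ applied argumentwise at each $\anelemy\in\asety$ and then summing. Since $\asetx$ is the (finite) set of example inputs, $\asety\subseteq\asetx$ is finite and the sum defining $\ametricparof{\asety}{\aFunc}{\aFuncp}$ is a finite sum of elements of $\realsgtz$, hence well-defined; the same computations go through verbatim if one additionally allows the value $\infty$ under the usual conventions for order and addition on $\realsgtz\cup\setof{\infty}$, which is relevant because some of our auxiliary (quasi)metrics take the value $\infty$.

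For reflexivity I would note that $\ametricparof{\asety}{\aFunc}{\aFunc}=\sum_{\anelemy\in\asety}\ametrictildeof{\aFuncOf{\anelemy}}{\aFuncOf{\anelemy}}$, and every summand is $0$ by reflexivity of $\ametrictilde$, so the sum is $0$. For the triangle inequality I would fix $\aFunc,\aFuncp,\aFuncpp$, apply the triangle inequality of $\ametrictilde$ at every $\anelemy$ to get $\ametrictildeof{\aFuncOf{\anelemy}}{\aFuncppOf{\anelemy}}\leq\ametrictildeof{\aFuncOf{\anelemy}}{\aFuncpOf{\anelemy}}+\ametrictildeof{\aFuncpOf{\anelemy}}{\aFuncppOf{\anelemy}}$, and sum over $\anelemy\in\asety$; since summation is monotone, this yields $\ametricparof{\asety}{\aFunc}{\aFuncpp}\leq\ametricparof{\asety}{\aFunc}{\aFuncp}+\ametricparof{\asety}{\aFuncp}{\aFuncpp}$.

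The only step that needs a moment of care is symmetry at zero. Assuming $\ametricparof{\asety}{\aFuncp}{\aFunc}=0$, this is a sum of non-negative reals equal to zero, so every summand vanishes: $\ametrictildeof{\aFuncpOf{\anelemy}}{\aFuncOf{\anelemy}}=0$ for all $\anelemy\in\asety$. Symmetry at zero of $\ametrictilde$ then gives $\ametrictildeof{\aFuncOf{\anelemy}}{\aFuncpOf{\anelemy}}=0$ for each such $\anelemy$, and summing shows $\ametricparof{\asety}{\aFunc}{\aFuncp}=0$. This is the one place where non-negativity of the codomain $\realsgtz$ is genuinely used, and it is what licenses passing from "the sum is zero" to "each term is zero."

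Overall there is no real obstacle: the lemma is a routine "pointwise axiom plus finite sum" lifting, and the point worth emphasizing — in contrast to derived notions such as precision or unambiguity of $\inducedequivof{\ametric}$, which need not be preserved under projection to a subset of examples — is that all three orimetric axioms are stable under finite sums, which is exactly what makes this lifting work and why \Cref{lem:lifting} holds uniformly for every $\asety\subseteq\asetx$.
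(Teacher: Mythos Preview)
Your proof is correct and follows essentially the same route as the paper's own proof: verify reflexivity, symmetry at zero, and the triangle inequality pointwise via the corresponding axiom for $\ametrictilde$, then sum over $\asety$, using non-negativity to pass from ``sum is zero'' to ``each term is zero'' in the symmetry-at-zero step. Your additional remarks on finiteness of $\asety$ and the role of non-negativity are sound and make explicit points the paper leaves implicit.
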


Besides the triangle inequality, 
a quasimetric satisfies $\ametrictildeof{\aval}{\avalp} = 0$ if and only if $\aval = \avalp$~\cite{metricsbook}.
When we lift from a quasimetric on the data domain using the set $\inputexamples$ of inputs from all examples, then the result will be precise and unambiguous wrt. $\groundTruthFuncs$. 
\begin{lemma}\label[lemma]{lem:lift_quasi}
For $\inputexamplesp \subseteq \inputexamples$ and $\ametrictilde$ an orimetric on~$\dataDomain$, the orimetric 
$\ametricpar{\inputexamplesp}$ on $\allFuncs=(\variables \rightarrow \dataDomain) \rightarrow \dataDomain$ is unambiguous wrt. $\groundTruthFuncs$. 
If $\ametrictilde$ is a quasimetric and $\inputexamplesp = \inputexamples$, the orimetric
$\ametricpar{\inputexamplesp}$
is also precise wrt.\ $\groundTruthFuncs$.
\end{lemma}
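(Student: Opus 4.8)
The plan is to unfold the two notions through the equivalence $\inducedequivof{\ametricpar{\inputexamplesp}}$ induced by the lifted orimetric (which is an orimetric by \Cref{lem:lifting}), and then push all the work down to the data-domain orimetric $\ametrictilde$, exploiting that $\ametrictilde$ maps into $\realsgtz$ so that a finite sum of its values is zero exactly when each summand is zero. Recall that ``unambiguous'' resp.\ ``precise wrt.\ $\groundTruthFuncs$'' are by definition properties of $\inducedequivof{\ametricpar{\inputexamplesp}}$, and that $\aFunc \in \groundTruthFuncs$ means $\aFuncOf{\aninput} = \anoutput$ for every example $(\aninput, \anoutput) \in \aspecification$, while $\inputexamples$ is the set of inputs of those examples.

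For unambiguity wrt.\ $\groundTruthFuncs$, I would take two arbitrary ground-truth functions $\groundTruthFunc_1, \groundTruthFunc_2 \in \groundTruthFuncs$ and show $\groundTruthFunc_1 \inducedequivof{\ametricpar{\inputexamplesp}} \groundTruthFunc_2$, i.e.\ $\ametricparof{\inputexamplesp}{\groundTruthFunc_1}{\groundTruthFunc_2} = 0$. Every $\anelemy \in \inputexamplesp \subseteq \inputexamples$ is the input of some example $(\anelemy, \anoutput) \in \aspecification$, and since $\groundTruthFunc_1$ and $\groundTruthFunc_2$ both satisfy the specification we get $\groundTruthFunc_1(\anelemy) = \anoutput = \groundTruthFunc_2(\anelemy)$; reflexivity of $\ametrictilde$ then gives $\ametrictildeof{\groundTruthFunc_1(\anelemy)}{\groundTruthFunc_2(\anelemy)} = 0$, and summing over $\anelemy \in \inputexamplesp$ yields $\ametricparof{\inputexamplesp}{\groundTruthFunc_1}{\groundTruthFunc_2} = 0$. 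Hence all elements of $\groundTruthFuncs$ lie in a single class of $\inducedequivof{\ametricpar{\inputexamplesp}}$ (the case $\groundTruthFuncs = \emptyset$ being trivial), which is precisely unambiguity. Note this part uses only reflexivity of $\ametrictilde$, so it holds for any orimetric $\ametrictilde$ and any $\inputexamplesp \subseteq \inputexamples$.

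For preciseness under the extra hypotheses that $\ametrictilde$ is a quasimetric and $\inputexamplesp = \inputexamples$, I would show $\classof{\groundTruthFuncs} \subseteq \groundTruthFuncs$. So let $\aFunc \in \allFuncs$ with $\aFunc \inducedequivof{\ametricpar{\inputexamples}} \groundTruthFunc$ for some $\groundTruthFunc \in \groundTruthFuncs$; unfolding, $\sum_{\aninput \in \inputexamples} \ametrictildeof{\aFuncOf{\aninput}}{\groundTruthFuncOf{\aninput}} = 0$. Non-negativity of $\ametrictilde$ forces each summand to vanish, and the ``distance zero if and only if equal'' property of quasimetrics then gives $\aFuncOf{\aninput} = \groundTruthFuncOf{\aninput}$ for every $\aninput \in \inputexamples$. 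Since $\groundTruthFunc$ satisfies the specification, $\groundTruthFuncOf{\aninput} = \anoutput$ for each example $(\aninput, \anoutput) \in \aspecification$, hence $\aFuncOf{\aninput} = \anoutput$ as well, so $\aFunc$ satisfies the specification and $\aFunc \in \groundTruthFuncs$.

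The argument is largely bookkeeping, and I do not expect a genuine obstacle. The one point that deserves care --- and that pinpoints why the extra hypotheses are needed --- is that membership in $\groundTruthFuncs$ is determined exactly by the values on $\inputexamples$: a strictly smaller index set $\inputexamplesp$ would constrain too few values of $\aFunc$ to conclude $\aFunc \in \groundTruthFuncs$, and the appeal to ``zero distance implies equality'' is exactly the property that the orimetric axioms (reflexivity, symmetry at zero, triangle inequality) do not provide. This is why unambiguity is unconditional whereas preciseness genuinely needs $\ametrictilde$ to be a quasimetric and $\inputexamplesp = \inputexamples$.
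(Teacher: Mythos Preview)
Your proof is correct and follows essentially the same approach as the paper: both parts reduce the lifted distance to a sum of $\ametrictilde$-values, use reflexivity for unambiguity, and use non-negativity plus the quasimetric's ``zero implies equal'' for preciseness. The only cosmetic difference is that the paper phrases preciseness as a contradiction (assume $\aFunc\notin\groundTruthFuncs$ and derive $\aFunc\in\groundTruthFuncs$) whereas you argue the inclusion $\classof{\groundTruthFuncs}\subseteq\groundTruthFuncs$ directly.
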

\paragraph{$\refineFunc$}
As explained in the overview, we use $\approxmetric=\ametricpar{\inputexamplesp}$ with $\inputexamplesp \subseteq \inputexamples$ as our approximate orimetric.
We refine the orimetric by adding an example for which the spurious program $\aprog$ does not return the expected value.
Let $\aninput \in \inputexamples$ be an input example for which 
$\semanticsOfAppliedTo{\aprog}{\aninput} \neq \groundTruthFuncOf{\aninput}$ holds. We define
$
    \refineFuncOf{\ametricpar{\inputexamplesp}}{\aprog}{\groundTruthFunc}
    = 
    \ametricpar{\inputexamplesp \cup \setof{\aninput}}
    \ .
$
This is a refinement indeed. 
Blaze's~\cite{blaze} abstraction refinement can also be instantiated in $\refineFunc$. 
For this, we only need to update the abstraction $\alpha$ and the abstract transformers $\aFunc^{\sharp}$. 
Then we can calculate the orimetric as described in \Cref{sec:state_of_the_art}.

\paragraph{$\updateEnumOrder$}
When $\updateEnumOrderOf{\enumerationOrder{}}{\aprogSet}{\groundTruthFunc}$ is called,
the spurious program $\aprog$ is $\enumerationOrder{}$-maximal in $\aprogSet$.
We use the precise metric $\distanceMeasureForExamples{\inputexamples}$ to judge all subprograms $\aprogp$ of $\aprog$.
For every subprogram $\aprogp$
whose semantics reside in
$\ballfullof{(\allFuncs, \distanceMeasureForExamples{\inputexamples})}{\radius}{\groundTruthFunc}$,
we update the enumeration order by pretending $\aprogp$
has no children and size $1$.
 \subsection{Recipe}
All orimetrics we work with will be liftings from orimetrics on the data domain. 
We therefore only discuss orimetrics on strings and bitvectors, but not on functions over these data domains.  
Actually, the orimetrics on strings and bitvectors we define will be quasimetrics.
This guarantees us preciseness of the orimetrics that result from lifting, \Cref{lem:lift_quasi}. 
We stress, however, that the lifted objects are no longer quasimetrics, they are orimetrics. The definitions in this section are guided by the following
\begin{quote}
{\bfseries Rule of Thumb:}
\\
If we understand deduction for the operators on the data domain, then we understand which values should be close in the orimetric for the data domain. 
\end{quote} \subsection{Oriented Metrics for Strings}\label{sec:metrics_strings}
For strings, we have three orimetrics, two of them defined with deduction in mind.
\subsubsection{An Oriented Metric for $\concatOp$}
For deduction, we start from the sketch $\concatOp(\aprog, \aprog')$ that consists of the concatenation operation.  
If $\aprog$ returns $\aninput$ and $\aprog'$ returns $\aninput'$, then $\concatOp(\aprog, \aprog')$ will return $\anoutput=\aninput.\aninput'$. 
Deduction infers missing arguments from already given arguments and target values. 
Concretely, if we want to obtain value $\anoutput$ and we already have $\aninput$, then the missing input must be $\aninput'$.
This has been captured by the inverse semantics in~\cite{duet}.

We define an orimetric that is optimized for concatenation.  
We think of the arguments of the orimetric as the values that are given to the  deduction engine, namely the output~$\anoutput$ and the input~$\aninput$ that is already known.  
The value $\concatmetricof{\aninput}{\anoutput}$ should then capture how far~$\aninput$ is away from $\anoutput$ when concatenation is applied next. 
This amounts to judging what could be called the complexity of the inverse semantics.
How many values does the inverse semantics contain, and how easy is it to generate these values.
The distance should then be inverse proportional to this complexity.
In particular, when there are no values left in the inverse semantics, the distance should be infinity.
For concatenation, this would be the case when $\aninput$ is not a prefix or a suffix of $\anoutput$. 

While these considerations are useful as guidance, implementing them strictly does not lead to a useful definition. 
First, we want to determine the ball not only for the concatenation operation, but it should also work for the remaining operations.
Second, trying to define the complexity of the inverse semantics quickly gets out of hand mathematically. 
How do we make sure the definition still satisfies the triangle inequality?

Our definition relaxes the requirement that the input should be a prefix or a suffix to just an infix.
We approximate the complexity of the inverse semantics as follows.
The number of elements in the inverse semantics is always one, we are just missing one string. 
The string, however, is easier to generate the shorter it is. 
With this, we use the orimetric
\begin{align*}
\concatmetricof{\aninput}{\anoutput}\ =\ 
\begin{cases}
    \lengthOf{\anoutput} - \lengthOf{\aninput}
    & \text{if $\aninput$ is an infix of $\anoutput$}
    \\
    \bigconst + \absOf{\lengthOf{\anoutput} - \lengthOf{\aninput}}& \text{otherwise} \ .
\end{cases}
\end{align*}
We select the constant $\bigconst \in \reals$ to be larger than the sum over the lengths of all output values in the examples, and use this constant to define the ball. 
Alternatively, we could extend the reals with infinity and define $\concatmetricof{\aninput}{\anoutput}=\infty$, if $\aninput$ is not an infix of $\anoutput$. 
However, this would add infinity as an undesirable technicality to many places.
\subsubsection{An Oriented Metric for $\substrOp$}
The second sketch our deduction should support consists of the substring operation. 
It takes three arguments: the original string $\aninput$, the starting index of the substring that should be determined as the output $\anoutput$, and the maximal length of that substring.
Our orimetric focuses on the relationship between $\aninput$ and~$\anoutput$.
The only precondition we can derive is that the input must be a superstring of the output. 
We define $\substrmetric$ just like $\concatmetric$, but with the infix requirement flipped.
Actually, within \toolname, we have implemented a full new deduction strategy for the substring operation. 
It can be found in the appendix.
\subsubsection{Levenshtein}
Our synthesizer runs with a portfolio of different orimetrics, and it has turned out beneficial to have a fallback that is not optimized for a deduction strategy.
We use the Levenshtein distance for this purpose~\cite{levenshtein}.
The Levenshtein distance between two strings is the number of substitutions, deletions, and insertions that is necessary to convert one string into the other. 
To efficiently check whether a program lies inside a ball,  we use Levenshtein automata~\cite{levenshteinAutomata}.
A Levenshtein automaton for a target string $\anoutput$ and a radius $\radius$ is a finite automaton that accepts all strings $\aninput$ with $\levenshteinDistanceOf{\aninput}{\anoutput} < \radius$.
In our implementation, we set the maximum radius to $\radius = 4$.
 \subsection{Oriented Metrics for Bitvectors}\label{sec:metrics_bv}
For bitvectors, we have three orimetrics tailored towards sketches, and
one additional orimetric.
\subsubsection{An Oriented Metric for $\andop$}
We consider the bitwise conjunction $\andop(\aprog, \aprog')$.
It takes as input bitvectors $\aninput$ and $\aninput'$ and produces the bitvector $\anoutput=\aninput\&\aninput'$.
We again derive an orimetric by working out a deduction method.  
The deduction method reasons over the output $\anoutput$ and the input $\aninput$ that is already known. 
The first step is to check whether $\anoutput$ is bitwise smaller than $\aninput$, denoted by $\anoutput\sqsubseteq\aninput$. 
If the check fails, the deduction aborts. 
If the check succeeds, we calculate the requirements on $\aninput'$, for each bit $b$, as follows. 
(R1) If $\anoutput[b]=1$, then $\aninput'[b]=1$. 
(R2) If  $\anoutput[b]=0$ and $\aninput[b]=1$, then $\aninput'[b]=0$.
There are no requirements if $\anoutput[b]=0$ and $\aninput[b]=0$.

The goal of the orimetric $\andmetricof{\aninput}{\anoutput}$ is to estimate the complexity of the inverse semantics. 
The idea is to sum up all situations (R1) and (R2) in which the inverse semantics has no choice for the value of a bit: $\sum_{b\in[0, 63]}\aninput[b] \vee \anoutput[b]$.
This function, however, will not be reflexive.
If $\aninput=\anoutput=1^{64}$, it will return $64$ rather than $0$.
The solution is to drop requirement (R1).
The idea is to approximate the search for $\aninput'$ from above. 
We take for granted that we are in a space of values that are bitwise larger than the output. 
Only summing up situations (R2) then leads to the final definition:
\begin{align*}
\andmetricof{\aninput}{\anoutput}\ =\ 
\begin{cases}
    \sum_{b\in[0, 63]} \neg \anoutput[b]\wedge \aninput[b]
    &\text{if $\anoutput\sqsubseteq\aninput$}
    \\
    \bigconst
    & \text{otherwise}\ .
\end{cases}
\end{align*}
The large constant is again for emptiness of the inverse semantics.
It is worth noting that under the assumption $\anoutput\sqsubseteq\aninput$, the sum $\sum_{b\in[0, 63]} \neg \anoutput[b]\wedge \aninput[b]$ is just the Hamming distance. 
The Hamming distance is even a metric. 
A similar reasoning yields an orimetric for bitwise disjunction. 
\subsubsection{An Oriented Metric for $\mulop$}
For the multiplication among bitvectors $\mulop(\aprog, \aprog')$, deduction has to start from $\anoutput$ and $\aninput$ and determine $\aninput'$ so that $\anoutput=\aninput\times \aninput'$. 
This amounts to finding the multiplicative inverse in the integer ring represented by the bitvectors. 
Note that the inverse does not exist if the output is smaller than the input.

The corresponding orimetric again approximates the complexity of the inverse semantics, here the multiplicative inverse element.
We take this complexity to be the difference in the number of leading zeros between input and output: 
\begin{align*}
\mulmetricof{\aninput}{\anoutput}\ =\ 
\begin{cases}
    0 &
    \text{if $\aninput = \anoutput$} 
    \\
    1 + \leadingZerosOf{\anoutput}-\leadingZerosOf{\aninput} &\text{if $\leadingZerosOf{\aninput}\leq \leadingZerosOf{\anoutput} $}
    \\
    \bigconst
    & \text{otherwise}\ .
\end{cases}
\end{align*}

\subsubsection{Hamming}
To relate two values directly, we use a modification of the Hamming distance.
The Hamming Distance between two bitvectors is the number of bits where the two bitvectors do not match.
Taking $\hammingDistance$ directly as the quasimetric prunes programs that are good. 
Consider a 64-bit bitvector $\aninput$ with $\hammingDistanceOf{\aninput}{\anoutput} = $ 64.
A single $\notop$ operation would yield $\anoutput$.
With that in mind, we define a new quasimetric $\hdmetric$ that relates two bitvectors $\abitvec$ and $\abitvecp$ as follows:
\begin{equation*}
\hdmetricof{\abitvec}{\abitvecp} = 
\begin{cases}
    \hammingDistanceOf{\abitvec}{\abitvecp}
    &, \hammingDistanceOf{\abitvec}{\abitvecp} \leq 32
    \\
    64 - \hammingDistanceOf{\abitvec}{\abitvecp} + 1
    &, \textit{otherwise}\ .
\end{cases}
\end{equation*}

 \subsection{Hacks}

\paragraph{Sharing Programs Between Concurrent Solver Instances}
In our implementation, we run several solvers concurrently that utilize different orimetrics to prune the search space. When a solver with orimetric $\ametric$ finds a candidate solution, we also analyze it in the context of the other orimetrics~$\functionDistanceMeasurep$.  
The goal is to find subprograms that are valuable wrt.\ $\functionDistanceMeasurep$, and modify the enumeration order of the corresponding solver accordingly.
This way, valuable programs can be shared between the concurrent solvers.

 \paragraph{Keeping Programs up to a Size Threshold}
We do not want to apply pruning to small programs, otherwise the bottom-up enumerable portion of the ball may become too small to be useful. 
We slightly modify the given orimetric $\functionDistanceMeasure$ to create $\functionDistanceMeasurep$, which keeps programs of size up to a threshold~$\asizeThreshold$ in the open ball of radius $\radius$.

  \newlist{questions}{enumerate}{2}
\setlist[questions,1]{label=\textbf{Q\arabic*:},ref=\textbf{Q\arabic*}}
\setlist[questions,2]{label=(\alph*),ref=\thequestionsi(\alph*)}

\section{Contribution \rom{3} -- Implementation and Evaluation}\label{sec:evaluation}
We implemented our approach in a SyGuS solver called \toolname{}.
It is written in C++ and uses Z3~\cite{z3}
as an SMT solver backend for the CEGIS loop.
We evaluate the performance of \toolname{} to answer the following research questions:
\begin{questions}
    \item How does \toolname{} perform on SyGuS tasks of a variety of domains? \label{rq:performance}
    \item How does \toolname{} compare against state-of-the-art SyGuS and domain-specific solvers? \label{rq:compare}
    \item What is the effect of pruning using orimetrics? \label{rq:pruning}
    \item Does abstraction and learning from spurious programs enhance performance? \label{rq:learning}
    \item What impact does the radius of the ball have? \label{rq:radius}
    \item Which orimetrics are successful? \label{rq:successful}
    \item What is the benefit of running multiple instances with different orimetrics concurrently? \label{rq:concurrency}
\end{questions}

We ran all experiments on an Apple M3 Max with 64 GB of RAM and used a 10-minute timeout.

\subsection{Implementation Details}
For strings, we use a size threshold $\asizeThreshold$ of $3$ and
start with orimetrics that consider one example only. 
To implement Levenshtein automata, we use the Mata finite automaton library \cite{automatonLib}.
For Bitvectors, we use a size threshold $\asizeThreshold$ of $7$
and start with orimetrics that consider two examples. 
We concurrently run several instances of the solver using different orimetrics.
We also run an instance that does not prune and will refer to it as $\ametric_{\infty}$. 
For each orimetric designed for a specific sketch, the solver only uses deduction for this sketch. 
The other threads apply deduction on sketches for which we did not design a specific orimetric, e.g.\ the $\addop(\sketchhole, \sketchhole)$ sketch.
That means, for each benchmark featuring strings (bitvectors),
Merlin runs solver instances for all string (bitvector) orimetrics in a portfolio.
In particular, even when a benchmark features multiple oriented operations, e.g. $\concatOp$ and $\substrOp$, Merlin runs four threads each employing one of 
$\concatmetric{}$, $\substrmetric{}$, $\levenshteinDistance$, or $\ametric_{\infty}$.
If the learning mechanism finds a program that is valuable using a specific orimetric, we only change the enumeration order of the thread employing this orimetric and of the thread that does not prune.

The choice of the radius depends on the evaluation.
For the comparison with other tools, we conduct experiments with varying radii for $\levenshteinDistance$ and $\hdmetric$.
For the remaining orimetrics, we use $\bigconst$, which means we evaluate the given conditions and prune if they fail. 
For the ablation studies, we use varying radii for all orimetrics.

 \subsection{Setup}
\paragraph{Benchmarks}
We use benchmarks from three domains: SyGuS bitvector benchmarks without conditionals, SyGuS string benchmarks without conditionals, and the Blaze string benchmark set.

In the Bitvector domain, we have 549 benchmarks:
We include 44 Hacker's delight~\cite{hackersdelight} benchmarks from the SyGuS competition suite and 5 additional Hacker's Delight benchmarks from Probe.
The specification of these benchmarks is not in the form of examples, thus, we use a CEGIS loop.
The CEGIS loop introduces non-determinism.
Therefore, we ran the benchmarks 3 times and report the mean of the results.
We also include the 500 deobfuscation benchmarks from Simba~\cite{simba}.
Here, the specification is given in the form of input-output examples.

In the String domain, we use 181 tasks from Duet~\cite{duet} that return a string.
These include 108 benchmarks from the SyGuS competition,
32 benchmarks designed from Stack Overflow questions,
and 41 benchmarks designed from Exceljet articles.
We also use 108 benchmarks from the Blaze string benchmark set \cite{blaze}.
Blaze uses a custom DSL which is not in the SyGuS format.

Since we did not implement the case-splitting deduction methods from EUSolver~\cite{eusolver}, we omit this class of benchmarks.
However, it is well-understood how to handle conditionals, namely with a decision tree construction.
This was first proposed by EUSolver~\cite{eusolver} and adapted by several other SyGuS solvers. 
Moreover, we stress that the decision tree construction \emph{also fits into our framework}: it is another method of deduction that changes the enumeration order. 
To allow for a fair comparison of the core solving strategy, we remove conditionals from the String benchmark set and from 5 benchmarks of the benchmark set Hacker's Delight. 
The other benchmarks of the benchmark set Hacker's Delight, as well as the Blaze and Deobfuscation benchmark sets do not have conditionals.

\paragraph{Operators Featured in the Benchmarks}
Apart from two benchmarks of the Hacker's Delight benchmark set, \emph{all} benchmarks use oriented operators for which we defined orimetrics.
We give a detailed overview on the operators in each benchmark set.
The benchmark set String has the following string operators:  
$\texttt{str.++}$ (concatenation), $\texttt{str.replace}$, $\texttt{str.at}$ (selecting a character by index), $\texttt{int.to.str}$ (converting an integer to a string), and $\texttt{str.substr}$.
The Blaze benchmarks use the following string operators:  
$\texttt{Concat}$ and $\texttt{SubStr}$. 
The benchmark set Deobfuscation uses the following bitvector operators:  
$\texttt{bvnot}$ (flips all bits), $\texttt{bvxor}$ (bitwise xor), $\texttt{bvand}$ (bitwise and), $\texttt{bvor}$ (bitwise or), $\texttt{bvneg}$ (negation), $\texttt{bvadd}$ (addition), $\texttt{bvmul}$ (multiplication), and $\texttt{bvsub}$ (subtraction).
The Hacker’s Delight benchmark set has three categories that reflect the difficulty of the benchmarks. 
The most difficult benchmarks contain all of the following operators:
the operators from Deobfuscation,   
$\texttt{bvudiv}$ (unsigned division), $\texttt{bvurem}$ (unsigned remainder), $\texttt{bvlshr}$ (logical shift right), $\texttt{bvashr}$ (arithmetic shift right), $\texttt{bvshl}$ (shift left), $\texttt{bvsdiv}$ (signed division), and $\texttt{bvsrem}$ (signed remainder).
The category with medium difficulty typically features 5-10 operators from above.
The simplest benchmarks have 2-4 of the above operators.

\paragraph{Baseline Solvers}
We compare \toolname{} against the general SyGuS tools Probe~\cite{probe} and Duet~\cite{duet}.
In the bitvector domain, we additionally compare \toolname{} against Simba~\cite{simba} and DryadSynth~\cite{dryadsynth}.
In the string domain, we add the recent Synthphonia~\cite{synthphonia} tool for comparison.
For the Blaze benchmarks, we only compare against Blaze~\cite{blaze} since their DSL is not directly compatible with the other solvers.
\Cref{sec:state_of_the_art} offers a description on how all mentioned tools work.
For DryadSynth, we did not enable the ChatGPT feature, which configures an initial enumeration order using ChatGPT. \subsection{Effectiveness of \toolname{}}

\begin{figure}
\centering
\begin{subfigure}{.45\textwidth}
  \centering
  \scalebox{0.9}{
  \includegraphics[width=1\linewidth]{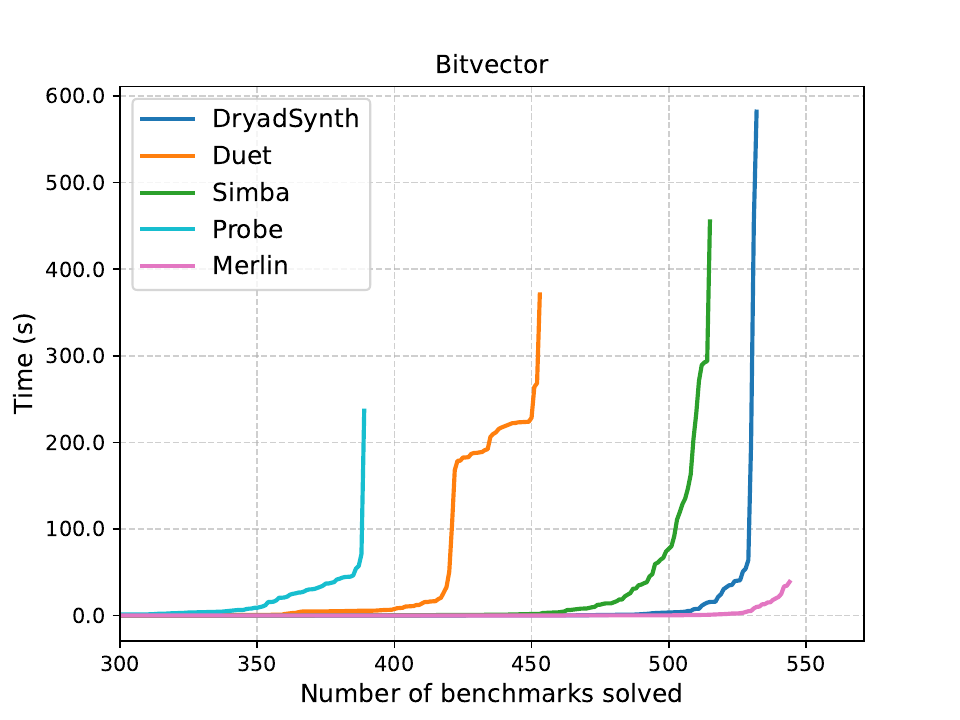}}
  \caption{Time comparison for the bitvector domain.}
  \label{fig:bv_times_300}
\end{subfigure}\begin{subfigure}{.45\textwidth}
  \centering
    \scalebox{0.9}{
\includegraphics[width=1\linewidth]{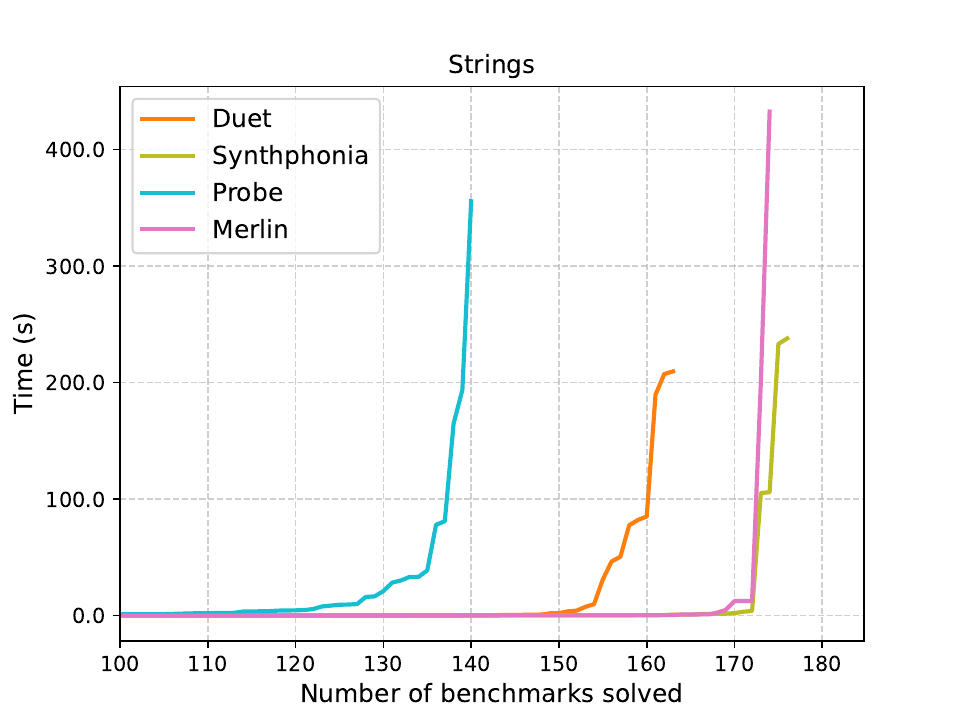}}
  \caption{Time comparison for the string domain.}
  \label{fig:string_times_100}
\end{subfigure}
\begin{subfigure}{.45\textwidth}
  \centering
    \scalebox{0.9}{
\includegraphics[width=1\linewidth]{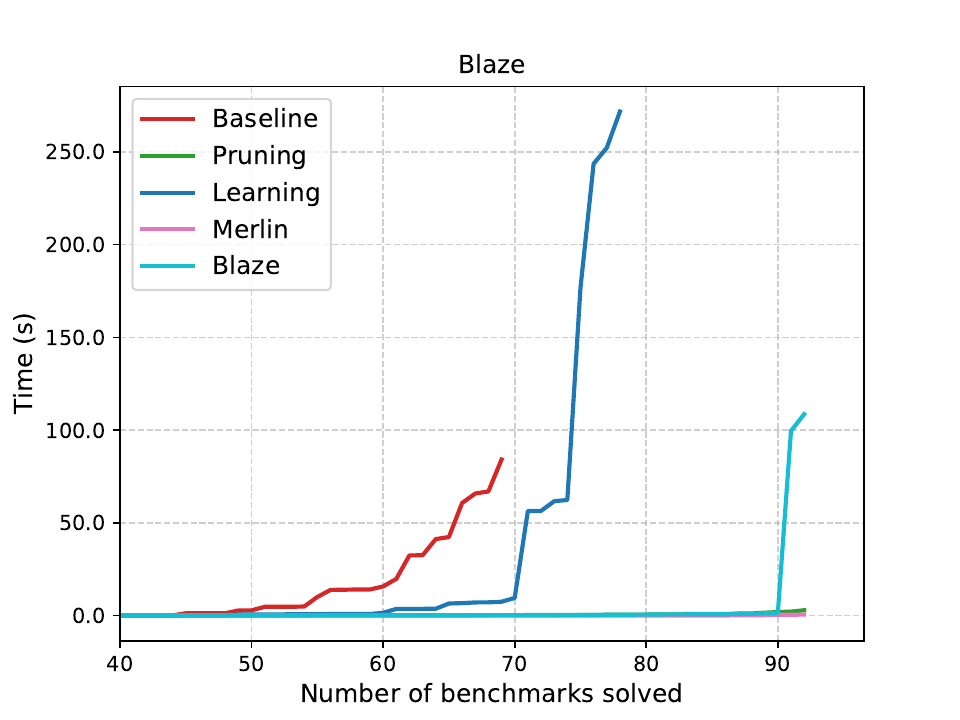}}
  \caption{Time comparison for the Blaze benchmarks.}
  \label{fig:blaze_times_40}
\end{subfigure}\begin{subfigure}{.45\textwidth}
\centering
\scalebox{0.77}{
\footnotesize
\begin{tabular}{|c|c|c|c|} 
    \hline
    Benchmark, $\radius$ & \#Bench & Time & $\cardinalityOf{\aprogSet}$ \\
    \hline
    \hline
Blaze, 25 & 48 (4) & 55.00 & 4126\\ 
\hline
Blaze, 50 & 59 (8) & 58.51 & 8264\\ 
\hline
Blaze, 75 & 81 (17) & 68.42 & 39283\\ 
\hline
Blaze, 100 & 89 (23) & 16.72 & 36433\\ 
\hline
Deobfusc, 25 & 275 (0) & 4.32 & 42\\ 
\hline
Deobfusc, 50 & 288 (1) & 8.00 & 60\\ 
\hline
Deobfusc, 75 & 322 (1) & 4.11 & 24\\ 
\hline
Deobfusc, 100 & 324 (0) & 3.37 & 6\\ 
\hline
Strings, 25 & 130 (4) & 2.32 & 2\\ 
\hline
Strings, 50 & 119 (2) & 2.12 & 2\\ 
\hline
Strings, 75 & 116 (6) & 45.46 & 2\\ 
\hline
Strings, 100 & 108 (2) & 16.01 & 3\\ 
\hline
\end{tabular}}
  \caption{Ablation studies for different $\radius$.}
  \label{table:radius}
\end{subfigure}\caption{Solving time for all benchmarks and ablation studies for different $\radius$.}
\end{figure}

To answer \ref{rq:performance} and \ref{rq:compare}, we evaluate \toolname{} on all benchmarks.

For the bitvector domain,
\Cref{fig:bv_times_300} summarizes the solving times of each solver.
Note, that the x-axis starts at $300$.
The benchmarks before take negligible time.
Probe solves 389 benchmarks.
Duet solves 453 benchmarks,
Simba finds solutions for 515 benchmarks,
and 
DryadSynth solves 532 benchmarks.
Lastly, our new tool, \toolname{}, solves 544 of the total 549 bitvector benchmarks.
If we compare the running time of DryadSynth and \toolname{} on all benchmarks where at least one tool has a solution, 
\toolname{} is 27 times faster than DryadSynth.

We want to note again that we tested DryadSynth without its ChatGPT feature.
This feature applies to the $49$ Hacker's Delight problems.
While the running time reported in their paper~\cite{dryadsynth} with ChatGPT enabled is comparable to the running time without it in most benchmarks,
there are $4$ benchmarks where the ChatGPT feature has a great effect:
DryadSynth is able to solve these 
only with ChatGPT.
\toolname{} is not able to solve these benchmarks.
However, we stress that the ChatGPT feature can be seen as a different initial enumeration order.

For the string domain, 
\Cref{fig:string_times_100} summarizes the solving times for each solver.
Again, note that the x-axis starts at $100$.
Probe solves 140 benchmarks.
Duet solves 163 benchmarks.
\toolname{} can almost compete with the newly proposed, domain-specific string synthesis tool Synthphonia:
\toolname{} solves 174 benchmarks while Synthphonia finds a solution for 176 benchmarks.

For the benchmarks from Blaze, 
\Cref{fig:blaze_times_40} summarizes the solving times for Blaze and \toolname{}.
Again, note that the x-axis starts at $40$.
Both tools solve 92 benchmarks
and have nearly the same running time across the majority of benchmarks.
Still, \toolname{} is the fastest on all benchmarks.
In the hardest benchmarks, \toolname{} is vastly superior.
Overall,
\toolname{} is 75 times faster than Blaze.

 \subsection{Ablation Studies}
To answer \ref{rq:pruning} and \ref{rq:learning},
we now evaluate the effect of pruning and learning on the synthesis performance.
For this, we implemented three additional different versions of \toolname{}:
\begin{itemize}
    \item Pruning: An implementation only using pruning. 
    Here, learning and abstraction is disabled, and therefore we do not use $\refineFunc$ and $\updateEnumOrder$ functions.
    \item Learning: An implementation that uses the orimetrics only for abstraction and learning.
    \item Baseline: This implementation uses neither the pruning nor 
    the learning features.
\end{itemize}
All implementations use the same deduction methods and have the same initial enumeration order.
Moreover, all implementations at least factorize the search space using observational equivalence.

\begin{figure}
\centering
\begin{subfigure}{.45\textwidth}
  \centering
    \scalebox{0.9}{
\includegraphics[width=1\linewidth]{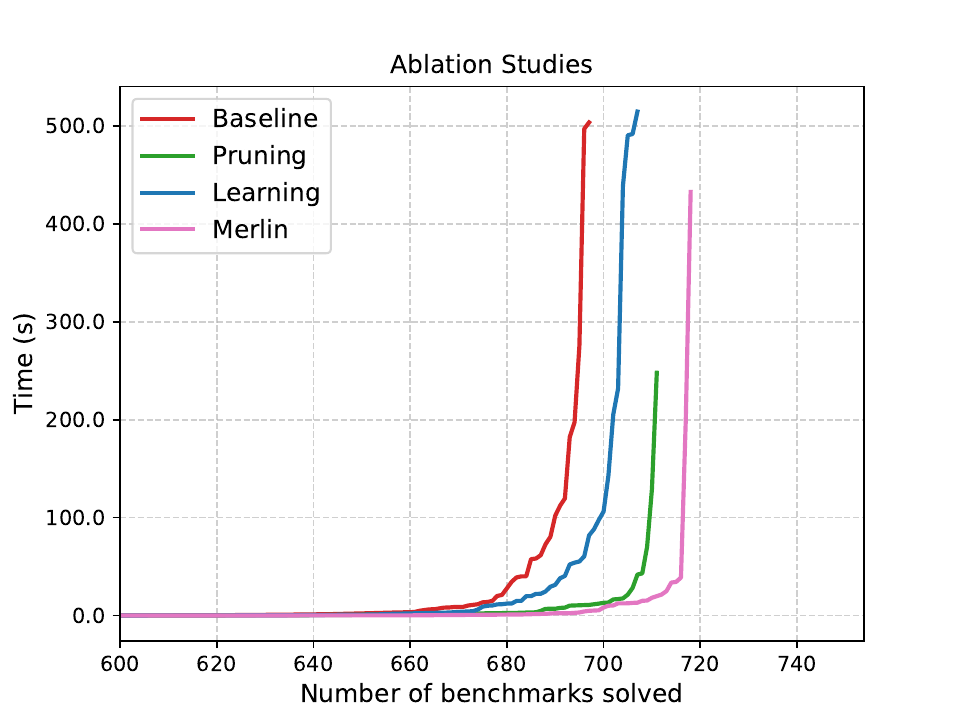}}
  \caption{Solving time across all 730 SyGuS benchmarks.}
  \label{fig:ablation_all_cactus_600}
\end{subfigure}\begin{subfigure}{.45\textwidth}
  \centering
    \scalebox{0.9}{
\includegraphics[width=1\linewidth]{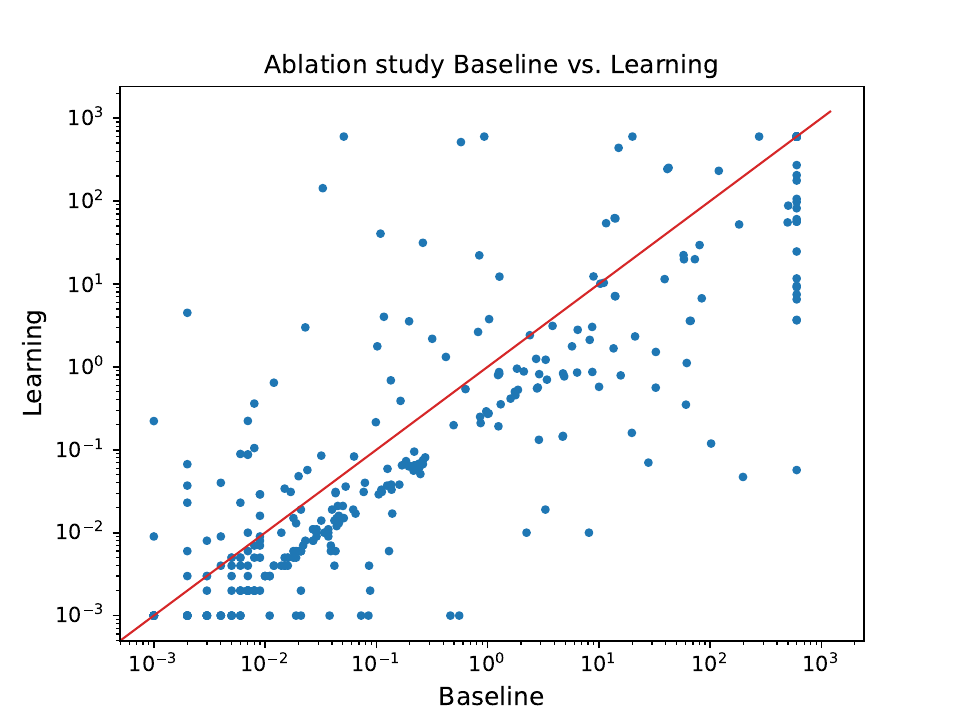}}
  \caption{Scatter plot Baseline vs Learning.}
  \label{fig:scatter_baseline_learning}
\end{subfigure}
\begin{subfigure}{.45\textwidth}
  \centering
    \scalebox{0.9}{
\includegraphics[width=1\linewidth]{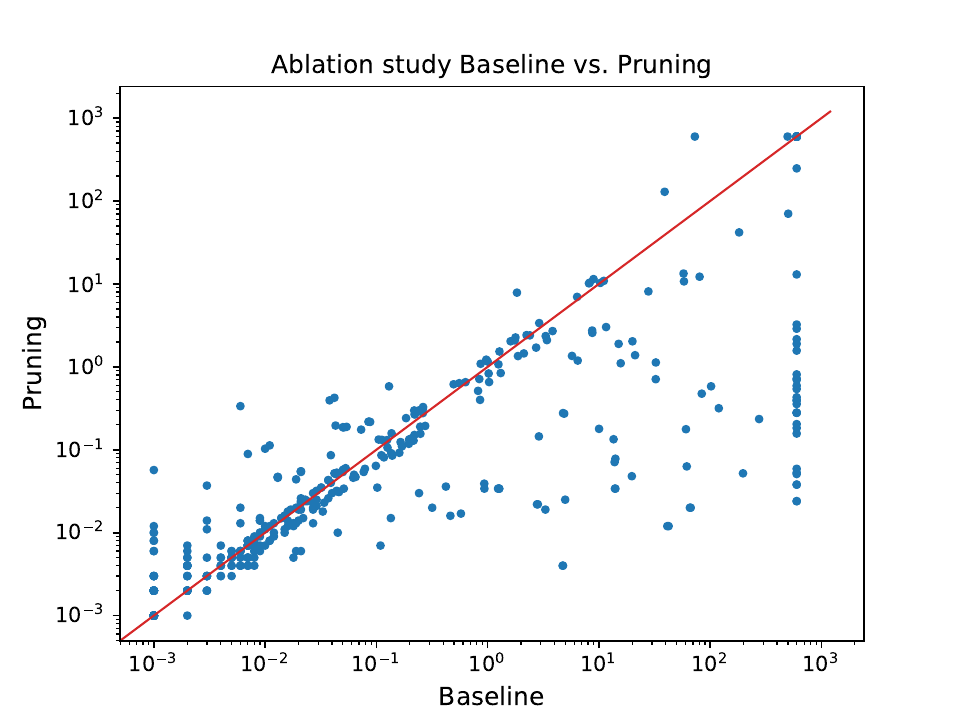}}
  \caption{Scatter plot Baseline vs Pruning.}
  \label{fig:scatter_baseline_pruning}
\end{subfigure}\begin{subfigure}{.45\textwidth}
  \centering
    \scalebox{0.9}{
\includegraphics[width=1\linewidth]{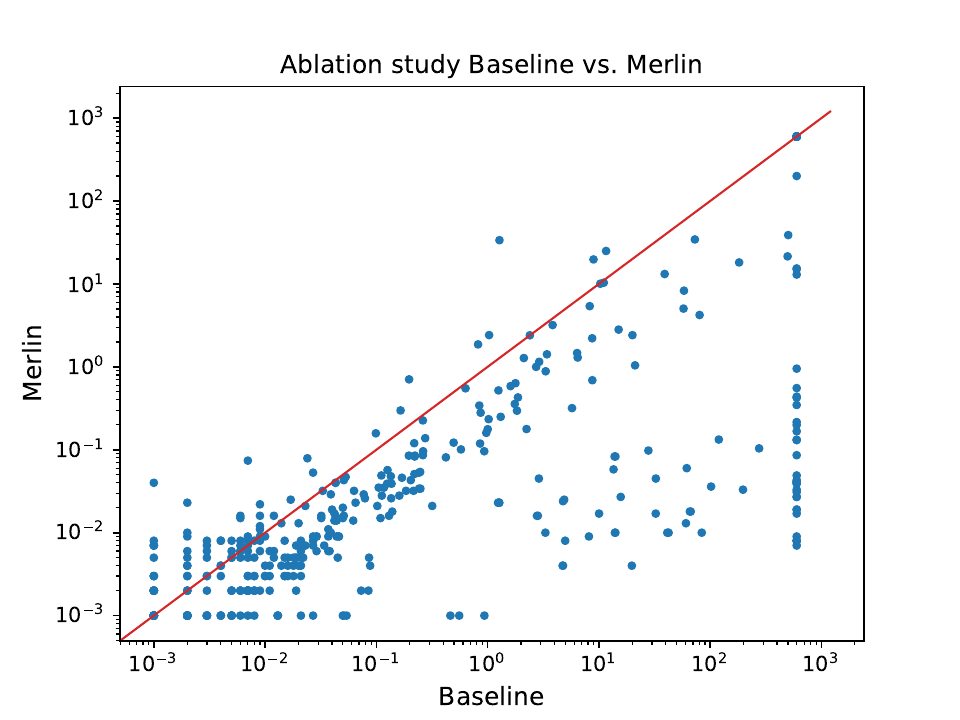}}
  \caption{Scatter plot Baseline vs \toolname{}.}
  \label{fig:scatter_baseline_mysolver}
\end{subfigure}
\caption{Ablation Studies.}
\label{fig:ablation}
\end{figure}

\Cref{fig:ablation} summarizes the ablation studies for the SyGuS benchmarks.
\Cref{fig:ablation_all_cactus_600} shows the running time across all SyGuS benchmarks.
Note, that the x-axis starts at $600$.
The benchmarks before take negligible time.
The Baseline solver solves 697 benchmarks and is the slowest overall.
The Pruning solver solves 711, and
the Learning solver solves 707 benchmarks.
\toolname{} solves 718 benchmarks and is the fastest overall.

\Cref{fig:blaze_times_40} shows the running time across the Blaze benchmarks.
Again, note that the x-axis starts at $40$.
Baseline solves 69 benchmarks and is the slowest overall,
while Learning solves 78 benchmarks.
Pruning and \toolname{} both solve 92 benchmarks.
Remarkable is that our abstraction alone, represented by Learning, is inferior to the abstraction of Blaze.
However, our pruning method is so powerful that it compensates for the quality of our abstraction and even surpasses Blaze's performance.
In future work we want to adapt Blaze's abstraction.

\Cref{fig:scatter_baseline_learning,fig:scatter_baseline_pruning,fig:scatter_baseline_mysolver} compare Baseline against Learning, Pruning, and \toolname{} in a scatter plot.
The axes show the solving time in seconds.
Note that the axes are in log-scale.
\Cref{fig:scatter_baseline_learning} compares Baseline against Learning.
It shows that our learning method has a positive effect on the majority of benchmarks.
However, there are also benchmarks where learning negatively impacts performance.
Since the enumeration order is altered, learning might shift the solution to a later point in the enumeration order.
The positive effect is two-fold.
First, due to the approximate orimetrics, factorization has a greater effect.
Second, learning from previously solved 
examples
enables solving some complex benchmarks where Baseline times out.
\Cref{fig:scatter_baseline_learning} compares Baseline against Pruning.
For trivial benchmarks, the baseline solver outperforms the solver that uses pruning.
The reason may be the overhead due to concurrency. 
For more complex benchmarks,
Baseline takes significantly longer or even times out.
This behavior is expected because we only start pruning the search space after a certain size threshold.
For benchmarks that can be solved below or just above the threshold, the search space is not pruned enough to observe a noticeable improvement.

\Cref{fig:scatter_baseline_mysolver} compares Baseline against \toolname{}.
Combining learning and pruning in Merlin results in a significant speed up on almost all instances.
\toolname{} also solves more instances than Pruning and Learning.
This shows that learning and pruning both have a significant impact on their own and can be combined to provide an even greater benefit.
The negative impact of learning on some instances that we observed in \Cref{fig:scatter_baseline_learning} is also averted.
This is because when using multiple threads, we are less likely to learn a bad program:
We only update the enumeration order of the thread with the corresponding orimetric.
Thus, the enumeration order of the other threads is not impacted.
In the Learning implementation, we only had one thread and one enumeration order; 
we were at risk of learning a bad program.
If we compare the total running time of Baseline and \toolname{} on all benchmarks where at least one tool has a solution, 
\toolname{} is 42 times faster than Baseline.

To answer \ref{rq:radius}, we conduct experiments on the deobfuscation, strings, and Blaze benchmarks.
We did not see a great effect of pruning on the Hacker's Delight benchmarks, thus we omit these here.
We take 100, 75, 50, and 25 percent of the maximum distance as the radius for each orimetric.
By maximum distance, we refer to the maximum distance an orimetric can return given an output $\anoutput$, excluding $\bigconst$.
For example, if $\anoutput$ is a 4-bit bitvector with two bits set, the maximum distance excluding~$\bigconst$ that $\andmetric$ can return is $2$.
With several outputs, we take the sum.
For the orimetric $\substrmetric$, we use discrete pruning at 100\%.
Since for the orimetrics $\concatmetric$ and $\mulmetric$ values smaller than 50\% do not make sense, we use 
$100$, $87.5$, $75$, and $62.5$ percent instead.
For the orimetric~$\levenshteinDistance$,
100\% corresponds to allowing a distance of $4$ for each output, 75\% corresponds to 3, and so on.
\Cref{table:radius} compares Baseline with instances of Pruning.
The second column shows the number of benchmarks that were solved using an underapproximation thread.
In parentheses is the number of these benchmarks, that the baseline solver did not solve.
The third column describes the gained speed-up factor on the benchmarks that were solved by an underapproximation thread and by the baseline.
The last column compares the size of the explored search space.
For example, the first row states that when instantiating Merlin with a 25\% radius and running the Blaze benchmarks,
48 benchmarks were solved by a thread that prunes the search space.
4 of these benchmarks were not solved by the baseline solver.
On the benchmarks that were also solved by the baseline solver, the 25\% instantiation was 55 times faster and considered 4126 times fewer programs than the baseline solver.
The Blaze benchmarks show the biggest gain in performance.
This is not surprising because the top operator of the grammar is concatenation.
Thus, concentrating on substrings is a very good heuristic.
There also seems to be a sweet spot for taking 75\% of the maximum radius, although on the blaze benchmarks this is still too coarse for many benchmarks.
A prime example is the following benchmark:
Convert the string \code{"Launa Withers"} into \code{"L. Withers"}.
Extracting the \code{"L"} is essential for this benchmark, but the output falls outside the 75\% radius.
We leave fine-tuning the radius up to future work. \subsection{Suitability of the Presented Orimetrics and Impact of Concurrency}
To answer \ref{rq:successful}, we analyze \emph{how} Merlin solved the benchmarks.
\Cref{fig:num_threads} shows for how many benchmarks a solver instance was the fastest. 
For example, in Deobfuscation (\Cref{fig:num_thread_deobfusc}), the pie chart says that 
in 49 benchmarks, which make up 9.8\% of the solved instances of the benchmark set, 
the $\andmetric{}$
solver was faster than the solvers with the other orimetrics. 
The entries $\ametric_{\infty}$ refer to the solver instance that does not prune. \begin{figure}
\centering
\begin{subfigure}{.25\textwidth}
  \centering
  \includegraphics[width=1\linewidth]{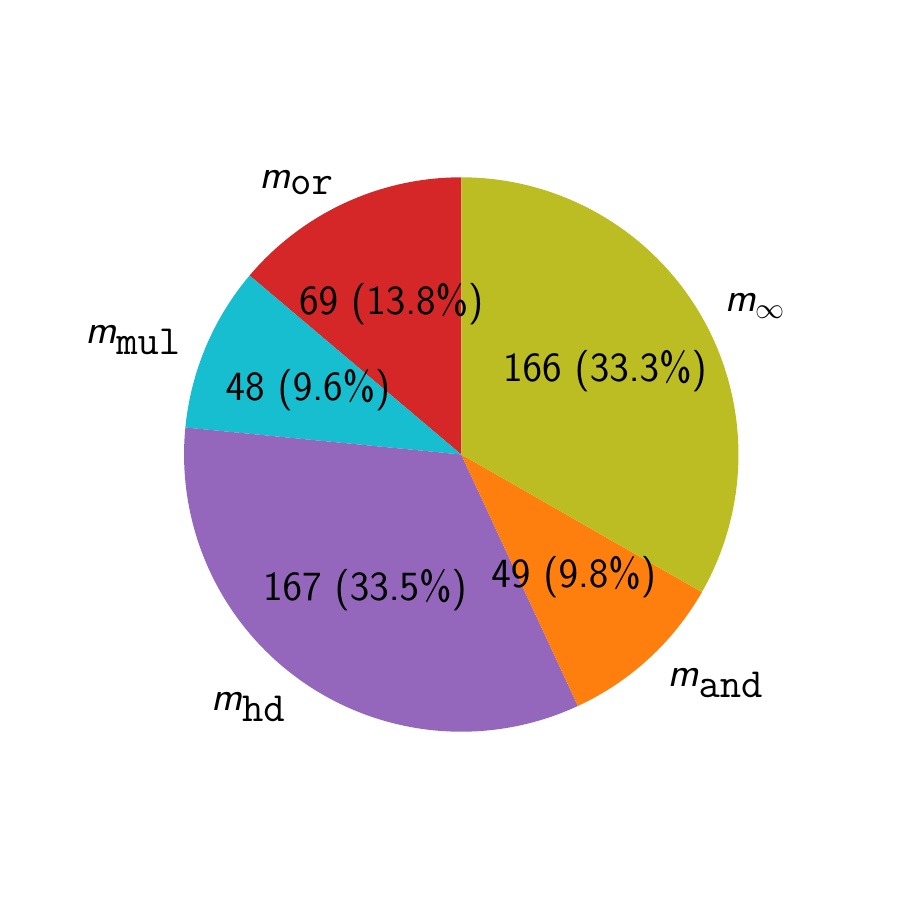}
  \vspace{-1cm}
  \caption{Deobfuscation.}
  \label{fig:num_thread_deobfusc}
\end{subfigure}\begin{subfigure}{.25\textwidth}
  \centering
  \includegraphics[width=1\linewidth]{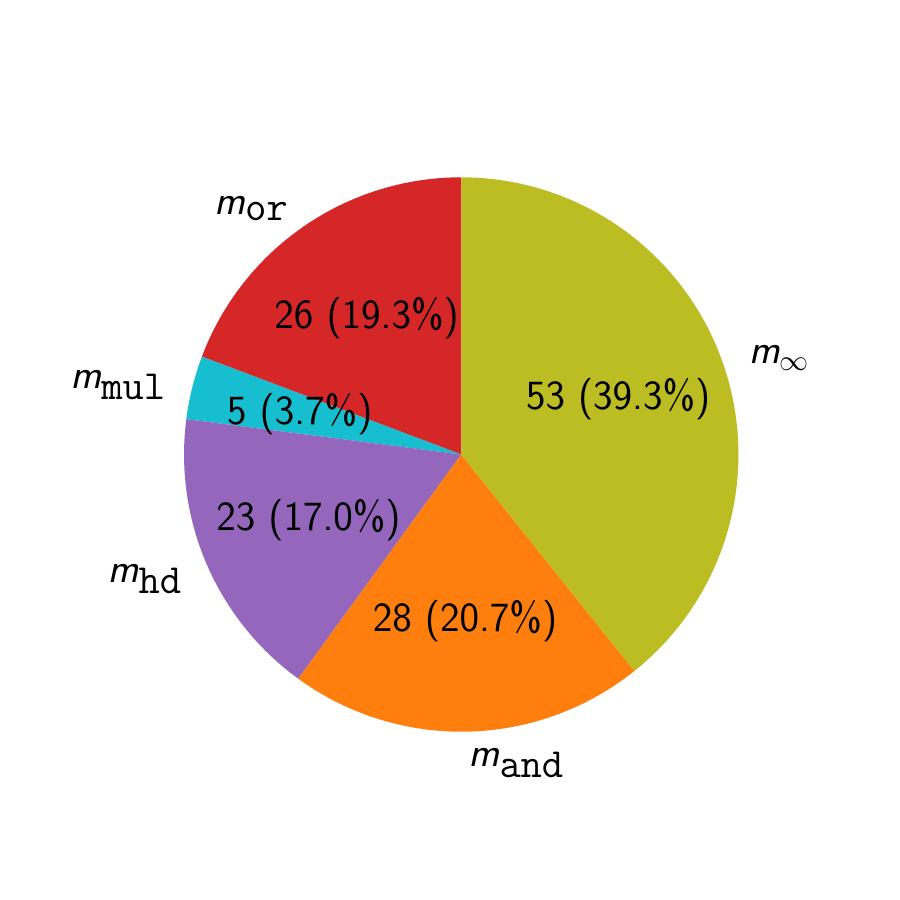}
  \vspace{-1cm}
  \caption{Hacker's Delight.}
  \label{fig:num_thread_hd}
\end{subfigure}\begin{subfigure}{.25\textwidth}
  \centering
  \includegraphics[width=1\linewidth]{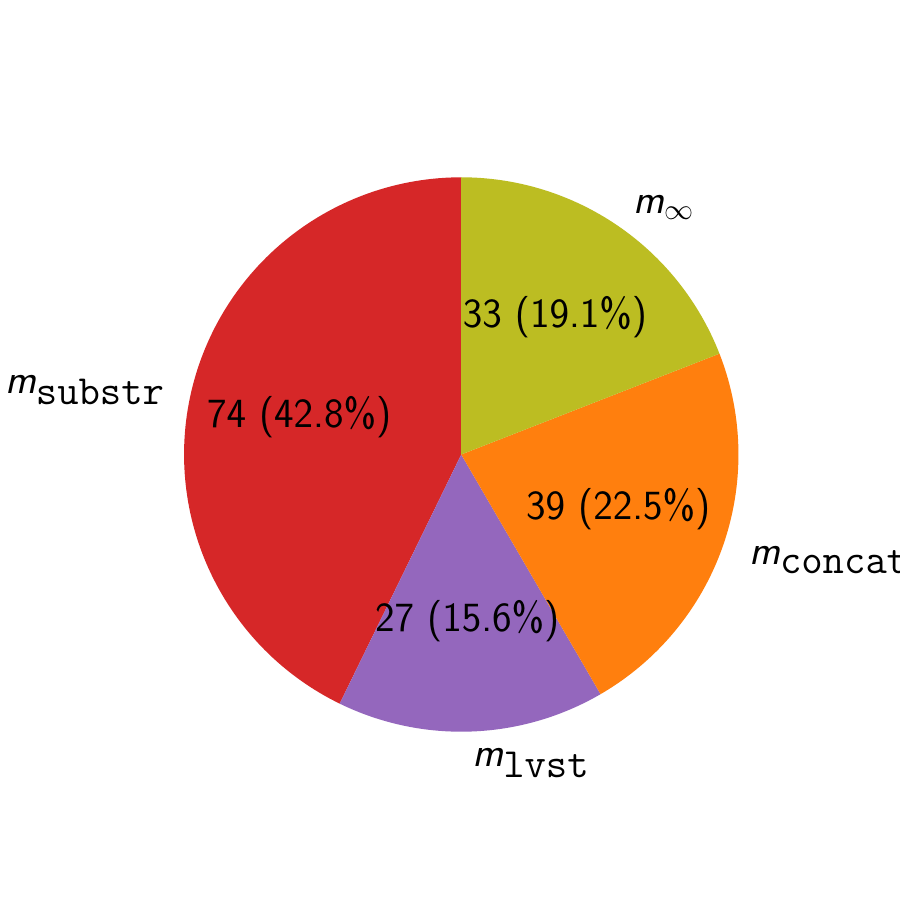}
  \vspace{-1cm}
  \caption{String.}
  \label{fig:num_thread_string}
\end{subfigure}\begin{subfigure}{.25\textwidth}
  \centering
  \includegraphics[width=1\linewidth]{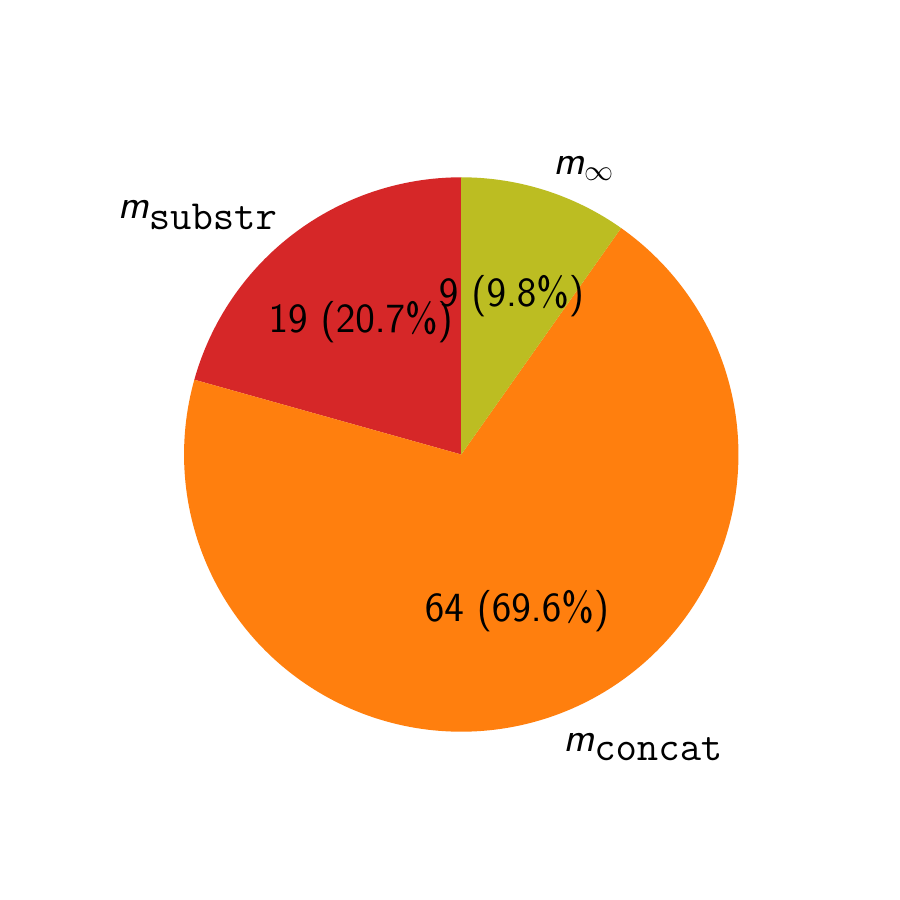}
  \vspace{-1cm}
  \caption{Blaze.}
  \label{fig:num_thread_blaze}
\end{subfigure}
\caption{Number of instances solved using each orimetric for each benchmark set.}
\label{fig:num_threads}
\end{figure}
All in all, the orimetric solvers were successful on the majority of benchmarks, although they prune away a part of the search space.
The results are better for the string than for the bitvector benchmarks. 
For bitvectors (\Cref{fig:num_thread_deobfusc,fig:num_thread_hd}), the success of the $\ametric_{\infty}$ solver indicates that 
pruning may have a negative impact, and there is room for better orimetrics.
Furthermore, it very much depends on the benchmark set which orimetric will fit. 
For example, the orimetric $\mulmetric{}$ fits better the deobfuscation benchmark set than Hacker's Delight. 
The orimetrics designed for strings cover a broader spectrum of the benchmarks.
Here, only $19\%$ of the String benchmark set and $10\%$ of the Blaze benchmark set were solved fastest by the $\ametric_{\infty}$ instance.
Remarkably, the Levenshtein distance was less suitable for the String and Blaze benchmarks than the Hamming distance was for the bitvector benchmarks. \begin{figure}
\centering
\begin{subfigure}{.45\textwidth}
  \centering
    \scalebox{0.9}{
\includegraphics[width=1\linewidth]{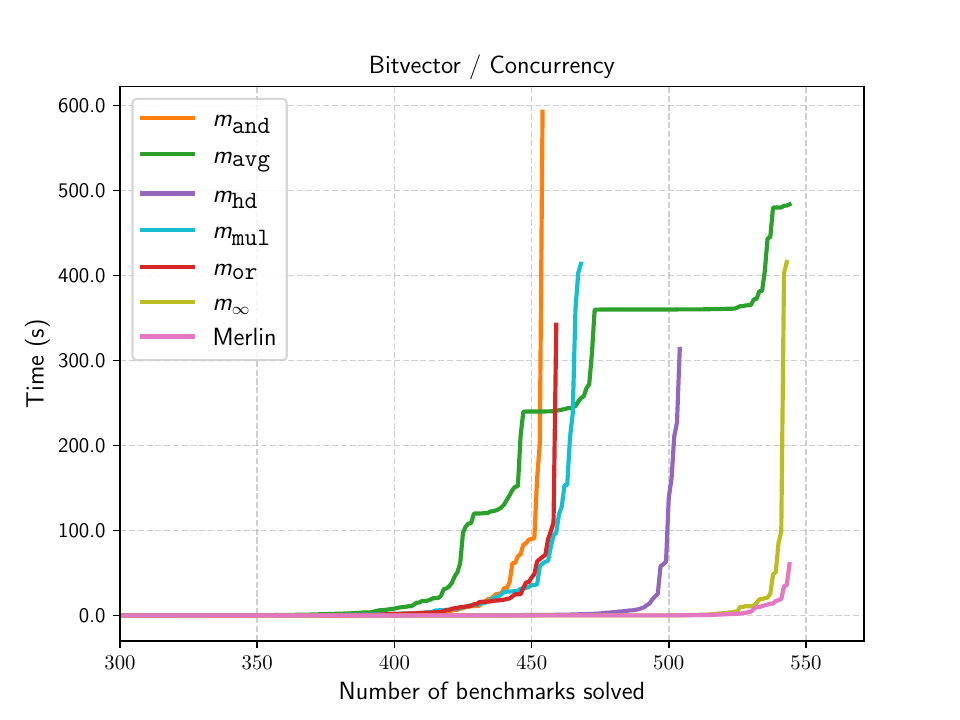}}
  \caption{Time comparison for the bitvector domain.}
  \label{fig:bv_concurrency_times_300}
\end{subfigure}\begin{subfigure}{.45\textwidth}
  \centering
    \scalebox{0.9}{
\includegraphics[width=1\linewidth]{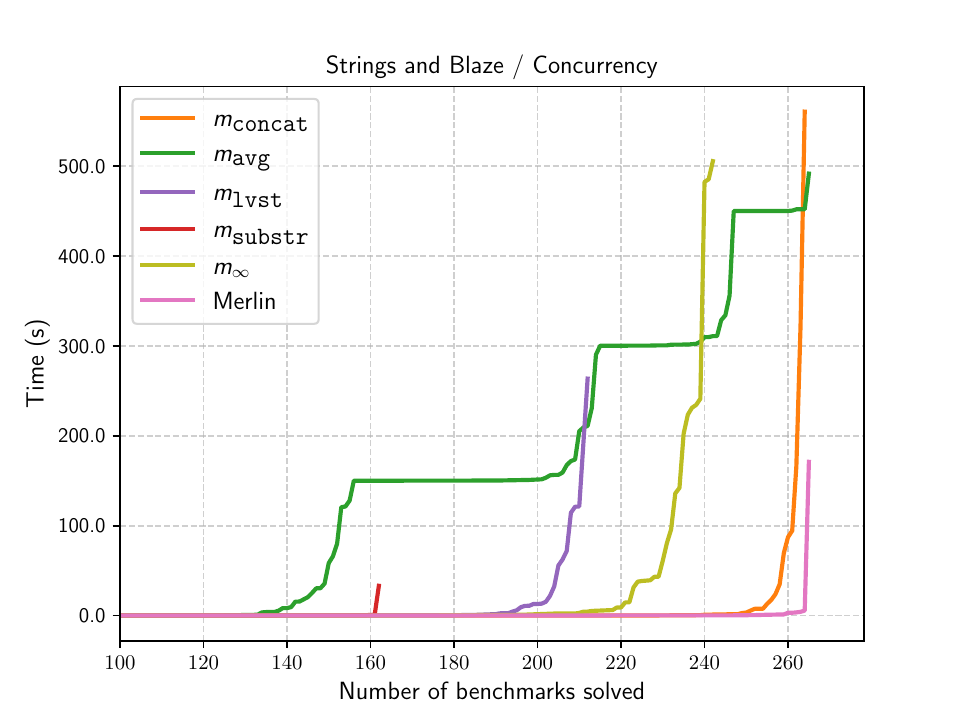}}
  \caption{Time comparison for the string domain.}
  \label{fig:string_concurrency_times_100}
\end{subfigure}
\caption{Solving times for bitvector and string benchmarks.}
\end{figure}

Next, we study the impact of concurrency.
Before execution, we do not know which orimetric will prune the search space best. 
For this reason, we run the solver instances in parallel.
\Cref{fig:bv_concurrency_times_300,fig:string_concurrency_times_100} show the runtime for each instance in isolation, and for Merlin.
Here, $\ametric_{\texttt{avg}}$ represents the average runtime across the instances.
For bitvectors, no instance performs as well as \toolname{}.
This means committing to one orimetric does not work well.
Picking an orimetric at random will, on average, result in $\ametric_{\texttt{avg}}$.
While this random choice solves almost as many benchmarks as \toolname{}, the runtime is 115x slower.
Always using $\ametric_{\infty}$ is 6x slower than \toolname{}.
For strings, the $\concatmetric$ instance comes closest to \toolname{} while being 10x slower.
To answer \ref{rq:concurrency}, concurrently running multiple solver instances employing different orimetrics is crucial for \toolname{}'s performance.
  \section{Related Work}\label{sec:related_work}
\paragraph{Metrics in Synthesis}
SyMetric~\cite{symetric} uses distance metrics for approximate observational equivalence.
They cluster programs whose output is within a given radius inside an equivalence class. 
SyMetric also prunes the search space using a ball around $\groundTruthFunc$.
Their resulting equivalence is not a congruence.
They rely on similarity between the inputs and the outputs for most operators.
Using this similarity, they are able to repair a spurious program given a predefined set of rewrite rules.

Syntia~\cite{syntia} incorporates metrics in top-down search.
For a given sketch, they randomly fill its holes and execute the resulting program.
Using metrics they assign each sketch a score based on similarity of the program's output and the specified output.
This score is then used to guide search.

\paragraph{Pruning and Factorization}
Most tools employ some form of OE factorization~\cite{transit,escher}.
For this,
Version Spaces Algebras are used in FlashFill~\cite{flashfill,flashmeta,flashfillpp} and 
finite tree automata are used by Dace~\cite{wang_fta}.
Absynthe~\cite{absynthe} enables the use of OE for programs with local variables. 
Morpheus~\cite{morpheus} and Neo~\cite{neo} use logical reasoning to prune parts of the search space for which they can prove that it does not contain a solution.
For this, Morpheus requires over-approximative specifications of program components.
It then generates sketches and uses an SMT solver to discard infeasible program sketches.
Neo extends this approach with conflict driven learning.
Having ruled out one sketch, it extracts the root cause of its infeasibility.
Then other sketches for which this root cause applies can also be pruned.
FlashFill++\cite{flashfillpp} proposes cuts to enable middle-out synthesis.
A cut creates a set of subproblems whose solutions can be combined to solve the whole synthesis problem.
This set of subproblems may be incomplete, i.e.\ 
not every possible way to dissect the synthesis problem is explored.
This effectively prunes the search space.

\paragraph{Approximation}
To further prune the search space, a common approach is to use abstractions or type information to represent programs.
With this, one can decide the feasibility of sketches in order to prune infeasible ones.
NOSDAQ~\cite{nosdaq} synthesizes database queries.
Through abstraction of database collections, they can prove that a sketch cannot be completed to yield the correct output.
Similarly, Synquid~\cite{synquid}, its extension~\cite{synquid_extension}, and $\lambda^2$~\cite{lambda2} use type information to prune infeasible partial programs.
Scythe~\cite{sql_from_examples} first searches for sketches that satisfy the specification in the abstract setting.
Then, it uses this set of sketches to search for an instantiation satisfying the specification in the concrete setting.
Absynthe~\cite{absynthe} is similar in that it uses user-defined abstract semantics to create viable sketches. 
For each a viable sketch, it fills the holes and executes the concrete program using a given interpreter.
If the program satisfies the specification, it is returned as the solution.

\paragraph{Learning}
Similar to DryadSynth~\cite{dryadsynth}, HySynth~\cite{hysynth} and DeepCoder~\cite{deepcoder} use Large Language Models to instantiate an enumeration order.
LaSy~\cite{lasy} and ReGuS~\cite{regus} blend upfront and just-in-time learning.
They solve a suite of related synthesis tasks.
The solutions for easier tasks can then be used as a callable component to solve harder tasks.
In our framework, we can simulate these learning approaches by appropriately setting the initial enumeration order.
Atlas~\cite{atlas} extends Blaze~\cite{blaze} by learning useful predicates for abstraction from a set of training problems upfront.
As we have shown for Blaze, predicate abstraction can be seen as an instantiation of oriented metrics.
Bester~\cite{bester} motivates just-in-time learning by showing that for complex benchmarks, partial solutions are often part of the intended solution.
FrAngel~\cite{frangel} modifies partial solutions to solve the synthesis problem.

\section{Conclusion and Future Work}\label{sec:conclusion}
We presented oriented metrics as a foundation for pruning, factorization, refinement, and learning in syntax-guided synthesis. 
We defined a synthesis algorithm that has these features and is parametric in the enumeration order and the initial orimetric. 
We invented new orimetrics for the string and the bitvector domain that occur frequently in \sygus{} problems.
We implemented our approach in a tool called \toolname{}, and obtained a speed-up of an order of magnitude compared to the state-of-the-art.
In the future, we will further explore the design of orimetrics and the choice of when to prune.

\bibliographystyle{ACM-Reference-Format}
\bibliography{bibliography}

\clearpage{}\appendix
\section{More on Evaluation}
\Cref{fig:bv_times_all,fig:string_times_all,fig:blaze_times_all} show the full graph comparing the solving times.
\Cref{fig:ablation_all_cactus} shows the ablation studies for all SyGuS benchmarks.
\begin{figure}
\centering
\begin{subfigure}{.5\textwidth}
  \centering
  \includegraphics[width=1\linewidth]{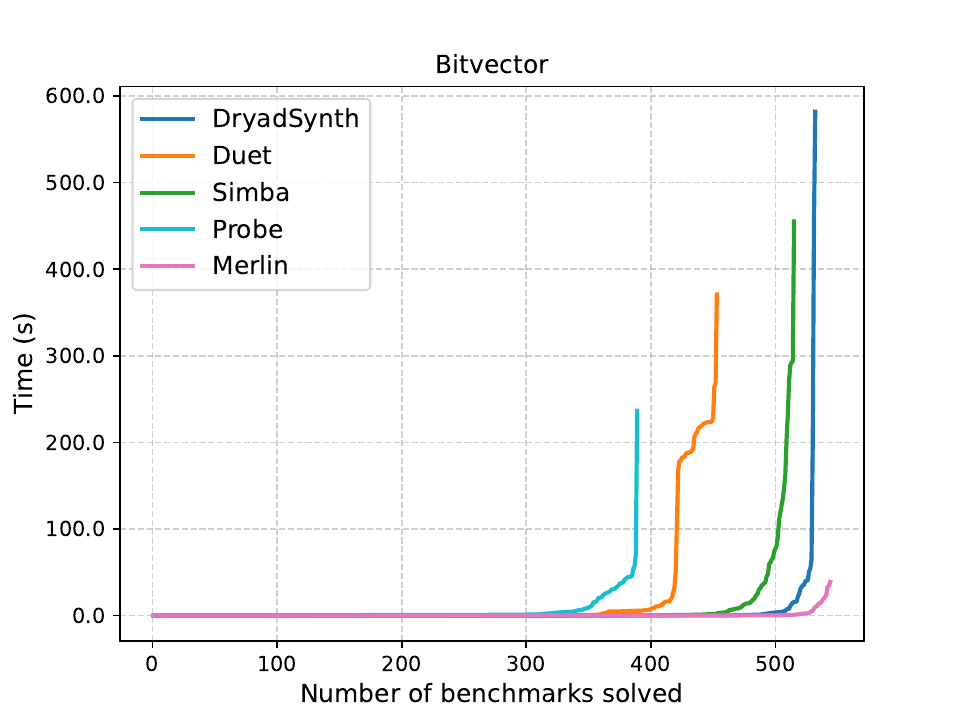}
  \caption{Time comparison for the bitvector domain.}
  \label{fig:bv_times_all}
\end{subfigure}\begin{subfigure}{.5\textwidth}
  \centering
  \includegraphics[width=1\linewidth]{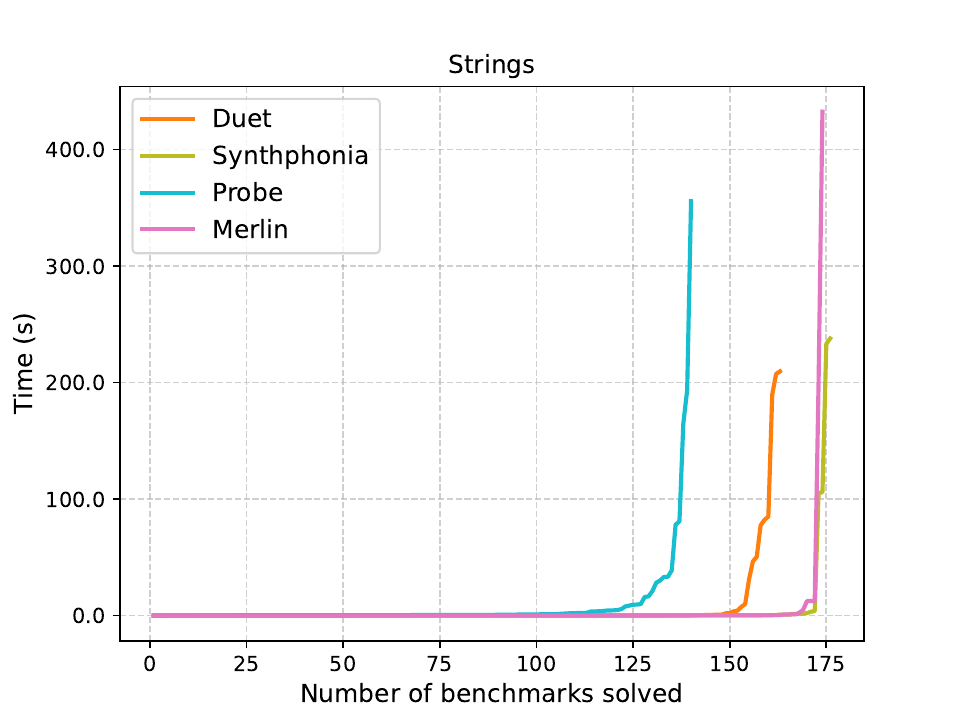}
  \caption{Time comparison for the string domain.}
  \label{fig:string_times_all}
\end{subfigure}
\begin{subfigure}{.5\textwidth}
  \centering
  \includegraphics[width=1\linewidth]{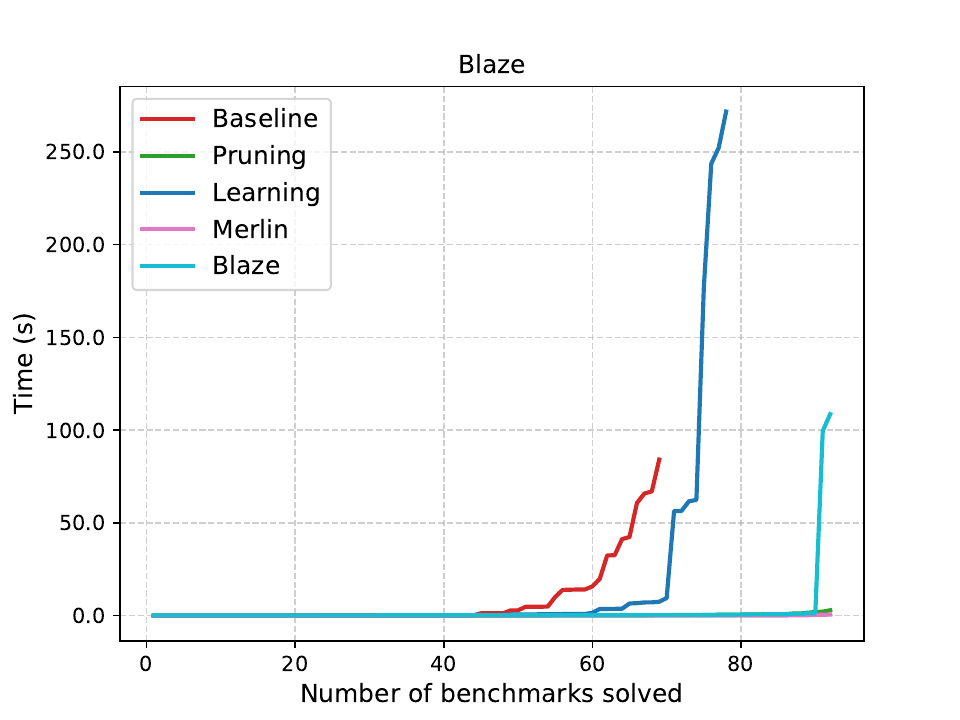}
  \caption{Time comparison for the Blaze benchmarks.}
  \label{fig:blaze_times_all}
\end{subfigure}\begin{subfigure}{.5\textwidth}
  \centering
  \includegraphics[width=1\linewidth]{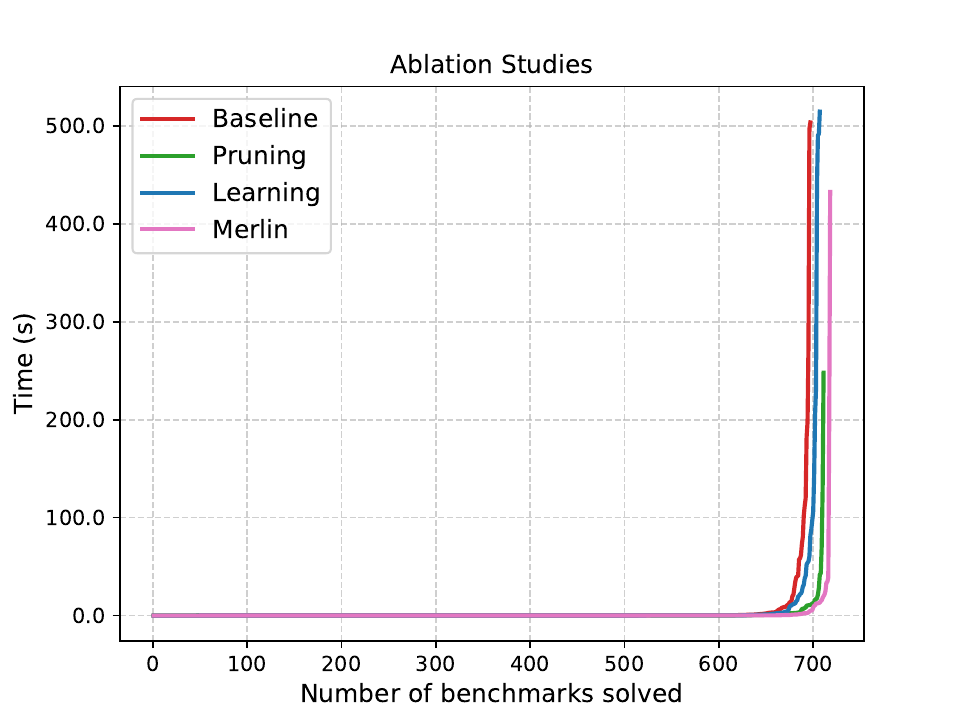}
  \caption{Time comparison for all SyGuS benchmarks.}
  \label{fig:ablation_all_cactus}
\end{subfigure}
\caption{Solving time for the bitvector and string benchmarks and ablation studies.}
\end{figure}

\Cref{fig:bv_fastest,fig:string_fastest} show the number non-trivial benchmarks solved by each tool.
We consider a benchmark to be non-trivial if it takes one of the solvers at least 100 milliseconds to produce a solution.
\Cref{fig:blaze_fastest} shows the number of fastest solved non-trivial benchmarks from the Blaze benchmark set.
\begin{figure}
\begin{subfigure}{.5\textwidth}
  \includegraphics[width=1\linewidth]{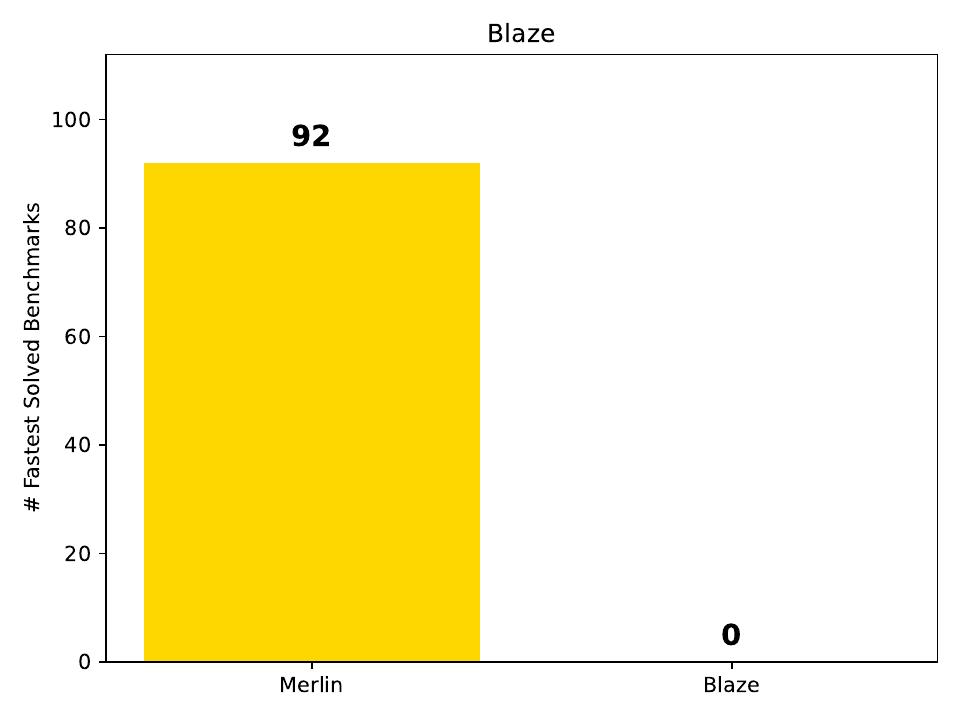}
  \caption{Number of fastest solved non-trivial Blaze benchmarks.}
  \label{fig:blaze_fastest}
\end{subfigure}\begin{subfigure}{.5\textwidth}
  \centering
  \includegraphics[width=1\linewidth]{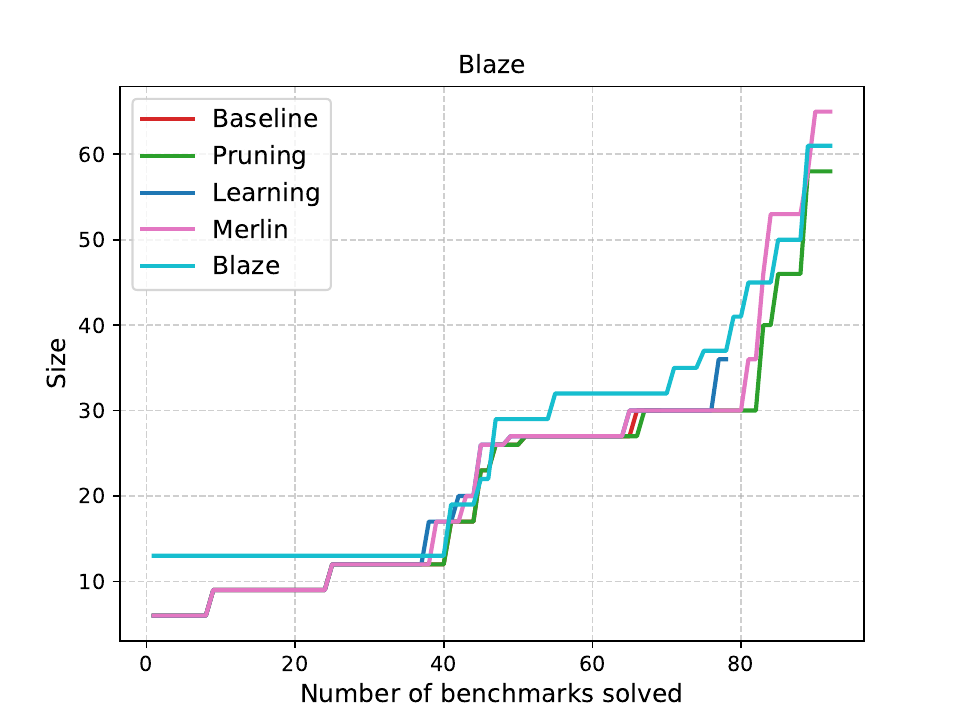}
  \caption{Size comparison for the Blaze benchmarks.}
  \label{fig:blaze_sizes}
\end{subfigure}
\caption{
  Fastest solved benchmarks and size comparison for Blaze Benchmarks.
}
\end{figure}

\begin{figure}
\centering
\begin{subfigure}{.5\textwidth}
  \centering
  \includegraphics[width=1\linewidth]{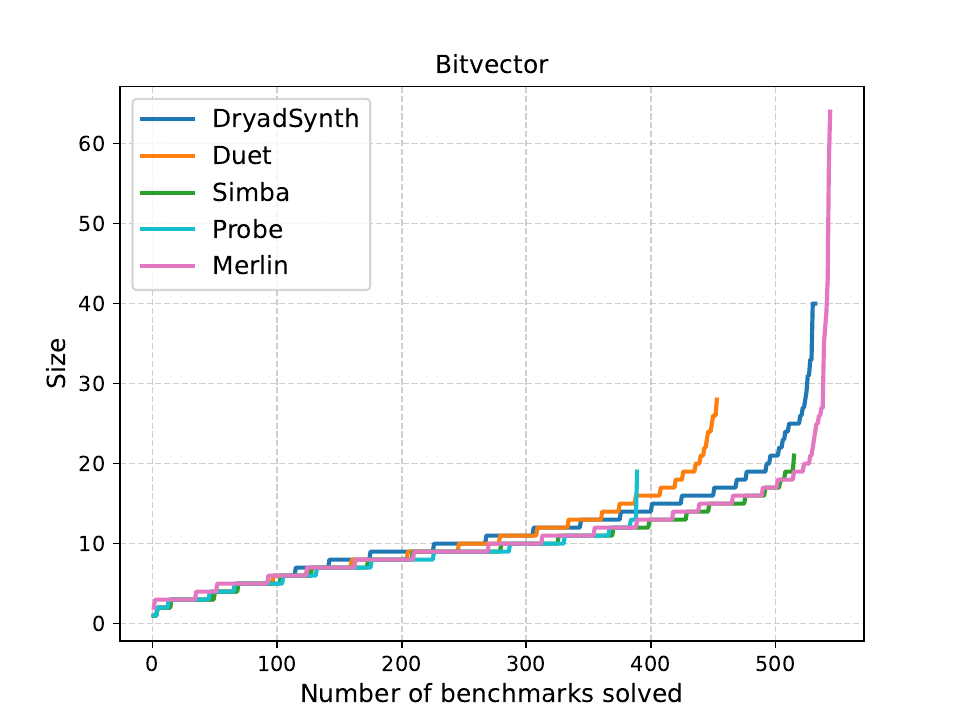}
  \caption{Size comparison for the bitvector domain.}
  \label{fig:bv_sizes}
\end{subfigure}\begin{subfigure}{.5\textwidth}
  \centering
  \includegraphics[width=1\linewidth]{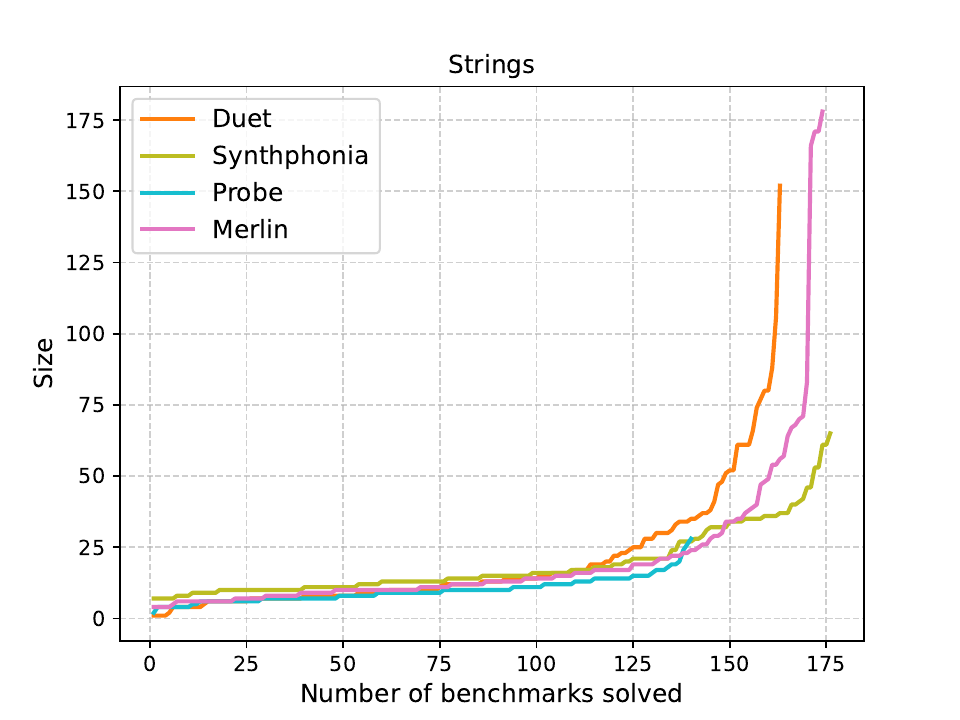}
  \caption{Size comparison for the string domain.}
  \label{fig:string_sizes}
\end{subfigure}
\caption{Solution sizes for the bitvector and string benchmarks.}
\label{fig:sizes}
\end{figure}

\begin{figure}
\centering
\begin{subfigure}{.4\textwidth}
  \centering
  \includegraphics[width=1\linewidth]{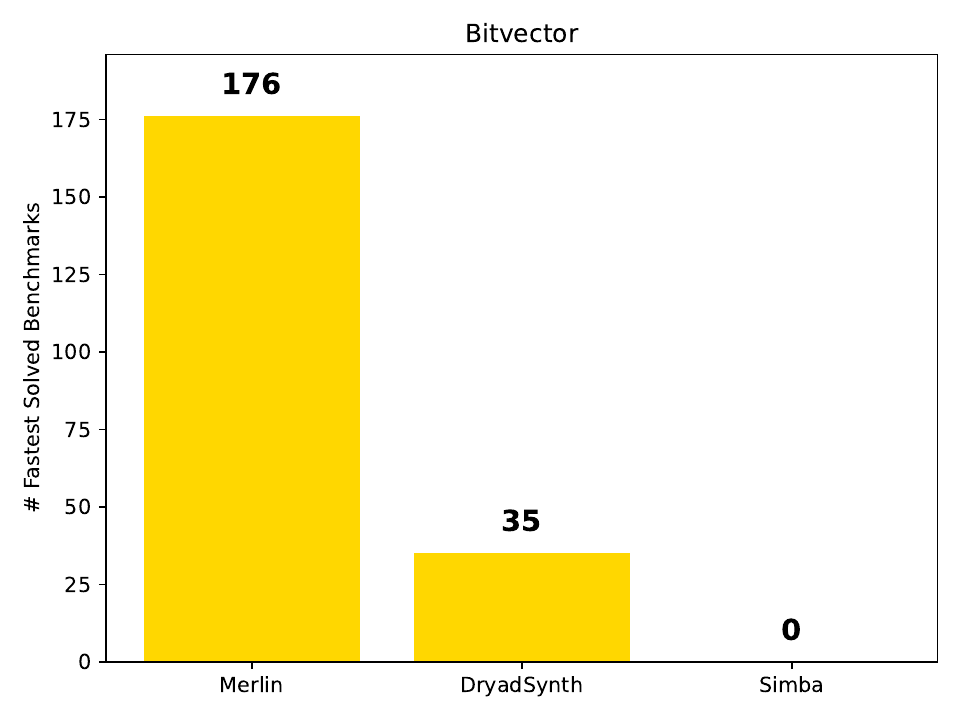}
  \caption{Bitvector domain.}
  \label{fig:bv_fastest}
\end{subfigure}\begin{subfigure}{.4\textwidth}
  \centering
  \includegraphics[width=1\linewidth]{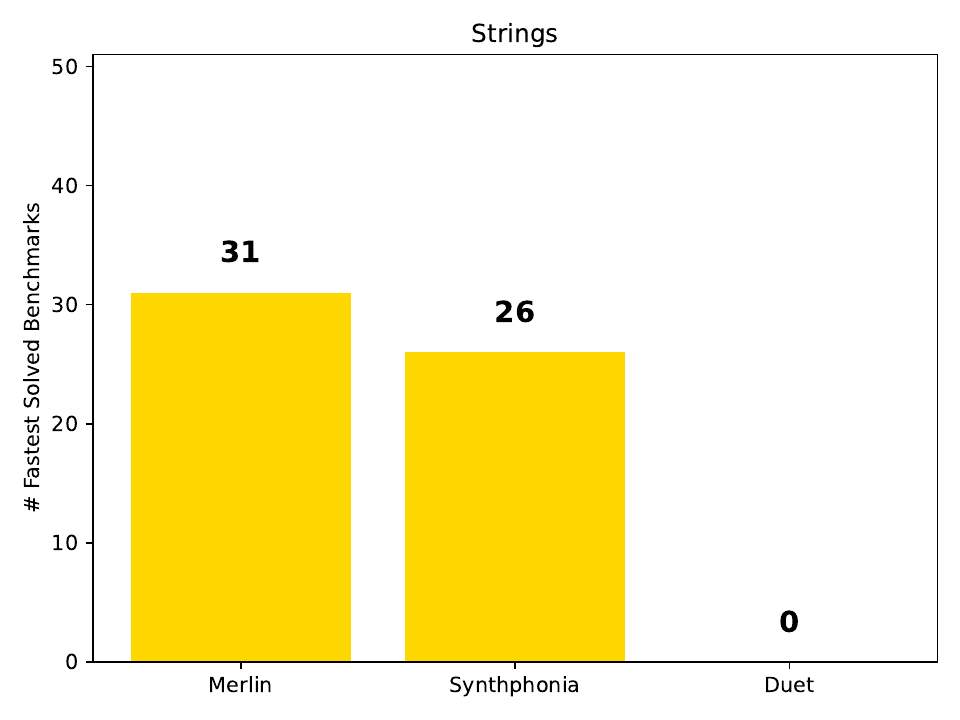}
  \caption{String domain.}
  \label{fig:string_fastest}
\end{subfigure}
\caption{Fastest solved non-trivial benchmarks for the bitvector and string domains.}
\end{figure}

We compare the quality of the benchmark solutions.
\Cref{fig:sizes} depicts the sizes of the benchmark solutions.
The size of the program is an indicator for its quality: 
Following Occam's razor, smaller solutions are better than bigger solutions.
Bigger solutions might be a sign for overfitting the given examples.

\Cref{fig:bv_sizes} shows the sizes for the bitvector benchmarks.
Solutions from Duet and DryadSynth are generally larger than solutions of the other tools.
Probe and Simba generally produce the smallest solutions. 
\toolname{} produces solutions of similar size as Probe and Simba but is also able to produce bigger solutions for harder benchmarks.

\Cref{fig:string_sizes} shows the solution sizes for the string domain.
Again, Probe generally produces very small solutions and Duet tends to generate bigger solutions.
While generating sligthly more complex solutions for easier benchmarks, Synthphonia produces smaller solutions even for harder benchmarks.
\toolname{} initially also produces smaller solutions, but the solution size increases drastically for more complex problems.
\toolname{} is only able to produce such big solutions through its learning mechanism.
Synthphonia has more advanced deduction capabilities.
This advanced deduction makes it produce more compact solutions for complex problems.

\Cref{fig:blaze_sizes} shows the solutions sizes for the Blaze benchmarks.
Overall, the sizes of Blaze's solutions are comparable to the sizes of \toolname{}'s solutions.
Note that Blaze always returns a cost-minimal program, according to some costs defined in the grammar.
Blaze's algorithm could be configured to return the size-minimal program instead. \section{Proofs}
\begin{lemma}
    An orimetric $\functionDistanceMeasure$ induces an equivalence relation $\inducedequivof{\functionDistanceMeasure}$ with 
    \begin{equation*}
        \aFunc \inducedequivof{\functionDistanceMeasure} \aFuncp 
        \quad \text{ if } \quad
        \functionDistanceMeasureOf{\aFunc}{\aFuncp} = 0
        \ .
    \end{equation*}
\end{lemma}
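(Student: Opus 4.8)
The plan is to verify the three defining properties of an equivalence relation directly from the three axioms of an orimetric, which align almost one-to-one. Concretely, I would prove reflexivity of $\inducedequivof{\functionDistanceMeasure}$ from the reflexivity axiom of orimetrics, symmetry from the symmetry-at-zero axiom, and transitivity from the triangle inequality together with the fact that an orimetric takes values in $\realsgtz$.

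First I would treat reflexivity: for any $\aFunc$ in the function space, the orimetric axiom $\functionDistanceMeasureOf{\aFunc}{\aFunc}=0$ is exactly the condition defining $\aFunc\inducedequivof{\functionDistanceMeasure}\aFunc$, so reflexivity is immediate. Next, for symmetry, I would assume $\aFunc\inducedequivof{\functionDistanceMeasure}\aFuncp$, i.e.\ $\functionDistanceMeasureOf{\aFunc}{\aFuncp}=0$; the symmetry-at-zero axiom then yields $\functionDistanceMeasureOf{\aFuncp}{\aFunc}=0$, which is precisely $\aFuncp\inducedequivof{\functionDistanceMeasure}\aFunc$.

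The only step that does more than quote an axiom verbatim is transitivity. Assuming $\aFunc\inducedequivof{\functionDistanceMeasure}\aFuncp$ and $\aFuncp\inducedequivof{\functionDistanceMeasure}\aFuncpp$, i.e.\ $\functionDistanceMeasureOf{\aFunc}{\aFuncp}=0$ and $\functionDistanceMeasureOf{\aFuncp}{\aFuncpp}=0$, the triangle inequality gives
\begin{equation*}
\functionDistanceMeasureOf{\aFunc}{\aFuncpp}\ \leq\ \functionDistanceMeasureOf{\aFunc}{\aFuncp}+\functionDistanceMeasureOf{\aFuncp}{\aFuncpp}\ =\ 0\ ,
\end{equation*}
and since $\functionDistanceMeasure$ maps into $\realsgtz$, necessarily $\functionDistanceMeasureOf{\aFunc}{\aFuncpp}=0$, hence $\aFunc\inducedequivof{\functionDistanceMeasure}\aFuncpp$. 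I do not expect any genuine obstacle here; the mild subtlety is simply remembering that nonnegativity of the codomain is what upgrades the inequality $\functionDistanceMeasureOf{\aFunc}{\aFuncpp}\leq 0$ to an equality. This argument is in fact the same as \Cref{Lemma:MainProperty}~(i), so it could alternatively be dispatched by a one-line appeal to that lemma.
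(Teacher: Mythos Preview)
Your proposal is correct and follows essentially the same approach as the paper's proof: both verify reflexivity, symmetry, and transitivity of $\inducedequivof{\functionDistanceMeasure}$ directly from the corresponding orimetric axioms, with transitivity using the triangle inequality together with nonnegativity of the codomain. The only cosmetic difference is the order in which the three properties are handled.
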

\begin{proof}
    We first show reflexivity. 
    Because $\functionDistanceMeasureOf{\aFunc}{\aFunc} = 0$ by definition, we have $\aFunc \inducedequivof{\functionDistanceMeasure} \aFuncp$.

    Next, we show transitivity.
    We show $\aFunc \inducedequivof{\functionDistanceMeasure} \aFuncpp$.
    We have $\aFunc \inducedequivof{\functionDistanceMeasure} \aFuncp$ and $\aFuncp \inducedequivof{\functionDistanceMeasure} \aFuncpp$.
    This means that $\functionDistanceMeasureOf{\aFunc}{\aFuncp} = 0$ and $\functionDistanceMeasureOf{\aFuncp}{\aFuncpp} = 0$.
    Therefore, $\functionDistanceMeasureOf{\aFunc}{\aFuncp} + \functionDistanceMeasureOf{\aFuncp}{\aFuncpp} = 0$. 
    Because $\functionDistanceMeasureOf{\aFunc}{\aFuncpp} \geq 0$ 
    and 
    $\functionDistanceMeasureOf{\aFunc}{\aFuncpp} \leq \functionDistanceMeasureOf{\aFunc}{\aFuncp} + \functionDistanceMeasureOf{\aFuncp}{\aFuncpp} = 0$,
    we get $\functionDistanceMeasureOf{\aFunc}{\aFuncpp} = 0$.
    Thus, $\aFunc \inducedequivof{\functionDistanceMeasure} \aFuncpp$ holds.

    Lastly, we show symmetry.
    We show $\aFuncp \inducedequivof{\functionDistanceMeasure} \aFunc$.
    We have $\aFunc \inducedequivof{\functionDistanceMeasure} \aFuncp$.
    This means that  
    $\functionDistanceMeasureOf{\aFunc}{\aFuncp} = 0$.
    By definition, we get 
    $\functionDistanceMeasureOf{\aFuncp}{\aFunc} = 0$.
    And thus, 
    $\aFuncp \inducedequivof{\functionDistanceMeasure} \aFunc$
    holds.
\end{proof}

\subsubsection*{Proof of \Cref{lem:bu_closed}}
\begin{proof}[\unskip\nopunct]
   We show that bottom-up enumerable sets are closed under arbitrary unions.
   Let $\aprogSet_1, \aprogSet_2, \ldots$ be bottom-up enumerable sets.
   Let $\aprogSet = \bigcup_{i} \aprogSet_i$ be its union.
   Further, let $\aprog$ be a program in $\aprogSet$ that has children.
   That means $\aprog$ is of shape 
   $\aprog = \anOperatorOf{\aprog_1, \ldots, \aprog_n}$.
   Since $\aprog \in \aprogSet$ there must be a set $\aprogSet_i$ with $\aprog \in \aprogSet_i$.
   Because every $\aprogSet_i$ is bottom-up enumerable, we have
   $\aprog_j \in \aprogSet_i$ for every $1 \leq j \leq n$.
   From $\aprogSet_i \subseteq \aprogSet$ follows that $\aprog_j \in \aprogSet$.
\end{proof}

\subsubsection*{Proof of \Cref{Lemma:BottomUpEnumeration}}
\begin{proof}[\unskip\nopunct]
We first show
$\enumof{\enumerationOrder{}}{\aprogSet} \subseteq \buof{\aprogSet}$.
By definition, $\enumof{\enumerationOrder{}}{\aprogSet}$ returns a bottom-up enumerable set that is a subset of $\aprogSet$.
Because $\buof{\aprogSet}$ is the largest bottom-up enumerable set, the inclusion trivially holds.

Next, we show 
$\enumof{\enumerationOrder{}}{\aprogSet} \supseteq \buof{\aprogSet}$
Towards a contradiction, assume that there is a program $\aprog \in \buof{\aprogSet}$ that does not exist in $\enumof{\enumerationOrder{}}{\aprogSet}$, 
$\aprog \not\in \enumof{\enumerationOrder{}}{\aprogSet}$.
Let $i \in \nat$ be the index of $\aprog$ in $\enumerationOrder{}$.
For $\aprogSet$ not to be in $\enumof{\enumerationOrder{}}{\aprogSet}$ means that there is a subprogram $\aprogp$ of $\aprog$, $\aprogp \subprogrel \aprog$, that was not enumerated before $\aprog$.
Thus, its index $j$ must be greater than $i$.
Therefore, $\aprog \enumerationOrderNotEq{} \aprogp$ although $\aprogp \subprogrel \aprog$.
This is a contradiction to $\enumerationOrder{}$ being a bottom-up enumeration-order.
\end{proof}

\begin{lemma}\label[lemma]{lem:factorize_rep_system}
Given a search space $\aprogSet$ and an enumeration order $\enumerationOrder{}$ and a metric $\functionDistanceMeasure$, 
the function
$\factorizeFuncOf{\aprogSet}{\functionDistanceMeasure}{\enumerationOrder{}}$ returns a representative system 
$\semanticsOf{\factorizeFuncOf{\aprogSet}{\functionDistanceMeasure}{\enumerationOrder{}}}$
for
$\factorize{\semanticsOf{\aprogSet}}{\inducedequivof{\functionDistanceMeasure}}$ with 
$\semanticsOf{\aprogSetp} = \setCollector{\semanticsOf{\aprog}}{\aprog \in \aprogSetp}$.
\end{lemma}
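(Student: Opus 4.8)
The plan is to unfold the definition of a representative system from \Cref{sec:preliminaries} and verify its two clauses for the set $\aprogSetp\defineeq\factorizeFuncOf{\aprogSet}{\functionDistanceMeasure}{\enumerationOrder{}}$. First, that $\semanticsOf{\aprogSetp}$ is a subset of $\semanticsOf{\aprogSet}$, which is immediate from $\aprogSetp\subseteq\aprogSet$, and that $\inducedequivof{\functionDistanceMeasure}$ restricted to $\semanticsOf{\aprogSet}$ is an equivalence, which follows from \Cref{Lemma:MainProperty}(i). Second, that $\semanticsOf{\aprogSetp}$ meets every $\inducedequivof{\functionDistanceMeasure}$-class of $\semanticsOf{\aprogSet}$ in exactly one element. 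I would split the latter into an existence part (every class is hit) and a uniqueness part (no class is hit twice) and handle them separately.

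For existence, I would fix a class $C\in\factorize{\semanticsOf{\aprogSet}}{\inducedequivof{\functionDistanceMeasure}}$ and look at its preimage $\aprogSet_C\defineeq\setCompact{\aprog\in\aprogSet}{\semanticsOf{\aprog}\in C}$, which is nonempty by the definition of the quotient. Since $\enumerationOrder{}$ is a well-founded total order, $\aprogSet_C$ has a $\enumerationOrder{}$-least element $\aprog^{\star}$, and I would argue $\aprog^{\star}\in\aprogSetp$: if some $\aprogp\enumerationOrderNotEq{}\aprog^{\star}$ had $\semanticsOf{\aprogp}\inducedequivof{\functionDistanceMeasure}\semanticsOf{\aprog^{\star}}$, then, $C$ being an equivalence class containing $\semanticsOf{\aprog^{\star}}$, also $\semanticsOf{\aprogp}\in C$, so $\aprogp\in\aprogSet_C$ would contradict the minimality of $\aprog^{\star}$. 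Hence $\semanticsOf{\aprog^{\star}}\in C\cap\semanticsOf{\aprogSetp}$.

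For uniqueness, I would take $\aprog_1,\aprog_2\in\aprogSetp$ with $\semanticsOf{\aprog_1}\inducedequivof{\functionDistanceMeasure}\semanticsOf{\aprog_2}$ and use totality of $\enumerationOrder{}$ to assume w.l.o.g.\ $\aprog_1\enumerationOrder{}\aprog_2$. If $\aprog_1\neq\aprog_2$, then $\aprog_1\enumerationOrderNotEq{}\aprog_2$, and since $\inducedequivof{\functionDistanceMeasure}$ is symmetric we get $\semanticsOf{\aprog_2}\inducedequivof{\functionDistanceMeasure}\semanticsOf{\aprog_1}$, which violates the defining condition of $\aprog_2\in\aprogSetp$. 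So $\aprog_1=\aprog_2$, hence $\semanticsOf{\aprog_1}=\semanticsOf{\aprog_2}$; note this simultaneously covers the case of distinct programs with equal semantics (via reflexivity of $\inducedequivof{\functionDistanceMeasure}$). Combining existence and uniqueness finishes the proof.

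I do not expect a real obstacle; the only point deserving care is the syntactic/semantic mismatch --- $\factorizeFunc$ selects \emph{programs}, whereas the quotient lives on $\semanticsOf{\aprogSet}$ --- which is handled cleanly by routing the argument through the preimage $\aprogSet_C$ and using that $\inducedequivof{\functionDistanceMeasure}$-equivalence of semantics transports membership in $\aprogSet_C$. Well-foundedness of $\enumerationOrder{}$ is exactly what produces the representative in the existence step, and totality of $\enumerationOrder{}$ is exactly what rules out a second representative in the uniqueness step.
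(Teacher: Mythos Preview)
Your proposal is correct and follows essentially the same approach as the paper's proof: both split into an existence part and a uniqueness part, and both use totality of $\enumerationOrder{}$ for uniqueness in the same way. Your existence argument is in fact slightly more explicit than the paper's---you invoke well-foundedness of $\enumerationOrder{}$ to pick the $\enumerationOrder{}$-least element of the preimage $\aprogSet_C$, whereas the paper simply asserts ``there must be a program $\aprogpp\in\aprogSetp$ with $\semanticsOf{\aprogpp}\equiv\semanticsOf{\aprog}$'' without spelling out why---so your version actually fills a small gap in the paper's presentation.
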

\begin{proof}
   Let $\aprogSetp = 
\factorizeFuncOf{\aprogSet}{\functionDistanceMeasure}{\enumerationOrder{}}$.
The representative system is then $\semanticsOf{\aprogSetp}$.
Let $\aprog$ be a program of $\aprogSet$.
Further, let $\classof{\semanticsOf{\aprog}}$ be its equivalence class.

We first show that there is a program that represents $\aprog$, i.e. there is a program $\aprogp \in \aprogSetp$ with 
$\classof{\semanticsOf{\aprogp}} = \classof{\semanticsOf{\aprog}}$.
If $\aprog \in \aprogSetp$, we are done.
If not, there must be a program $\aprogpp \in \aprogSetp$ with $\semanticsOf{\aprogpp}  \equiv \semanticsOf{\aprog}$.
This also means that $\classof{\semanticsOf{\aprogpp}} = \classof{\semanticsOf{\aprog}}$.

Next, we show that there is exactly one representative for each class.
Towards a contradiction, assume there are programs $\aprog, \aprogp \in \aprogSetp$ with 
$\classof{\semanticsOf{\aprog}} = \classof{\semanticsOf{\aprogp}}$
and $\aprog \neq \aprogp$.
Since $\enumerationOrder{}$ is total, we either have 
$\aprog \enumerationOrderNotEq{} \aprogp$ or 
$\aprogp \enumerationOrderNotEq{} \aprog$.
Assume the first case. The proof for the other case is analogous to the following proof.
Since 
$\classof{\semanticsOf{\aprog}} = \classof{\semanticsOf{\aprogp}}$
we have $\semanticsOf{\aprog} \equiv \semanticsOf{\aprogp}$.
Because $\aprog \enumerationOrderNotEq{} \aprogp$, $\aprogp$ cannot be in $\aprogSetp$. 
\end{proof}

\subsubsection*{Proof of \Cref{Lemma:Enumeration}}
\begin{proof}[\unskip\nopunct]
   The first part of the lemma immediately follows from \Cref{lem:factorize_rep_system}.

   We now show the second part of the lemma.
   Let 
$\aprogSetp = 
\factorizeFuncOf{\aprogSet}{\functionDistanceMeasure}{\enumerationOrder{}}$.
Let $\aprog$ be a program of $\aprogSetp$.
Assume it has children, otherwise there is nothing to show.
This means, $\aprog$ is of the following shape:
$\aprog = \anOperatorOf{\aprog_1, \ldots, \aprog_n}$.
We now have to show that each $\aprog_i$ is in $\aprogSetp$.
Towards a contradiction assume there is a $\aprog_i \not\in \aprogSetp$.
That means there is a program $\aprogp_i \in \aprogSetp$ with 
$\aprogp_i \enumerationOrderNotEq{} \aprog_i$ and 
$\semanticsOf{\aprogp_i} \equiv \semanticsOf{\aprog_i}$.
Let $\aprogp = 
\anOperatorOf{\aprog_1, \ldots, \aprogp_i, \ldots, \aprog_n}$ be the result of replacing $\aprog_i$ by $\aprogp_i$ in $\aprog$.
Because $\ametric$ is a congruence, we have 
$\semanticsOf{\aprog} \inducedequivof{\ametric} \semanticsOf{\aprogp}$.
Since $\aprogp_i \enumerationOrderNotEq{} \aprog_i$, and 
$\enumerationOrder{}$ is a precongruence, 
we have 
$\aprogp \enumerationOrderNotEq{} \aprog$.
Thus, $\aprog$ cannot be a program of $\aprogSetp$.
This is a contradiction.
\end{proof}

\subsubsection*{Proof of \Cref{lem:progress}}
\begin{proof}[\unskip\nopunct]
    An orimetric $\functionDistanceMeasurep$ relates less objects than the orimetric $\functionDistanceMeasure$, if the following holds for all $\anelema, \anelemb \in \aset$: 
    $\functionDistanceMeasurepOf{\anelema}{\anelemb} = 0$ 
    implies
    $\functionDistanceMeasureOf{\anelema}{\anelemb} = 0$.
    Towards a contradiction, assume that the program $\aprog$ is discovered twice.
    That means, for the two orimetrics $\functionDistanceMeasure$ and $\functionDistanceMeasurep$, $\groundTruthFunc$ and $\semanticsOf{\aprog}$
    have distance $0$:
    $\functionDistanceMeasureOf{\semanticsOf{\aprog}}{\groundTruthFunc} = 0$ and 
    $\functionDistanceMeasurepOf{\semanticsOf{\aprog}}{\groundTruthFunc} = 0$.
    Otherwise, $\findProgFunc$ would not have returned the programs.
    Let $\ametricpar{\aprog}$ be the metric resulting from the refinement that happened immediately after 
    $\aprog$ was first found using $\functionDistanceMeasure$.
    Then $\ametricpar{}$ is a refinement of $\functionDistanceMeasure$.
    Therefore, in $\inducedequivof{\ametricpar{}}$, $\aprog$ and $\groundTruthFunc$ are not related,
    $\ametricparof{\aprog}{\semanticsOf{\aprog}}{\groundTruthFunc} \neq 0$.
    Since $\functionDistanceMeasurep$ is either $\ametricpar{\aprog}$ or some later refinement,
    $\functionDistanceMeasurep \subseteq \ametricpar{\aprog}$ holds.
    This means, the equivalence of $\ametricpar{\aprog}$ relates more programs than the induced equivalence of $\functionDistanceMeasurep$.
    But we assumed $\functionDistanceMeasurepOf{\semanticsOf{\aprog}}{\groundTruthFunc} = 0$. 
    That would imply that 
    $\ametricparof{\aprog}{\semanticsOf{\aprog}}{\groundTruthFunc} = 0$.
    This is a contradiction.
\end{proof}

\subsubsection*{Proof of \Cref{lem:cegar:termination}}
\begin{proof}[\unskip\nopunct]
Let $\aprogSet$ be the search space returned after $\pruneFunc$ and $\factorizeFunc$ are executed.
The search space $\aprogSet$ is complete wrt.\ $\functionDistanceMeasure$.
This means there is a program $\aprogp \in \aprogSet$ with
$\functionDistanceMeasureOf{\semanticsOf{\aprogp}}{\semanticsOf{\aprog}} = 0$.
Thus, at some point $\aprogp$ is the current program in $\findProgFunc$ and will be returned.
\end{proof}

\subsubsection*{Proof of \Cref{lem:lifting}}
\begin{proof}[\unskip\nopunct]
    Let $\ametrictilde$ be the orimetric that will be lifted,
    and $\ametricpar{\asety}$ the lifted metric.

    We first show reflexivity.
    \begin{equation*}
    \ametricparof{\asety}{\aFunc}{\aFunc} 
    = 
    \sum_{\anelemy \in \asety}
    \ametrictildeof{\aFuncOf{\anelemy}}{\aFuncOf{\anelemy}}
    \end{equation*}
    Because $\ametrictilde$ is an orimetric, we have 
    $\ametrictildeof{\aFuncOf{\anelemy}}{\aFuncOf{\anelemy}} = 0$.
    This carries over to the sum.

    Next, we show symmetry at zero.
    Assume 
    $
    \ametricparof{\asety}{\aFunc}{\aFuncp} 
    = 0
    $.
    This means 
    $
    \sum_{\anelemy \in \asety}
    \ametrictildeof{\aFuncOf{\anelemy}}{\aFuncpOf{\anelemy}}
    = 0
    $.
    Therefore, we have 
    $
    \ametrictildeof{\aFuncOf{\anelemy}}{\aFuncpOf{\anelemy}} = 0
    $
    for every $\anelemy \in \asety$.
    Because $\ametrictilde$ is an orimetric, that means 
    $
    \ametrictildeof{\aFuncpOf{\anelemy}}{\aFuncOf{\anelemy}} = 0
    $
    for every $\anelemy \in \asety$.
    This, again, carries over to the sum.
    Thus,
    $
    \ametricparof{\asety}{\aFuncp}{\aFunc} 
    = 0
    $.

    Last, we show the triangle inequality.
    \begin{align*}
        \ametricparof{\asety}{\aFunc}{\aFuncpp} 
        &\ =
        \sum_{\anelemy \in \asety}
        \ametrictildeof{\aFuncOf{\anelemy}}{\aFuncppOf{\anelemy}}
        \\
        &\ 
        \leq 
        \sum_{\anelemy \in \asety}
        \ametrictildeof{\aFuncOf{\anelemy}}{\aFuncpOf{\anelemy}}
        +
        \ametrictildeof{\aFuncpOf{\anelemy}}{\aFuncppOf{\anelemy}}
        \tag{$\triangle$ of $\ametrictilde$}
        \\
        &\
        = 
        \sum_{\anelemy \in \asety}
        \ametrictildeof{\aFuncOf{\anelemy}}{\aFuncpOf{\anelemy}}
        +
        \sum_{\anelemy \in \asety}
        \ametrictildeof{\aFuncpOf{\anelemy}}{\aFuncppOf{\anelemy}}
        \\
        &\ 
        =
        \ametricparof{\asety}{\aFunc}{\aFuncp} 
        +
        \ametricparof{\asety}{\aFuncp}{\aFuncpp} 
    \end{align*}

\end{proof}

\subsubsection*{Proof of \Cref{lem:lift_quasi}}
\begin{proof}[\unskip\nopunct]
Note that two equivalence classes are equal, if a representative $\anelema$ of the first equivalence class and a representative of the second equivalence class $\anelemb$ have distance zero.

Let $\ametrictilde$ be the orimetric that will be lifted,
and $\ametricpar{\inputexamplesp}$ the lifted orimetric with 
$\inputexamplesp \subseteq \inputexamples$.
We first show unambiguity wrt.\ $\groundTruthFuncs$.
For this, let $\groundTruthFunc, \groundTruthFuncp$ be ground truth functions of $\groundTruthFuncs$.
We show that 
$\ametricparof{\inputexamplesp}{\groundTruthFunc}{\groundTruthFuncp} = 0$.
\begin{align*}
\ametricparof{\inputexamplesp}{\groundTruthFunc}{\groundTruthFuncp}
&\ 
= 
\sum_{\aninput \in \inputexamplesp}
\ametrictildeof{\groundTruthFuncOf{\aninput}}{\groundTruthFuncpOf{\aninput}}
=
0
\end{align*}
We explain the last equality.
Since $\aninput$ is an input example and thus the output is given by the specification, 
we have
$
\groundTruthFuncOf{\aninput} = \groundTruthFuncpOf{\aninput}
= \anoutput
$.
Then, by reflexivity of $\ametrictilde$, we get 
$\ametrictildeof{\anoutput}{\anoutput} = 0$.

We show the second part of the lemma.
Let $\ametrictilde$ be the quasimetric that will be lifted,
and $\ametricpar{\inputexamples}$ the lifted orimetric.
Let $\aFunc\not\in \groundTruthFuncs$ be a function that is not a ground truth function.
Further, let $\groundTruthFunc \in \groundTruthFuncs$ be a ground truth function.
We show that 
$\ametricparof{\inputexamples}{\aFunc}{\groundTruthFunc} \neq 0$.
Towards a contradiction, assume that
$\ametricparof{\inputexamples}{\aFunc}{\groundTruthFunc} = 0$.
We have 
\begin{align*}
\ametricparof{\inputexamples}{\aFunc}{\groundTruthFunc}
&\ 
= 
\sum_{\aninput \in \inputexamples}
\ametrictildeof{\aFuncOf{\aninput}}{\groundTruthFuncOf{\aninput}}
= 0
\end{align*}
For this to be true, each addend must be $0$.
Because $\ametrictilde$ is a quasimetric, 
that means that $\aFuncOf{\aninput} = \groundTruthFuncOf{\aninput}$ for every input example $\aninput \in \inputexamples$. 
Thus, $\aFunc$ satisfies the specification.
Therefore, $\aFunc \in \groundTruthFunc$.
This is a contradiction.
\end{proof}

\subsubsection*{Size Hack}
\begin{lemma}\label[lemma]{lem:size_hack}
    Given an orimetric $\functionDistanceMeasure$, then $\functionDistanceMeasurep$ also is an orimetric.
\end{lemma}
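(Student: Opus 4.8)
The plan is to recall the definition of $\functionDistanceMeasurep$ from the preceding paragraph and then verify, one at a time, the three defining properties of an orimetric --- reflexivity, symmetry at zero, and the triangle inequality --- for $\functionDistanceMeasurep$, deriving each from the corresponding property of $\functionDistanceMeasure$. The relevant feature of $\functionDistanceMeasurep$ is that it is obtained from $\functionDistanceMeasure$ by a local adjustment that only ever \emph{decreases} distances, and only for arguments witnessed by a program of size at most $\asizeThreshold$, with the adjusted value chosen strictly below the radius $\radius$ so that such programs are guaranteed to lie inside the open ball $\ballof{\radius}{\groundTruthFunc}$. In particular, the adjustment never turns a positive distance into $0$ nor a zero distance into a positive one.

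Reflexivity is immediate: $\functionDistanceMeasureOf{\anelema}{\anelema}=0$, and the adjustment leaves the value $0$ untouched, since it only pulls down distances above the chosen bound; hence $\functionDistanceMeasurepOf{\anelema}{\anelema}=0$. Symmetry at zero is equally quick once we observe that $\functionDistanceMeasurepOf{\anelema}{\anelemb}=0$ holds precisely when $\functionDistanceMeasureOf{\anelema}{\anelemb}=0$ (no new zeros are created): consequently $\inducedequivof{\functionDistanceMeasurep}$ coincides with $\inducedequivof{\functionDistanceMeasure}$, and the required implication $\functionDistanceMeasurepOf{\anelemb}{\anelema}=0\Rightarrow\functionDistanceMeasurepOf{\anelema}{\anelemb}=0$ is just symmetry at zero for $\functionDistanceMeasure$.

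The triangle inequality $\functionDistanceMeasurepOf{\anelema}{\anelemc}\leq\functionDistanceMeasurepOf{\anelema}{\anelemb}+\functionDistanceMeasurepOf{\anelemb}{\anelemc}$ is the one step that requires genuine work, and I expect it to be the main obstacle. I would argue by a case distinction on which of $\anelema,\anelemb,\anelemc$ lies below the size threshold. When the left pair $(\anelema,\anelemc)$ is not adjusted, $\functionDistanceMeasurepOf{\anelema}{\anelemc}=\functionDistanceMeasureOf{\anelema}{\anelemc}\leq\functionDistanceMeasureOf{\anelema}{\anelemb}+\functionDistanceMeasureOf{\anelemb}{\anelemc}$, and the only worry is that a detour via a ``small'' $\anelemb$ has shortened one of the right-hand summands; I would dispel this by using that a summand is shortened only when the underlying $\functionDistanceMeasure$-distance is large --- larger than the bound it is capped to --- so the capped summand still dominates the required contribution, and in the borderline situation $\functionDistanceMeasureOf{\anelema}{\anelemb}=0$ I would additionally invoke \Cref{Lemma:MainProperty}~(ii) to transport $\functionDistanceMeasureOf{\anelemb}{\anelemc}$ to $\functionDistanceMeasureOf{\anelema}{\anelemc}$, so that the left-hand side is subject to the same cap. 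When $(\anelema,\anelemc)$ \emph{is} adjusted, $\functionDistanceMeasurepOf{\anelema}{\anelemc}$ is at most the chosen bound, which is dominated by whichever right-hand summand is not itself shortened; the remaining sub-cases reduce to the ones already treated. Collecting the cases yields the triangle inequality, and together with the first two properties this shows $\functionDistanceMeasurep$ is an orimetric.
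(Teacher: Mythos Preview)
Your high-level outline (verify reflexivity, symmetry at zero, then a case analysis for the triangle inequality) matches the paper, and your arguments for the first two properties are fine. The gap is that you are working from an incorrect picture of $\functionDistanceMeasurep$. The formal definition, which the paper gives only inside the proof, has \emph{three} cases: if $\functionDistanceMeasureOf{\semanticsOf{\aprog}}{\aFunc}<\radius$ keep it; otherwise set it to $\radius-\epsilon$ when $\aprog$ (or some $\inducedequivof{\functionDistanceMeasure}$-equivalent program) has size at most $\asizeThreshold$; otherwise set it to $\radius$. In particular \emph{every} distance $\geq\radius$ is capped, not only those with a small first argument, and there are two cap values, $\radius-\epsilon$ and $\radius$. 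With your two-case reading (cap only for small programs, single value below $\radius$) the triangle inequality simply fails: take $\anelema$ large, $\anelemb$ small, $\functionDistanceMeasureOf{\anelema}{\anelemc}=100$, $\functionDistanceMeasureOf{\anelema}{\anelemb}=50$, $\functionDistanceMeasureOf{\anelemb}{\anelemc}=50$, cap $40$; then $\functionDistanceMeasurepOf{\anelema}{\anelemc}=100>50+40$. So ``the capped summand still dominates the required contribution'' is not true for the object you have in mind.

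For the actual $\functionDistanceMeasurep$ the triangle inequality does hold, but the argument hinges on the careful choice of $\epsilon$: it is taken to be half of the minimum over all positive values $\functionDistanceMeasureOf{\aFunc}{\aFuncp}$ and all positive values $\radius-\functionDistanceMeasureOf{\aFunc}{\aFuncp}$ (over a finite function space). This guarantees two facts you implicitly need but never state: every uncapped value $<\radius$ is in fact $\leq\radius-2\epsilon<\radius-\epsilon$, so a single capped summand already dominates any uncapped left-hand side; and every nonzero distance is $\geq 2\epsilon$, so in the hardest subcase ($\functionDistanceMeasurepOf{\anelema}{\anelemc}=\radius$, second summand capped to $\radius-\epsilon$) the first summand is either $0$ or at least $\epsilon$. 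Your ``borderline $\functionDistanceMeasureOf{\anelema}{\anelemb}=0$'' idea is exactly the right move for that subcase---the paper uses transitivity of $\inducedequivof{\functionDistanceMeasure}$ to derive a contradiction with the assumption that $\anelema$ has no small equivalent---but it only becomes relevant once the $\epsilon$-gap has reduced the question to $\functionDistanceMeasureOf{\anelema}{\anelemb}\in\{0\}\cup[2\epsilon,\infty)$. Rework the triangle case split around the three-valued cap and make the role of $\epsilon$ explicit; then your sketch becomes the paper's proof.
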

\begin{proof}
The formal definition of $\functionDistanceMeasurep$ is:
\begin{equation*}
    \functionDistanceMeasurepOf{\semanticsOf{\aprog}}{\aFunc} = 
    \begin{cases}
        \functionDistanceMeasureOf{\semanticsOf{\aprog}}{\aFunc} 
        & \text{, }\functionDistanceMeasureOf{\semanticsOf{\aprog}}{\aFunc} < \radius
        \\
        \radius - \epsilon 
        & \text{, }\sizeOf{\aprog} \leq \asizeThreshold \vee \exists \aprogp \in \aprogSet: \sizeOf{\aprogp} \leq \asizeThreshold \wedge \semanticsOf{\aprog} \inducedequivof{\functionDistanceMeasure} \semanticsOf{\aprogp}
        \\
        \radius & \text{, otherwise.}
    \end{cases}
\end{equation*}

As $\epsilon$, we take half of the minimum distance between two functions, and the minimum difference between the distance of two functions and $\radius$.
For the minimum to exist, we restrict the function space to functions expressible by programs of whose size is less than an exorbitant size, say 1 million, plus $\groundTruthFunc$.
    We call the function space $\allFuncs'$.
    Then $\epsilon$ is formally defined as:
    \begin{equation*}
        \epsilon = 
        \frac{
            \min(\setCollector{
                \functionDistanceMeasureOf{\aFunc}{\aFuncp}, 
                \radius - \functionDistanceMeasureOf{\aFunc}{\aFuncp}
            }{\aFunc, \aFuncp \in \allFuncs'} 
            \setminus 
            \reals_{\leq 0}
            ) 
        }{2}    
    \end{equation*}

    We first show reflexivity.
    Because, $\functionDistanceMeasure$ is an orimetric, we have 
    $\functionDistanceMeasureOf{\aFunc}{\aFunc} = 0$.
    Because $0 < \radius$, we have $\functionDistanceMeasurepOf{\aFunc}{\aFunc} = 0$.

    Next, we show symmetry at zero.
    If $\functionDistanceMeasurepOf{\aFunc}{\aFuncp} = 0$,
    then we know that 
    $\functionDistanceMeasureOf{\aFunc}{\aFuncp} = 0$.
    Because $\functionDistanceMeasure$ is an orimetric, we have that
    $\functionDistanceMeasureOf{\aFuncp}{\aFunc} = 0$.
    And therefore
    $\functionDistanceMeasurepOf{\aFuncp}{\aFunc} = 0$
    holds.

    Last, we show the triangle inequality.
    Consider $\functionDistanceMeasurepOf{\semanticsOf{\aprog}}{\aFunc}$.
    We differentiate two cases. 
    The first case is when $\functionDistanceMeasureOf{\semanticsOf{\aprog}}{\aFunc} < \radius$.
    The second case is when $\functionDistanceMeasureOf{\semanticsOf{\aprog}}{\aFunc} \geq \radius$.

    We start with the first case, so $\functionDistanceMeasureOf{\semanticsOf{\aprog}}{\aFunc} < \radius$.
    Then, we have $\functionDistanceMeasureOf{\semanticsOf{\aprog}}{\aFunc} = 
    \functionDistanceMeasurepOf{\semanticsOf{\aprog}}{\aFunc} < \radius$, and thus we have 
    $\functionDistanceMeasurepOf{\semanticsOf{\aprog}}{\aFunc} \leq \radius - \epsilon$.
    Consider now 
    $\functionDistanceMeasureOf{\semanticsOf{\aprog}}{\semanticsOf{\aprogp}} + 
    \functionDistanceMeasureOf{\semanticsOf{\aprogp}}{\aFunc}$. 
    If both addends are $< \radius$, we get the triangle equality directly from the triangle inequality of $\functionDistanceMeasure$.
    If one addend is $\geq \radius$, then the result in $\functionDistanceMeasurep$ is $\geq \radius - \epsilon$ and thus also greater than or equal to 
    $\functionDistanceMeasurepOf{\semanticsOf{\aprog}}{\aFunc}$.

    In the second case we have $\functionDistanceMeasureOf{\semanticsOf{\aprog}}{\aFunc} \geq \radius$. 
    Then, $\functionDistanceMeasurepOf{\semanticsOf{\aprog}}{\aFunc} \geq \radius - \epsilon$.
    We have two subcases.
    The first subcase is that $\sizeOf{\aprog} \leq \asizeThreshold$ or there exists $\aprogp \in \aprogSet$ with $\sizeOf{\aprogp} \leq \asizeThreshold$  and $\semanticsOf{\aprog} \inducedequivof{\functionDistanceMeasure} \semanticsOf{\aprogp}$.
    The second subcase is the negation of the first subcase.

    We start with the first subcase.
    We have that $\functionDistanceMeasurepOf{\semanticsOf{\aprog}}{\aFunc} = \radius - \epsilon$.
    That means we have 
    \begin{equation*}
    \functionDistanceMeasurepOf{\semanticsOf{\aprog}}{\aFunc} 
    \leq 
    \functionDistanceMeasureOf{\semanticsOf{\aprog}}{\aFunc} 
    \leq 
    \functionDistanceMeasureOf{\semanticsOf{\aprog}}{\semanticsOf{\aprogp}}
    + 
    \functionDistanceMeasureOf{\semanticsOf{\aprogp}}{\aFunc}
    \ .
    \end{equation*}
    Again, if both addends are $< \radius$, the triangle inequality follows directly.
    If one addend is $\geq \radius$, then the result in $\functionDistanceMeasurep$ is $\geq \radius - \epsilon$ and thus also greater than or equal to 
    $\functionDistanceMeasurepOf{\semanticsOf{\aprog}}{\aFunc}$.

    In the second subcase we have 
    $\functionDistanceMeasurepOf{\semanticsOf{\aprog}}{\aFunc} = \radius$
    and 
    $\functionDistanceMeasureOf{\semanticsOf{\aprog}}{\aFunc} \geq \radius$
    .
    This means 
    \begin{equation*}
        \radius 
        =
        \functionDistanceMeasurepOf{\semanticsOf{\aprog}}{\aFunc}
        \leq 
        \functionDistanceMeasureOf{\semanticsOf{\aprog}}{\aFunc}
        \leq 
        \functionDistanceMeasureOf{\semanticsOf{\aprog}}{\semanticsOf{\aprogp}}
        +
        \functionDistanceMeasureOf{\semanticsOf{\aprogp}}{\aFunc}.
    \end{equation*}
    Again, if both addends are $< \radius$, the triangle inequality follows directly.
    If not, consider the addends in $\functionDistanceMeasurep$: 
    $
        \functionDistanceMeasurepOf{\semanticsOf{\aprog}}{\semanticsOf{\aprogp}}
        $ and $
        \functionDistanceMeasurepOf{\semanticsOf{\aprogp}}{\aFunc}
        $
    .
    At least one of the addends is $\geq \radius - \epsilon$.
    Since it also is $\leq \radius$, we have the two cases:
    Either it is equal to $\radius - \epsilon$ or it is equal to $\radius$.
    In the latter case, we are done, because 
    $
        \radius 
        =
        \functionDistanceMeasurepOf{\semanticsOf{\aprog}}{\aFunc} 
        $.

    In the former case, 
    towards a contradiction, assume the first addend is equal to $\radius - \epsilon$.
    So, we have 
    $\functionDistanceMeasurepOf{\semanticsOf{\aprog}}{\semanticsOf{\aprogp}} = \radius - \epsilon$.
    For this to be true, 
    $\sizeOf{\aprog} \leq \asizeThreshold$ or there exists $\aprogpp \in \aprogSet$ with $\sizeOf{\aprogpp} \leq \asizeThreshold$  and $\semanticsOf{\aprog} \inducedequivof{\functionDistanceMeasure} \semanticsOf{\aprogpp}$.
    Then we would not be in this case. 
    This is a contradiction.

    Next, assume 
    the second addend is equal to $\radius - \epsilon$:
    $\functionDistanceMeasurepOf{\semanticsOf{\aprogp}}{\aFunc} = \radius - \epsilon$.
    Then, 
    $\sizeOf{\aprogp} \leq \asizeThreshold$ or there exists $\aprogpp \in \aprogSet$ with $\sizeOf{\aprogpp} \leq \asizeThreshold$  and $\semanticsOf{\aprogp} \inducedequivof{\functionDistanceMeasure} \semanticsOf{\aprogpp}$.
    For 
     \begin{equation*}
        \functionDistanceMeasurepOf{\semanticsOf{\aprog}}{\aFunc}
        \leq 
        \functionDistanceMeasurepOf{\semanticsOf{\aprog}}{\semanticsOf{\aprogp}}
        +
        \functionDistanceMeasurepOf{\semanticsOf{\aprogp}}{\aFunc}
    \end{equation*}   
    to be violated, we would need 
    $\functionDistanceMeasurepOf{\semanticsOf{\aprog}}{\semanticsOf{\aprogp}} = 0$ because all other distances are greater than $\epsilon$.
    This, would mean that $\functionDistanceMeasureOf{\semanticsOf{\aprog}}{\semanticsOf{\aprogp}} = 0$.
    Because $\inducedequivof{\functionDistanceMeasure}$ is transitive, 
    we have $\semanticsOf{\aprogpp} \inducedequivof{\functionDistanceMeasure} \semanticsOf{\aprog}$.
    This, however, has the consequence that there exists the same $\aprogpp$ with $\sizeOf{\aprogpp} \leq \asizeThreshold$.
    Therefore, 
    $\functionDistanceMeasurepOf{\semanticsOf{\aprog}}{\aFunc}$ should have been  $\radius - \epsilon$.
    This is a contradiction to 
    $\functionDistanceMeasurepOf{\semanticsOf{\aprog}}{\aFunc} = \radius$.
\end{proof}
 \section{Details on Deduction}\label{sec:deduction}
In the following, we describe how to efficiently check if a successor of a program is determined through deduction.
The explanation will reason about settings with one input-output example $(\aninput,\anoutput)$ only.
This is just to ease the notation and can easily be lifted to settings with multiple input-output examples.

For all deduction techniques, if we enumerated the program $\aprog$, we first compute its output $\anoutput_{\aprog} = \semanticsOfAppliedTo{\aprog}{\aninput}$.
Also, in order to make the deduction efficient, we maintain sets of programs that have been enumerated.
They are usually indexed by the programs' output values for easy lookup.

\subsection{Deduction for Bitvectors}
In the following, we describe how to efficiently check if a successor of a program is determined through deduction.
The explanation will reason about settings with one input-output example $(\aninput,\anoutput)$ only.
This is just to ease the notation and can easily be lifted to settings with multiple input-output examples.
For all deduction techniques, if we enumerated the program $\aprog$, we first compute its output $\anoutput_{\aprog} = \semanticsOfAppliedTo{\aprog}{\aninput}$.
Also, in order to make the deduction efficient, we maintain sets of programs that have been enumerated.
They are usually indexed by the programs' output values for easy lookup.

For bitvectors, we have nine sketches.
Four of them are also used by DryadSynth:
$\andop(\sketchhole, \sketchhole)$, 
$\orop(\sketchhole, \sketchhole)$, 
$\addop(\sketchhole, \sketchhole)$,
and 
$\xorop(\sketchhole, \sketchhole)$.
The remaining five sketches are new:
$\mulop(\sketchhole, \sketchhole)$, 
$\notop(\addop(\sketchhole, \sketchhole))$, 
$\negop(\addop(\sketchhole, \sketchhole))$, 
$\notop(\xorop(\sketchhole, \sketchhole))$,
and 
$\negop(\xorop(\sketchhole, \sketchhole))$.

\paragraph{$\andop(\sketchhole, \sketchhole)$}
For the $\andop(\sketchhole, \sketchhole)$ sketch, we keep a set of programs $\aprogSet_1$ that return a bitvector which is bitwise greater than the output bitvector $\anoutput$.
If we now enumerate a program $\aprog$ that returns the bitvector $\anoutput_{\aprog}$, we first check whether the program is a viable candidate for the $\andop$ operator by checking whether $\anoutput_{\aprog}$ is bitwise greater than $\anoutput$.
If that is the case, we search through $\aprogSet_1$ to find a program $\aprog_1$ for which $\semanticsOfAppliedTo{\andop(\aprog, \aprog_1)}{\aninput} = \anoutput$ holds.
If there is such a program, $\andop(\aprog, \aprog_1)$ is one of the immediate successors of $\aprog$.
\paragraph{$\orop(\sketchhole, \sketchhole)$}
For the $\orop(\sketchhole, \sketchhole)$ sketch, we keep a set of programs $\aprogSet_1$ that return a bitvector which is bitwise less than the output bitvector $\anoutput$.
The deduction procedure is analogous to the one for the $\andop(\sketchhole, \sketchhole)$ sketch.
\paragraph{$\mulop(\sketchhole, \sketchhole)$}
For the $\mulop(\sketchhole, \sketchhole)$ sketch, we keep a set of programs $\aprogSet_1$ that return a bitvector.
If we now enumerate a program $\aprog$ that returns the bitvector $\anoutput_{\aprog}$, we compute candidate inputs for the $\mulop$ operator.
To get viable candidates, we compute all values $\anoutput_1$ 
that solve the equation $\anoutput_{\aprog} \times \anoutput_1 = \anoutput$.
This involves finding the multiplicative inverse of $\anoutput_{\aprog}$ in the integer ring represented by the bitvectors using the extended euclidean algorithm.
For each candidate $\anoutput_1$, we check if $\aprogSet_1$ contains a program $\aprog_1$ whose output is $\anoutput_1$.
If there is such a program, $\mulop(\aprog, \aprog_1)$ is one of the immediate successors of $\aprog$.
\paragraph{$\addop(\sketchhole, \sketchhole)$}
For the $\addop(\sketchhole, \sketchhole)$ sketch, we also keep a set of programs $\aprogSet_1$ that return a bitvector.
If we now enumerate a program $\aprog$ that returns $\anoutput_{\aprog}$, we can directly compute the value $\anoutput_1$ we need such that $\anoutput_{\aprog} + \anoutput_1 = \anoutput$ by calculating $\anoutput_1 = \anoutput - \anoutput_{\aprog}$.
Then, we look up whether there exists a program $\aprog_1$ in $\aprogSet_1$ 
whose output is $\anoutput_1$.
If there is such a program, $\addop(\aprog, \aprog_1)$ is the successor of $\aprog$.

\paragraph{Remaining Sketches}
The deduction for the remaining sketches is analogous to the deduction for the $\addop(\sketchhole, \sketchhole)$ sketch.
For each sketch, we can compute the value $\anoutput_1$ needed to complete the sketch.
Then, we look up if there is a program in $\aprogSet_1$ whose output is $\anoutput_1$. 
If so, we fill the sketch holes with $\aprog$ and $\aprog_1$ and determine the resulting program to be one of the immediate successors.

\subsection{Deduction for Strings}
For Strings, we have three sketches which use one operator each:
$\concatOp(\sketchhole, \sketchhole)$, $\replaceOp(\sketchhole, \sketchhole, \sketchhole)$, and $\substrOp(\sketchhole, \sketchhole, \sketchhole)$.

\paragraph{$\replaceOp(\sketchhole, \sketchhole, \sketchhole)$}
The operator $\replaceOp$ takes as arguments three strings: the first is the string in which the replacement should happen, namely the first occurrence of the second argument, should it exist, will be replaced by the third argument. 
We keep two sets of programs that are enumerated earlier: $\aprogSet_1$ and $\aprogSet_2$.
In $\aprogSet_1$, we keep all programs that return a string.
In $\aprogSet_2$, we keep all programs that return a substring of the expected output $\anoutput$.
If there is a program $\aprog_1 \in \aprogSet_1$ whose output $\anoutput_1 = \semanticsOfAppliedTo{\aprog_1}{\aninput}$ returns a substring of $\anoutput_{\aprog}$
and there also is a program $\aprog_2 \in \aprogSet_2$ for which 
$\semanticsOfAppliedTo{\replaceOpOf{\aprog}{\aprog_1}{\aprog_2}}{\aninput} = \anoutput$ holds,
then the one of the immediate successors in the enumeration order is
$\replaceOpOf{\aprog}{\aprog_1}{\aprog_2}$.
We check the same for the other positions in the sketch, i.e.\ we search programs $\aprog_1, \aprog_2$ with 
$\semanticsOfAppliedTo{\replaceOpOf{\aprog_1}{\aprog}{\aprog_2}}{\aninput} = \anoutput$
or 
$\semanticsOfAppliedTo{\replaceOpOf{\aprog_1}{\aprog_2}{\aprog}}{\aninput} = \anoutput$.

For the sketch $\concatOpOf{\sketchhole}{\sketchhole}$, one necessary precondition for the programs that fill the sketch holes is that they must produce substrings of the target output.
For strings that do not satisfy the precondition, we penalize them with a big constant $\bigconst \in \reals$.
Otherwise, the distance between two strings is the difference of their length.
Formally, we define the quasimetric $\distanceMeasure$ as follows:
\begin{equation*}
\distanceMeasureOf{\astring}{\astringp} = 
\begin{cases}
    \lengthOf{\astringp} - \lengthOf{\astring}
    &, \substrOf{\astring}{\astringp}
    \\
    \bigconst + 
    \absOf{\lengthOf{\astringp} - \lengthOf{\astring}}
    &, \textit{else} \ .
\end{cases}
\end{equation*}

The key insight from the perspective of oriented metrics is that the first program should produce a superstring of the output
We keep three sets.
In the first set $\aprogSet_1 \subseteq \completeLanguageOf{\agrammar} \times \completeLanguageOf{\agrammar}$ we keep all pairs of programs that we have explored, where the first program returns a superstring $\anoutput_s$ of the output and the second program returns the indices of $\anoutput$ in $\anoutput_s$. 
In the second set $\aprogSet_2 \subseteq \completeLanguageOf{\agrammar} \times \nat$ we keep programs that output a superstring $\anoutput_s$ of $\anoutput$ alongside with the starting index of $\anoutput$ in $\anoutput_s$.
In the third set $\aprogSet_3 \subseteq \completeLanguageOf{\agrammar}$ we keep all programs that return an integer.
We have one approach for programs that return strings and one for programs that returns integers:

If $\aprog$ returns a superstring $\anoutput_{\aprog}$ of $\anoutput$, 
then let $k$ be the index of $\anoutput$ in $\anoutput_{\aprog}$.
Next, we look up whether there is a program $\aprog_k \in \aprogSet_3$ that returns $k$.
If so, we look for a program $\aprog_l \in \aprogSet_3$ which specifies the correct length, i.e.\ $\semanticsOfAppliedTo{\substrOpOf{\aprog}{\aprog_k}{\aprog_l}}{\aninput} = \anoutput$. 
If there is such a program, $\substrOpOf{\aprog}{\aprog_k}{\aprog_l}$ is one of the immdediate successors of $\aprog$.

If $\aprog$ returns an integer, we check if there is a program pair $(\aprog_s, \aprog_k) \in \aprogSet_1$ for which the current program returns the correct length,
i.e.\ $\semanticsOfAppliedTo{\substrOpOf{\aprog_s}{\aprog_k}{\aprog}}{\aninput} = \anoutput$ holds.
In that case, $\substrOpOf{\aprog_s}{\aprog_k}{\aprog}$ is one of the successor programs.

Otherwise, we check if the current program returns a suitable starting index for a program we have enumerated before.
If $\aprog$ returns the integer $k$,
We look for a pair $(\aprog_s, k) \in \aprogSet_2$. 
If it exists, we search through $\aprogSet_3$ again to find a program $\aprog_l$ with 
$\semanticsOfAppliedTo{\substrOpOf{\aprog_s}{\aprog}{\aprog_l}}{\aninput} = \anoutput$.
If there is such a program, 
$\substrOpOf{\aprog_s}{\aprog}{\aprog_l}$
is the successor of $\aprog$. \section{Example: Instantiation of ESolver and Observational Equivalence Factorization}
As an example, we instantiate ESolver in our framework.
To recap, ESolver uses size based enumeration with OE factorization.
Size based enumeration discovers programs in increasing size.
Here, the size is measured as the number of nodes in the AST representing the program.

Following the instantiation in \Cref{sec:lifting}, we only need to define a quasimetric.
For bitvectors, one could use the hamming distance as the quasimetric, for example.
We call the resulting metric $\distanceMeasureForExamples{\inputexamples}$.

Now that we have a metric in place, the next step is to define an enumeration order $\enumerationOrder{}$.
We use a size based enumeration order.
As described in \Cref{sec:overview}, ESolver only factorizes the search space with OE. 
Thus, for pruning, we set the radius $\radius$ of the ball to infinity.
$\pruneFunc$ therefore returns the identity: 
$\pruneFuncOf{\aprogSet}{\functionDistanceMeasure}{\groundTruthFunc} = \aprogSet$.
The function $\factorizeFunc$ is defined as above.

Since $\distanceMeasureForExamples{\inputexamples}$ respects congruence, we have that if 
the search space $\completeLanguageOf{\agrammar}$ is bottom-up enumerable, $\factorizeFuncOf{\completeLanguageOf{\agrammar}}{\distanceMeasureForExamples{\inputexamples}}{\enumerationOrder{}}$ is bottom-up enumerable and complete wrt.\ $\completeLanguageOf{\agrammar}$ and $\distanceMeasureForExamples{\inputexamples}$.

Because $\distanceMeasureForExamples{\inputexamples}$ is precise and unambiguous, $\refineFunc$ and $\updateEnumOrder$ will not be reached (\Cref{lem:cegar:termination}).
Thus, we omit defining a $\refineFunc$ or a $\updateEnumOrder$ function for this example.
The resulting instantiation is standard size based enumeration with OE factorization as is done in ESolver. %
\clearpage{}

\end{document}